\newcommand{\eval}{\Downarrow_{\ExtOpn}}
\newcommand{\evalexp}{\Downarrow_{\ExtOpn}}
\newcommand{\evalcmd}{\Downarrow_{\ExtOpn/\kw{cmd}}}
  \title{A metalanguage for cost-aware denotational semantics}
  \author{Yue Niu}
  \email{yuen@cs.cmu.edu}
  \author{Robert Harper}
  \email{rwh@cs.cmu.edu}
  \affiliation{
  \institution{Carnegie Mellon University}
  \streetaddress{5000 Forbes Ave.}
  \city{Pittsburgh}
  \state{PA}
  \postcode{15213}
  \country{USA}
  }
  \keywords{types, semantics, cost analysis}
\begin{document}
  \begin{abstract}
    We present two metalanguages for developing \emph{synthetic cost-aware denotational semantics}
    of programming languages. Extending the recent work of \citeauthor{niu-sterling-grodin-harper:2022} 
    on \textbf{calf}, a dependent type theory for both cost and behavioral verification, 
    we define two metalanguages, $\textbf{calf}^\star$ and $\textbf{calf}^\omega$, for studying cost-aware 
    metatheory. $\textbf{calf}^\star$ is an extension of \textbf{calf} with universes and inductive types, 
    and $\textbf{calf}^\omega$ is a an extension of $\textbf{calf}^\star$ with unbounded iteration.
    We construct denotational models of the simply-typed lambda calculus and 
    Modernized Algol, a language with first-order store and while loops, and 
    show that they satisfy a \emph{cost-aware} generalization of the classic Plotkin-type computational adequacy theorem.  
    Moreover, by developing our proofs in a synthetic language of 
    \emph{phase-separated} constructions of intension and extension, 
    our results easily \emph{restrict} to the corresponding extensional theorems.
    Consequently, our work provides a positive answer to the conjecture raised in 
    \citet{niu-sterling-grodin-harper:2022} and in light of \opcit{}'s work on algorithm analysis, 
    contributes a metalanguage for doing both cost-aware programming and verification and 
    cost-aware metatheory of programming languages. 
  \end{abstract}
  \maketitle 

  \tikzcdset{line width=rule_thickness}

  \section{Introduction}

Denotational semantics is a well-established method for obtaining an 
\emph{equational theory} for program verification. 
Whereas the operational semantics of a programming language gives meaning to programs 
via closed, whole program computation, denotational semantics aims to assign a 
\emph{compositional} theory to \emph{open} programs amenable to equational/algebraic reasoning. 
A well-behaved denotational semantics respects the operational meaning of programs in the sense that 
the denotation of a program is invariant under evaluation. This property is known as \emph{soundness}. 
Conversely, for a denotational model to be useful, it must be conservative enough as an equational theory 
so that ``computations'' in the denotational model can be reflected in the operational semantics. 
This is known as \emph{computational adequacy}.%
\footnote{In the literature, the computational adequacy sometimes refer to the conjunction of 
soundness and adequacy as we have defined here.} 
Denotational semantics satisfying these properties have been studied for a long time, starting with 
Plotkin's work on the investigation of LCF as a programming language (PCF) \citep{PLOTKIN1977223}. 

Although the question of computational adequacy has been traditionally studied in the context of denotational semantics, 
recent work on program cost analysis in type theory has broached the possibility of viewing adequacy in the more general 
context of equational theories.
In particular, \citeauthor{niu-sterling-grodin-harper:2022} proposed a dependent type theory \calf{}
(\textbf{c}ost-\textbf{a}ware \textbf{l}ogical \textbf{f}ramework) that
provides a rich specification language supporting both behavioral and cost verification of functional programs. 
That work formalizes a myriad of case studies of the cost analysis of algorithms 
in the framework and proves the consistency of \calf{} via a model construction. 
As a type theory, \calf{} can be thought of as the semantic domain of a denotational semantics in the sense 
that it furnishes an equational theory for program analysis. 
Moreover, as a cost analysis framework, \calf{} does not stipulate a cost semantics for programs; 
instead, the users of the framework is responsible for specifying the cost model of the algorithms they define. 
This raises a natural question: how does one know if a cost model is reasonable relative to a given programming language? 
In the concluding remarks, the authors expressed the idea that the choice of a cost model with respect to 
an operational semantics may be justified by an \emph{internal} computational adequacy theorem in the style of Plotkin.

In this paper, we substantiate this idea and develop extensions of \calf{} that promote it to a metalanguage for 
\emph{synthetic cost-aware denotational semantics}. To illustrate our approach, we first define \calfpp{}, 
an extension of \calf{} with universes and inductive types, which we use to define a computationally adequate 
denotational semantics for the simply-typed lambda calculus (\STLC{}). To ramp up to a richer programming language, 
we define \calfiter{}, an extension of \calfpp{} with \emph{unbounded iteration}, in which we 
define a computationally adequate semantics for Modernized Algol (\MA{}), a dialect of Algol \citep{harper:2012:pfpl}.  
\paragraph{Cost-aware computational adequacy}
In both of the case studies we prove a generalization of the classic, \emph{extensional} Plotkin adequacy that we refer to as
\emph{cost-aware computational adequacy}. Roughly, whereas the classic adequacy theorem speaks about the
extensional content in both the operational and denotational semantics, cost-aware adequacy relates the 
operational cost with the denotational cost in addition to the extensional behavior of programs. 
An important contribution of our work is the fact that ordinary adequacy follows immediately from the 
cost-aware adequacy theorem, which is an instance of a more general principle of \calf{} 
as a \emph{synthetic language} for mediating the interaction of the cost (intension) and behavior (extension) of programs, 
a point that we shall expand on in \cref{sec:interaction,sec:modalities}. 

\paragraph{Synthetic denotational semantics}
The denotational models we define are also synthetic in a more traditional sense: type structure of the object language is 
implemented as simple compositions of the corresponding type structures in the metalanguage that do not involve complex 
analytic constructions typical of classic domain theory. 
This analytic-synthetic dichotomy is perhaps most well-studied in the context of classical (Euclidean) geometry. 
Euclid's \emph{Elements} makes use of the prototypical example of a \emph{synthetic} theory: the mathematical objects 
involved in the study of geometry such as points and lines are postulated to exist and satisfy certain axioms with no further ado, 
and the subject is developed with reference to only these assumptions. 
On the other hand, analytic geometry in the sense of Descartes \emph{constructs} geometrical objects from a more primitive notion
of space (\ie{} cartesian coordinates), from which the axioms of Euclid may be verified to hold. 

The benefit of synthetic theories are both theoretical and practical. The axioms of a synthetic theory are 
useful \emph{abstractions} that reveal the fundemantal structures and seal away irrelevant details of the mathematical objects at hand. 
This has a tangible impact on users of the theory; although a programming languages researcher may not care about 
how a fixed-point operator is implemented, they will certainly need to use the universal property of the fixed-point 
to prove theorems about programs. In the context using \calf{} as a metalanguage for cost-aware denotational semantics,
the synthetic nature of the theory is reflected in both the interpretation of the type structures and the treatment of 
the interaction of intension/extension. 

\subsection{\calf{}: a cost-aware logical framework}\label{sec:calf-intro}

In this section, we recall the key components of \calf{} as a type theory and framework for cost analysis; 
we defer to \citet{niu-sterling-grodin-harper:2022} for more details. 
We present a fragment of the signature of \calf{} in \cref{fig:calf}. 

\begin{figure}
   \begin{align*}
    \mathsf{step} &: \impl{\isof{X}{\tpc}} \mathbb{C} \to \tmc{X} \to \tmc{X}\\
    \mathsf{step}_{0} &: \impl{X,e} \mstep{0}{e} = e\\
    \mathsf{step}_{+} &: \impl{X,e,c_1,c_2}\\
    &\mstep{c_1}{\mstep{c_2}{e}} = \mstep{c_1 + c_2}{e}\\\\
    \tpv &: \jdg\\
    \mathsf{tm}^+ &: \tpv \to \jdg\\
    \mathsf{U} &: \tpc \to \tpv\\
    \mathsf{F} &: \tpv \to \tpc\\
    \mathsf{tm}^{\ominus}(X) &\coloneqq \tmv{\UU{X}}\\
    \mathsf{ret} &: (\isof{A}{\tpv}, \isof{a}{\tmv{A}}) \to \tmc{\F{A}}\\
    \mathsf{bind} &: \impl{\isof{A}{\tpv}, \isof{X}{\tpc}} \tmc{\F{A}} \to\\
    &(\tmv{A} \to \tmc{X}) \to \tmc{X}\\\\
    \ExtOpn &: \jdg\\
    \ExtOpn/{\mathsf{uni}} &: \impl{\isof{u,v}{\ExtOpn}} u = v\\\\
    \Op\mathcal{J} &\coloneqq \ExtOpn \to \mathcal{J}\\
    \mathsf{step}/{\ExtOpn} &: \impl{X, e, c} \Op(\mstep{c}{e} = e)\\
    \Op^+ &: \tpv \to \tpv\\
    \_ &: \impl{A} \tmv{\Op^+{A}} \cong \Op(\tmv{A})\\\\
     \Pi &: (\isof{A}{\tpv}, \isof{X}{\tmv{A} \to \tpc}) \to \tpc\\
    (\mathsf{ap}, \mathsf{lam}) &: \impl{A,X} \tmc{\Pi(A; X)} \cong (\isof{a}{\tmv{A}}) \to \tmc{X(a)} \\\\
    \mathsf{lam}_{\mathsf{step}} &: \impl{A,X,f,c} \lam{\mstep{c}{f}} = \mstep{c}{\lam{f}}\\
    \mathsf{bind}_{\mathsf{step}} &: \impl{A,X,e,f,c} \bind{\mstep{c}{e}}{f} = \mstep{c}{\bind{e}{f}}
    \end{align*}
  \caption{A fragment of the signature of \calf{}.}
  \label{fig:calf}
\end{figure}

\subsubsection{Dependent call-by-push-value}
\calf{} is defined as an extension of the dependent call-by-push-value calculus of \citet{pedrot-tabareau:2020}. 
Recall that the theory of call-by-push-value (CBPV) can be extracted from the Eilenberg-Moore category arising from a monad that 
encodes the computational effect. More concretely, there are two classes of types in CBPV: 
the \emph{value}/\emph{positive} types classifying values, and the \emph{computation}/\emph{negative} types classifying 
computations. Semantically, value types correspond to plain sets while computation types correspond to 
\emph{algebras} for the given monad. The type constructors $\kw{U}, \kw{F}$ bridge this stratification of values and computations and 
corresponds to a free-forgetful adjunction in the semantics. 
A computation of the type $\F{A}$ is called a \emph{free computation}, and 
$\kw{ret}$ and $\kw{bind}$ are the introduction and elimination forms of the free computations. 

\subsubsection{Cost as a computational effect}\label{sec:cost-effect}

As a theory, \calf{} is paramterized by an (ordered) monoid $(\mathbb{C}, +, \le, 0)$. 
The cost structure of programs is generated from a single computational effect 
$\kw{step} : \impl{X} \mathbb{C} \to \tmc{X} \to \tmc{X}$, which one may think operationally as 
incurring the given cost onto a computation. 

As a dependent CBPV calculus, \calf{} supports a simple equational theory for reasoning about the cost of computations. 
For instance, \citet{niu-sterling-grodin-harper:2022} defines an internal predicate 
$\kw{hasCost}_A(e, c) \coloneqq \Sigma a \mathbin{:} A.\; e = \mstep{c}{\ret{a}}$ that defines when a computation has a given cost.

\subsubsection{The interaction intension and extension}\label{sec:interaction}

A key innovation of \calf{} as a cost analysis framework is a solution to the problem of \emph{exotic programs}. 
Traditional accounts of cost structure in type theory employs the cost monad/writer monad $\mathbb{C} \times -$, 
so that a cost-aware/effectful program of type $A$ is rendered as a term of type $\mathbb{C} \times A$. 
One thinks of an effectful program in this setting as a program instrumented with a counter that returns the incurred cost. 
However, this encoding is transparent enough so that the counter is allowed to interfere with the \emph{behavior} of the program; 
such programs are called \emph{exotic} by \citet{niu-sterling-grodin-harper:2022} because one cannot extract from it 
an ordinary, cost-unaware program. 

Because the free computations and the cost effect \kw{step} are \emph{abstract}, there is no way to define such exotic programs,  
which is an internal theorem one may specify and prove in \calf{}. Semantically, the free computations may be 
implemented using the writer monad on an appropriate cost monoid, 
but it is important that this fact is not exposed in the theory. 
In order to work with cost effects in the abstract, \calf{} introduced a pair of \emph{modalities} for 
the interaction of intension and extension. 

\subsubsection{Modalities for intension and extension}\label{sec:modalities}

The first problem one encounters working in a cost-sensitive/intensional setting where the cost effect is abstract is 
function extensionality. For example, consider the merge sort and insertion sort algorithms. 
Under the usual cost model, these algorithms are most definitely distinct as far as cost is concerned. 
However, because they are both sorting algorithms, they are equal in extension/behavior, and by functional extensionality, 
they are equal! In \calf{} this contradiction may be resolved by the following observation: equality of extension/behavior 
may be analyzed in a special phase called the \emph{extensional phase} in which the cost effect is trivial. 
Technically, the extensional phase is generated by a distinguished proposition $\ExtOpn$ along with 
the axiom $\kw{step}/\ExtOpn$ (see \cref{fig:calf}); whenever we are in a context in which $\ExtOpn$ is derivable, 
$\kw{step}/\ExtOpn$ stipulates that $\kw{step}$ is trivial, and therefore we require ordinary extensional reasoning. 
Of course there are no closed terms of $\ExtOpn$, but no other structures are assumed aside from the fact it is a proposition. 

The extensional phase generates a pair of modalities for intension and extension. 
The extensional modality is defined as $\Op(A) \coloneqq \ExtOpn \to A$, which simply 
internalizes the derivability of $\ExtOpn$. Given a type $A$, one can think of $\Op A$ as 
the extensional part of $A$; 
in terms of the cost monad the unit of the extensional modality is the projection map $\mathbb{C} \times A \to A$. 
Complementary to the extensional modality is the intensional modality, which is defined as a pushout of the projections
of $A \times \ExtOpn$. It is a bit more difficult to visualize the meaning of the intensional modality, but one can imagine 
$\Cl A$ as identical to $A$ except that it is trivial inside the extensional phase, \ie{} $\Op \Cl A \cong 1$. 
A useful way to internalize this fact is the phrase ``the extension part of the intensional part is trivial''. 

In \calf{} one can use these modalities to manage the interaction of the intension and extension. 
For instance, although it is not the case that merge sort and insertion sort are equal in the 
\emph{empty} context, one can derive their equality in the extensional phase, \ie{} one has 
$\Op(\kw{mergeSort} = \kw{insSort})$. On the other hand, one may use the intensional modality to 
\emph{seal away} cost structures, which is useful in applications such as program optimization and noninterference. 

\paragraph{The phase distinction of intension and extension}

The interaction of the intension and extension in \calf{} is an instance of a more general phenomenon of 
\emph{phase distinctions} in the sense of the theory of ML modules; as explained in 
\citet{sterling-harper:2021,niu-sterling-grodin-harper:2022}, the (non)interaction of intensional structure 
with the extensional behavior of a cost-aware function is formally identical to the (non)interaction of dynamic components 
with static components in a module functor. 
Consequently, one can think of \calf{} as a synthetic language for \emph{phase distinct programming} of intension and extension, 
and a \calf{} program is said to be \textbf{phase-separated} if it exploits the interaction of the intensional and extensional modalities. 

\subsection{Cost-aware computational adequacy} \label{sec:cost-aware-adequacy}

\citeauthor{niu-sterling-grodin-harper:2022} deployed these ideas on several case studies, including Euclid's algorithm for 
the greatest common divisor, amortized analysis batched queues, and sequential and parallel sorting algorithms. 
An important feature of these analyses is that they all employed their own cost models, which follows the prevailing convention 
of algorithms research community. Although all the cost models of \opcit{} are intuitively reasonable, the authors did not provide a formal theory  
for \emph{why} certain cost models \emph{are} reasonable; however, it was conjectured that this may be achieved via a cost-aware version of Plotkin's 
adequacy theorem. 

We provide a positive answer to this conjecture.
For the following, suppose that we have defined inside \calf{} a programming language $\textbf{P}$ 
along with an evaluation relation $\Downarrow : \textbf{P} \to \Nat \to \textbf{P} \to \tpv$.  
A denotational semantics $\sem{-}$ of $\textbf{P}$ satisfies \textbf{cost-aware computational adequacy} when the following holds:
\begin{quote}
  For all closed programs of base type $\vdash_{\textbf{P}} e : \code{bool}$, if 
  $\sem{e} = \mstep{c}{\ret{b}}$ for some $b : \kw{bool}$ and $c : \Nat$, then $e \Downarrow_{\ExtOpn}^{\eta_{\Cl} c} \overline{b}$. 
\end{quote}
In the above, $\code{bool}$ is the boolean type in \textbf{P}, and $\overline{a}$ sends a \calf{} boolean to its numeral in $\code{bool}$. 
We write $\eta_{\Cl} : A \to \Cl A$ for the unit of the intensional modality, and 
the relation $\Downarrow_{\ExtOpn}$ is a \emph{phase-separated} version of the evaluation relation 
whose meaning we will explain shortly. 
Roughly, cost-aware computational adequacy states that if the denotation of a boolean program is equal to a value incurring some cost, then 
operationally the program must also evaluate to the same value with the same cost. 

\paragraph{Phase-separated evaluation}
In order to explain phase-separated evaluation, let us consider a statement of cost-aware adequacy using 
the usual operational cost semantics $\Downarrow$ and observe what goes wrong. Suppose we have a proof of the extensional phase 
$u : \ExtOpn$, and consider a closed program $e : \code{bool}$ such that 
$e = \mstep{c}{\ret{\Set{\kw{tt}, \kw{ff}}}}$. We have to show that $e \Downarrow^c \Set{\code{tt}, \code{ff}}$. 
But because we are in the extensional phase, we also have $\mstep{c}{\ret{\Set{\kw{tt}, \kw{ff}}}} = \mstep{0}{\ret{\Set{\kw{tt}, \kw{ff}}}}$. 
Therefore, we also have to show that $e \Downarrow^0 \Set{\code{tt}, \code{ff}}$! 
Now, if $c \ne 0$, we have a contradiction if adequacy holds, because the evaluation relation is deterministic:
if $e \Downarrow^c v$ holds, then it holds for unique $c$ and $v$.  

So we would like the relation $\Downarrow$ to \emph{restrict} to a cost-unaware evaluation relation in the extensional phase. 
This is the purpose of the phase-separated evaluation relation: we define a relation 
$\eval : \textbf{P} \to \Cl\Nat \to \textbf{P} \to \tpv$ whose cost component is \emph{sealed} by the intensional modality. 
We can define such a relation $\eval$ and prove that it becomes equivalent to the cost-unaware evaluation relation $e \Downarrow v$ 
in the extensional phase. Consequently, the problem above is resolved because the contradictory evaluation costs are sealed away 
by the intensional modality and invisible in the extensional phase. 

\subsubsection{Synthetic cost-aware denotational semantics}\label{sec:synth-cost-den-sem}

The contribution of our work is the development of the proceeding idea with two concrete programming languages: the 
simply-typed lambda calculus (\STLC{}) and Modernized Algol (\MA{}). 
First, we axiomatize an extension of \calf{} with universes and inductive types dubbed \calfpp{}.
We define the syntax and operational semantics of \STLC{} in \calfpp{}, 
construct a cost-aware denotational semantics of \STLC{}, and 
prove the model to be computationally adequate in the sense describe above. 
Next, we axiomatize an extension of \calfpp{} 
with unbounded iteration dubbed \calfiter{} and carry out a similar construction for 
\MA{}, a language with first-order stores and while loops. 

In both case studies we will rely on the respective metalanguages to define 
conceptually simple and \emph{synthetic} models of the object programming languages. 
Here we emphasize that our models are synthetic in two orthogonal senses. 
First, as we discussed in \cref{sec:modalities}, \calf{} is a theory for 
phase-separated constructions. In more geometric language, one can think of 
\calf{} types as families of cost/intensional structures indexed by behavioral/extensional specifications. 
The benefit of working in a language for such indexed constructions is that 
one may always \emph{project} the index of the family to obtain the ordinary, extensional content of 
an object. This is exemplified in our account of cost-aware computational adequacy: 
the classic extensional Plotkin-type adequacy theorem \textbf{follows immediately as a corollary} 
of cost-aware computational adequacy, 
a property that is \emph{not} enjoyed by prior work on sythetic denotational semantics, which we discuss more 
in \cref{subsec:related}.  

Our work is also synthetic in a more traditional sense: types and computations of the object language is 
defined via simple constructions using the corresponding structures in the metalanguage. 
For instance, while loops of \MA{} may be interpreted straightforwardly as an 
\emph{iteration} primitive of \calfiter{} satisfying the expected unfolding law
and compactness property. By isolating the essential properties necessary to develop the computational adequacy proof, 
we refine and provide a way to interpret classic accounts of adequacy in multiple metatheories. 

\subsection{Models of \calfpp{} and \calfiter{}}

To show that a synthetic theory is sensible, one must exhibit an interpretation or model of 
the theory in terms of previously understood concepts. 
Following the work of \citet{niu-sterling-grodin-harper:2022}, we prove the consistency of 
\calfpp{} and \calfiter{} by means of a model construction. 
Similar to authors of \opcit{}, we define \calfpp{} and \calfiter{} as signatures of the \emph{logical framework} of 
locally cartesian closed categories (lccc's). One can think of the language of this logical framework as an extensional dependent type 
theory with a universe of judgments closed under dependent products, dependent sums, and extensional equality. 
A model of a theory (\eg{} \calf{}) associated to a signature is given by an implementation of the constants in the 
signature in any other lccc. 

The authors of \opcit{} defines a model of \calf{} called the \emph{counting model} in an arbitrary presheaf topos equipped with a proposition representing 
the semantic extensional phase. The counting model is itself an extension of the Eilgenberg-Moore model of CBPV associated to 
the cost monad. We first extend the counting model of \calf{} with an interpretation of inductive types and universes 
based on the \emph{weaning models} of dependent CBPV \citep{pedrot-tabareau:2020} to obtain a model of \calfpp{}. 
We then modify this model to account for partiality, resulting in a model for \calfiter{}. 

\subsection{Related work}
\label{subsec:related}

\subsubsection{Cost-aware denotational semantics}

Notions of cost-sensitive version of classic extensional denotational semantics has 
been studied in the context of formalization of recurrence extraction \citep{danner-licata-ramyaa:2015,kavvos-morehouse-licata-danner:2019}.
The idea of this series of works is to develop a formal framework for the 
well-known and used method of recurrence relations and extend it to work with higher-order functional programs. 
Here we give a brief summary of the approach of \citet{kavvos-morehouse-licata-danner:2019}, which 
is divided into several stages across two different languages: the programming language (CBPV) and
the syntactic recurrence language. A \emph{syntactic recurrence} is extracted from a given program, which is 
then turn into a semantic recurrence suitable for mathematical manipulation in using a denotational semantics based on
\emph{sized domains}. \opcit{} prove a bounding theorem about the extraction procedure, which roughly states that 
the extracted recurrence program produces an upperbound on the evaluation of the source program; 
\opcit{} then prove a traditional adequacy theorem for the denotational semantics with respect to the 
syntactic recurrence language, 
which in conjunction with the bounding theorem produces a sound mathematical domain for doing 
algorithm analysis \`a la recurrence relations.

The work of \citet{kavvos-morehouse-licata-danner:2019} is different from ours in several aspects. 
First, \opcit{} define a denotational semantics for a variant of PCF, while we define 
the denotational semantics of \MA{}, an imperative language with first-order store and while loops 
(and consequently no mechanism for defining arbitrary fixed-points). 
Second, the adequacy theorem of \opcit{} only speaks about cost indirectly 
through the recurrence extraction, whereas we prove a direct cost-aware adequacy theorem that 
allows one to directly reason about cost of the source program. 
Lastly, the most important difference stems from the fact that we work in a formalized metatheory 
(\calf{}) that allows for synthetic constructions as discussed in \cref{sec:modalities,sec:synth-cost-den-sem}. 
Whereas \citet{kavvos-morehouse-licata-danner:2019} work in a classical set-theoretic metatheory and use 
classic domain-theoretic constructions, we promote a more abstract approach based on an axiomatization of 
the necessary domain structures and the interaction of intension and extension. 
Moreover, because \calf{} (and its extensions) is a dependent type theory, 
one may also use it as programming language, which \citet{niu-sterling-grodin-harper:2022}
has shown to be a fruitful endeavor in the context of both the cost analysis and behavioral verification of programs. 

\subsubsection{Denotational semantics of Algol}

Denotational semantics of procedural languages with first-order store 
traces back to at least the early seminal works of \citet{scott:1970:outline,scott-strachey:1971}. 
Algol and Algol-like languages in particular were widely-studied in terms of both denotational and operational semantics, 
and we do not recall the specifics of the language here; for more details we defer to the 
definitive sources on the subject \citep{ohearn:1997,ohearn:1997-2,reynolds:1981}. 
We present a denotational model for Modernized Algol (\MA{}), a version Algol presented in \citet{harper:2012:pfpl}. 
\MA{} is an imperative programming language with first-order store and unbounded iteration, a call-by-value
operational semantics, and a categorical separation of expressions and commands. 
The denotational semantics we define for \MA{} is based on the a Kripke-world interpretation of store and in some sense 
not substantially different from the standard models of Algol-like languages. 

The main improvement of our work is the cost-aware aspect of the denotational semantics and an 
axiomatization of the properties in \calfiter{} that are necessary to prove computational adequacy. 
Moreover, as mentioned in \cref{sec:cost-aware-adequacy}, our results easily restricts to the classic adequacy theorem 
for denotational models of \MA{} extensionally. 

\subsubsection{Synthetic domain theory}\label{sec:SDT}

Ever since the pioneering work of \citeauthor{scott:1982} on the domain-theoretic semantics for 
programming languages, there has been much interest \citep{hyland:1991,fiore-rosolini:1997,fiore-plotkin:1996,reus-streicher:1999} 
in finding set-theoretic universes (in other words, topoi) that embed concrete categories of domains, 
which would furnish a rich intuitionistic/type-theoretic 
framework for defining reasoning about domain-theoretic constructions. 

Synthetic domain theory (SDT) is an elegant and powerful approach for modeling programming languages, 
but does not immediately provide a synthetic language for talking about cost-aware computation. 
From our perspective, a good way to situate our work with respect to SDT is to view categories 
(topoi to be precise) with an SDT theory
as \emph{models} for the kind of metalanguages we promote in this paper. In fact, 
we hope that by constructing models of \calf{} in categories with SDT structure, 
we can extend our results to Plotkin's PCF, thereby truly generalizing Plotkin's original adequacy result 
to a cost-sensitive setting; we shall come back to this point in \cref{sec:conclusion}. 

\subsubsection{Denotational semantics in guarded type theory}\label{sec:den-sem-guarded}

More recently, \citet{mogelberg-paviotti:2016,paviotti-mogelberg-birkedal:2015} promoted the use of 
\emph{guarded dependent type theory} (gDTT) for doing synthetic denotational semantics. 
Similar to our approach of using a type-theoretic metalanguage,
\citet{mogelberg-paviotti:2016} defines a denotational semantics for FPC in gDTT and prove it to be 
computationally adequate in the traditional sense. 
Interestingly, \opcit{} defines an intensional denotational model of FPC: 
because recursive types of FPC are defined using guarded recursive types in gDTT, 
the interpretation of terms of recursive types naturally contains ``steps'' 
engenered by the use of guarded recursion. 
As a result, \opcit{} works with a slightly nonstandard operational cost semantics 
that is defined to compute in ``lock-step'' with the unfolding of guarded recursive types in the denotational semantics. 
\opcit{} proves an intensional adequacy theorem that relates 
the steps taken by the operational semantics and the denotational semantics. 

There is a subtle difference between \opcit{} and our notion of intensional adequacy. Because 
the operational semantics of \opcit{} \emph{only} tracks the unfolding of recursive types, it does not
correspond to the natural cost semantics one obtains from the reflexive-transitive closure of the one step 
transition relation. We do not perceive this to be a fundamental limitation of guarded type theory, since 
one may insert artificial delays in both the operational and denotational semantics to obtain an adequacy theorem about 
an ordinary cost semantics. However, because guarded type theory does not have the general facilities of \calf{}
for reasoning about cost-aware programs (see \cref{sec:calf-intro}), it is not clear if this kind of 
result would be useful in that setting.

A more significant difference also stems from the fact that gDTT is not equipped with a 
synthetic language for phase-separated constructions. To obtain the ordinary extensional adequacy theorem,
\citet{mogelberg-paviotti:2016} employ an additional logical relation over the interpretation of FPC types, 
a construction that involves defining a guarded version of the coninductive delay monad and an analogue of the 
weak bisimilarity relation. 
In contrast, the extensional adequacy theorem in our setting follows \emph{immediately} from the cost-aware adequacy theorem, 
a consequence of working in a framework suitable for cost-aware metatheory. 
Note that the work did not disappear --- by isolating the theory of the interaction of intension and 
extension and verifying the resulting axioms via a model construction once and for all, we package up the work 
into a mechanism that may be applied more generally than the concrete analytic construction used by \opcit{}

\subsubsection{Compiler correctness}\label{sec:compiler}

Lastly, we outline some connections of our work to the area of compiler correctness 
\citet{patterson19:ccc, perconti14:fca, ahmed15:snapl, ahmed19:refcc, benton2010realizability}. 
In the early days of the mathematization of the study of programming languages,
the primary purpose of denotational semantics is to explain the meaning of programs 
in terms of previously established and undertood mathematical structures. 
However, as the field and synthetic methods developed,  
denotational semantics took on more of a logical character: 
models look more like \emph{translations} between different \emph{languages}, 
a view point that is expressed in \citet{Jung1996DomainsAD}. 
Consequently, one may view denotational semantics as a sort of ``compiler'' from 
the object language into the semantic domain and computational adequacy as a sort of 
compiler correctness argument. Traditionally, compiler verification is concerned with the 
\emph{functional} or \emph{extensional} correctness of the compilation process. 
We are naturally led to ask whether working in a rich, \emph{cost-aware} metalanguage for denotational semantics 
could prove useful in studying the \emph{intensional} aspects of compilation. 
We have broached the idea in this paper by way of proving a cost-aware adequacy theorem, and there are
ample opportunities to apply the ideas we developed to both new and old problems in compiler verification.

  \section{\calfpp{}: extending \calf{} with universes and inductive types}

  In this section, we present an extension of \calf{} with universes and general inductive types. 
  \begin{figure}
    \begin{align*}
      \mathsf{Univ}^+ &: \tpv\\
      \mathsf{El}^+ &: \tmv{\vuniv} \to \tpv\\
      \mathsf{Univ}^{\ominus} &: \tpc\\
      \mathsf{El}^\ominus &: \tmv{\cuniv} \to \tpc\\\\ 
      W &: (A : \tpv, B : \tmv{A} \to \tpv) \to \tpv\\ 
      W/\kw{intro} &: \impl{A, B} (a : \tmv{A}) \to (\tmv{B(a)} \to \tmv{W(A, B)}) \to \tmv{W(A, B)}\\ 
      W/\kw{rec} &: \impl{A, B, (C : \tmv{W(A, B)} \to \tpv)} \\
        &((a : \tmv{A}) \to (f : \tmv{B(a)} \to \tmv{W(A,B)}) \to\\
        &\quad((b : \tmv{B(a)}) \to \tmv{C(f b)}) \to \tmv{C(W/\kw{intro}(a, f))}) \to \\
        &(w : \tmv{W(A,B)}) \to \tmv{C(w)}\\ 
      W/\kw{comp} &: \impl{A, B, C, h, a, f} W/\kw{rec}(W/\kw{intro}(a, f), h) = h(a, f, (\lambda b.\; W/\kw{rec}(h, f(b))))
    \end{align*}
  \end{figure}
  \subsection{Universes}
  Following \citet{pedrot-tabareau:2020}, we axiomatize a \emph{pair} of universes $\kw{Univ}^+, \kw{Univ}^\ominus$ 
  classifying value types and computation types respectively. 
  We do not explicitly write down the type codes and their decodings (which are completely standard); 
  as an example, the following signature axiomatizes closure under dependent products: 
  \begin{align*}
    \widehat{\Pi} &: (A : \kw{Univ}^+) \to (\kw{El}^+(A) \to \kw{Univ}^\ominus) \to \kw{Univ}^\ominus\\ 
    \widehat{\Pi}/\kw{decode} &: \impl{A, X} \kw{El}^\ominus(\widehat{\Pi}(A, X)) = \Pi(\kw{El}^+(A), \lambda a.\; \kw{El}^\ominus(X(a)))
  \end{align*}

  \paragraph{Convention}
  In this paper we define type families (\ie{} functions whose codomain is $\tpv$) in a style akin to 
  large elimination that can be unfolded to defining 
  a family of type \emph{codes} (\ie{} functions whose codomain is $\kw{Univ}^+$)
  and decoding using $\kw{El}^+$. 
  
  \subsection{Inductive types}
  Because \calf{} is an extensional type theory, general inductive types may be encoded by $W$-types. 
  However, in practice we will use a more natural presentation like the following: 
  \begin{align}
    \textbf{Inductive}&\; \Nat : \tpv\; \textbf{where} \label{def:nat}\\
    \kw{zero} &: \Nat \nonumber\\
    \kw{suc} &: \Nat \to \Nat \nonumber
  \end{align}
  The definition above may be elaborated into the following $W$-type:
  \begin{align*}
    \Nat &= \kw{El}^+(\widehat{\Nat})\\
    \widehat{\Nat} &: \kw{Univ}^+\\ 
    \widehat{\Nat} &= \widehat{W}(\widehat{2}, \widehat{B})\\ 
    \widehat{B} &: \tmv{2} \to \kw{Univ}^+\\ 
    B(b) &= \kw{if}(b, \widehat{0}, \widehat{1})
  \end{align*}
  In general a declaration like \cref{def:nat} should be thought of defining the \emph{code} of an inductive type. 
  Moreover, because inductive \emph{families} can be defined using \emph{indexed containers} \citep{altenkirch-ghani-hancock-mcbride-morris:2015}, 
  which in turn can be defined using $W$-types, we will also use a similar notation for defining inductive families; 
  the precise schema of inductive families and elaboration procedure is beyond the scope of our work, and we defer to the relevant literature 
  for details. 

  \subsection{Uniqueness of cost bounds}

  The theory of cost effects introduced in \citet{niu-sterling-grodin-harper:2022} is sufficient to 
  define and \emph{compose} cost bounds of programs (see \cref{sec:cost-effect}). 
  However, in order to prove adequacy, we have to 
  be able to go the other way: it needs to be the case that a given cost bound may be shown to be \emph{unique}. 
  We axiomatize uniqueness as follows: 
  \begin{align*}
    \kw{step}/\kw{inj} &: \impl{A, (a , a' : A) (c, c' : \mathbb{C})} 
      \mstep{c}{\ret{a}} = \mstep{c'}{\ret{a'}} \to a = a' \times \Cl (c = c')
  \end{align*}
  Note that because the premise of $\kw{step}/\kw{inj}$ could have been derived using a proof of the extensional phase, 
  we must seal the equation $c = c'$ by the intensional modality. 
  We will show that $\kw{step}/\kw{inj}$ holds in an extension of the counting model of \calf{} in \cref{sec:model-calfpp}.

  \section{Warm-up: STLC} \label{sec:STLC}

In this section, we define and study a cost-aware denotational semantics for the STLC. 
In the following, we suppress some notation from meta-level terms, \ie{} we write $e : A$ for $e : \tmv{A}$. 

\subsection{Representing object languages in \calfpp{}}

The exact mechanism by which object-level syntax is defined is immaterial for our purposes; 
we may choose from a variety of first-order encodings definable using inductive types/families. 
As an example, we will present an intrinsically-typed nameless representation for STLC based on \citet{benton-hur-kennedy-mcbride:2012}. 

\paragraph{Notation}

In this paper we write \eg{} \code{bool} for object-level syntactic phrases. 

\subsection{Syntax of the STLC}\label{sec:stlc-syn}

We consider a version of STLC with a base type of observations $\code{bool}$ with two points $\code{tt}, \code{ff} : \code{bool}$: 
\begin{align*}
  \textbf{Inductive}\; &\kw{Ty} : \tpv\; \textbf{where}\\ 
    \code{bool} &: \kw{Ty}\\
    \Rightarrow &: \kw{Ty} \to \kw{Ty}
\end{align*} 
Because we work with an intrinsic encoding, the type of terms is indexed by an object-level context 
$\kw{Con} \coloneqq \kw{list}(\kw{Ty})$ and type: 
\begin{align*} 
  \textbf{Inductive}\; &\kw{Tm} : \kw{Con} \to \kw{Ty} \to \tpv\; \textbf{where}\\ 
    \code{var} &: \impl{\Gamma, A} \kw{Var}(\Gamma, A) \to \kw{Tm}(\Gamma, A) \\ 
    \code{lam} &: \impl{\Gamma, A_1, A_2} \kw{Tm}(A_1::\Gamma, A_2) \to \kw{Tm}(\Gamma, A_1 \Rightarrow A_2)\\ 
    \code{ap} &: \impl{\Gamma, A_1, A_2} \kw{Tm}(\Gamma, A_1 \Rightarrow A_2) \to \kw{Tm}(\Gamma, A_1) \to \kw{Tm}(\Gamma, A_2)\\ 
    \code{tt} &: \impl{\Gamma} \kw{Tm}(\Gamma, \code{bool})\\
    \code{ff} &: \impl{\Gamma} \kw{Tm}(\Gamma, \code{bool})
\end{align*} 
In the above, elements of the family $\kw{Var}$ represents proofs for variable indexing: 
\begin{align*} 
  \textbf{Inductive}\; &\kw{Var} : \kw{Con} \to \kw{Ty} \to \tpv\; \textbf{where}\\ 
    \code{here} &: \impl{\Gamma, A} \kw{Var}(A::\Gamma, A)\\ 
    \code{next} &: \impl{\Gamma, A, A_1} \kw{Var}(\Gamma, A_1) \to \kw{Var}(A::\Gamma, A_1)
\end{align*}

\begin{definition}[Substitution]
  A substitution from $\Gamma$ to $\Gamma'$ is defined as 
  $\kw{Sub}(\Gamma, \Gamma') \coloneqq (A : \kw{Ty}) \to \kw{Var}(\Gamma, A) \to \kw{Tm}(\Gamma', A)$.
\end{definition}

\paragraph{Notation}
Given a substitution $\sigma : \kw{Sub}(\Gamma, \Gamma')$
 and term $e : \kw{Tm}(\Gamma, A)$, 
we write $e[\sigma] : \kw{Tm}(\Gamma', A)$ for the result of the substitution. 
Given $e : \kw{Tm}(A_1::\Gamma, A_2)$ and $e' : \kw{Tm}(\Gamma, A_1)$ we also write 
$e[e'] : \kw{Tm}(\Gamma, A_2)$ for the result of substituting the first free variable of $e$ for $e'$. 

\subsubsection{Operations on substitutions}
One can extend a substitution $\sigma : \kw{Sub}(\Gamma, \Gamma')$ by a term $e : \kw{Tm}(\Gamma', A)$:
\begin{align*}  
  \kw{cons} &: \impl{\Gamma, \Gamma', A} \kw{Tm}(\Gamma', A) \to \kw{Sub}(\Gamma, \Gamma') \to \kw{Sub}(A::\Gamma, \Gamma')\\ 
  \kw{cons}(e, \sigma, A', \kw{now}) &= e\\ 
  \kw{cons}(e, \sigma, A', \kw{next}(v)) &= \sigma(v)
\end{align*}

One can also shift substitution $\sigma : \kw{Sub}(\Gamma, \Gamma')$ to account for context extensions, written 
as $\uparrow^A \sigma  : \kw{Sub}(A::\Gamma, A::\Gamma')$. 
We will use the following property about substitution: 
\begin{proposition} \label{prop:subst}
  Given $e : \kw{Tm}(A::\Gamma, A')$, $\sigma : \kw{Sub}(\Gamma, \kw{nil})$, 
  and $e' : \kw{Tm}(\kw{nil}, A)$, we have that 
  $e[\uparrow^A \sigma][e'] = e[\kw{cons}(e', \sigma)]$. 
\end{proposition}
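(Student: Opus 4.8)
The plan is to reduce the claim to a single equation between substitutions and then verify that equation pointwise on variables, relegating all the inductive work to the standard laws of the substitution calculus. First I would recall that the single substitution $f[e']$ for $f : \kw{Tm}(A::\kw{nil}, A')$ is by definition $f[\kw{cons}(e', \kw{id}_{\kw{nil}})]$, where $\kw{id}_{\kw{nil}} : \kw{Sub}(\kw{nil}, \kw{nil})$ is the identity substitution $\kw{id}_{\kw{nil}}(v) = \code{var}(v)$. Writing $\tau \coloneqq \uparrow^A\!\sigma : \kw{Sub}(A::\Gamma, A::\kw{nil})$ and $\rho \coloneqq \kw{cons}(e', \kw{id}_{\kw{nil}}) : \kw{Sub}(A::\kw{nil}, \kw{nil})$, the composition law for substitution, $g[\tau][\rho] = g[\rho \circ \tau]$ with $(\rho\circ\tau)(v) = (\tau(v))[\rho]$, rewrites the left-hand side as $e[\rho\circ\tau]$. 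Since the right-hand side is $e[\kw{cons}(e',\sigma)]$ and both act on the same $e$, it suffices to prove the substitution identity $\rho \circ \tau = \kw{cons}(e', \sigma)$ as elements of $\kw{Sub}(A::\Gamma, \kw{nil})$.

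Because substitutions are functions out of $\kw{Var}(A::\Gamma, -)$, I would prove this by extensionality, case-splitting on the input variable. For the head variable $\code{here}$: the shift $\tau$ sends it to $\code{var}(\code{here})$ by definition of $\uparrow^A$, and applying $\rho$ yields $\rho(\code{here}) = e'$, agreeing with $\kw{cons}(e', \sigma)(\code{here}) = e'$. For a shifted variable $\code{next}(w)$ with $w : \kw{Var}(\Gamma, B)$: by definition $\tau(\code{next}(w)) = (\sigma(w))[\kw{wk}]$, where $\kw{wk} : \kw{Sub}(\kw{nil}, A::\kw{nil})$ is weakening, $\kw{wk}(v) = \code{var}(\code{next}(v))$. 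Applying $\rho$ and invoking the composition law again gives $\sigma(w)[\rho \circ \kw{wk}]$, and a one-line computation shows $\rho \circ \kw{wk} = \kw{id}_{\kw{nil}}$, since $\rho$ sends each $\code{next}(u)$ back to $\kw{id}_{\kw{nil}}(u)$. Hence by the identity law this component is $\sigma(w)$, matching $\kw{cons}(e',\sigma)(\code{next}(w)) = \sigma(w)$.

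The conceptual content of the argument is therefore carried entirely by the standard laws of the simply-typed substitution calculus: the identity law $g[\kw{id}] = g$, the composition law $g[\tau][\rho] = g[\rho\circ\tau]$, and the defining equations of $\uparrow^A$ and $\kw{cons}$. Granting these, the proposition is immediate. The only genuine labor lies in establishing the composition law itself, which proceeds by induction on the term $g$; the single nonroutine case is $\code{lam}$, where one must commute a shift past a composite — i.e.\ verify $\uparrow^{A_1}\!(\rho\circ\tau) = {\uparrow^{A_1}}\rho \circ {\uparrow^{A_1}}\tau$ — so that the two substitutions can be pushed under the binder. This de Bruijn bookkeeping is the main obstacle, but it is standard and wholly independent of the cost-aware features of \calfpp{}; for the intrinsically-typed representation it is exactly the coherence package supplied by \citet{benton-hur-kennedy-mcbride:2012}. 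Alternatively, one can prove the proposition directly by induction on $e$, in which case the $\code{lam}$ case inlines the same shift-commutation step while $\code{ap}$ follows from the induction hypotheses and $\code{var}$, $\code{tt}$, $\code{ff}$ are immediate.
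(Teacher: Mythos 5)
Your proposal is correct, but there is nothing in the paper to compare it against: the paper states \cref{prop:subst} without proof, treating it as part of the standard substitution theory of the intrinsically-typed de Bruijn representation it borrows from \citet{benton-hur-kennedy-mcbride:2012} (the same is true of its \MA{} analogue, \cref{prop:subst-MA}). Your argument is the standard way to discharge it: unfold $e[e']$ as $e[\kw{cons}(e',\kw{id}_{\kw{nil}})]$, collapse the two substitutions with the composition law, and verify $\kw{cons}(e',\kw{id}_{\kw{nil}}) \circ \uparrow^A\!\sigma = \kw{cons}(e',\sigma)$ pointwise on $\code{here}$ and $\code{next}(w)$ --- both cases check out as you compute them (and the $\code{next}$ case is even easier than you make it, since $\sigma(w)$ is closed, so the weakening--then--substitute round trip is vacuous). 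You are also right, and appropriately explicit, about where the real work hides: the identity and composition laws themselves, whose $\code{lam}$ case needs the shift-past-composition coherence lemma $\uparrow^{A_1}\!(\rho\circ\tau) = \uparrow^{A_1}\!\rho \circ \uparrow^{A_1}\!\tau$; this is exactly the bookkeeping package the paper implicitly imports by citing \opcit{}, and it is independent of the cost-aware structure of \calfpp{}, as you note. In short, the proof is sound and fills a gap the paper deliberately left open rather than diverging from any argument the paper actually gives.
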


\subsection{Operational semantics} 

We work with a call-by-value operational semantics for STLC: 
\begin{align*} 
  \textbf{Inductive}\; &\kw{Val} : \impl{A} \kw{Pg}(A) \to \tpv\; \textbf{where}\\ 
  \code{tt}/\kw{val} &: \kw{Val}(\code{tt})\\ 
  \code{tt}/\kw{val} &: \kw{Val}(\code{ff})\\ 
  \code{lam}/\kw{val} &: \impl{e} \kw{Val}(\code{lam}(e))\\\\
  \textbf{Inductive}\; &{\mapsto} : \impl{A} \kw{Pg}(A) \to \kw{Pg}(A) \to \tpv\; \textbf{where}\\ 
    \beta &: \impl{A_1, A_2, e, e_1} \code{ap}(\code{lam}(e), e_1) \mapsto e[e_1]\\ 
    \kw{ap}/\kw{l} &: \impl{A_1, A_2, e, e', e_1} e \mapsto e' \to \code{ap}(e, e_1) \mapsto \code{ap}(e', e_1)\\ 
    \kw{ap}/\kw{l} &: \impl{A_1, A_2, e, e', e_1} \kw{Val}(e) \to e_1 \mapsto e_1' \to \code{ap}(e, e_1) \mapsto \code{ap}(e, e_1') 
\end{align*} 
In the above, we write $\kw{Pg}(A) \coloneqq \kw{Tm}([], A)$ for the type of closed STLC terms.
Evaluation may be defined as the reflexive-transitive closure of $\mapsto$: 
\begin{align*}
  \textbf{Inductive}\; &\mapsto^* :\impl{A} \kw{Pg}(A) \to \kw{Pg}(A) \to \tpv\; \textbf{where}\\
  \kw{refl} &: \impl{e} e \mapsto^* e\\  
  \kw{trans} &: \impl{e} e \mapsto e_1 \to e_1 \mapsto^* e_2 \to e \mapsto^* e_2 
\end{align*}
We then define evaluation: $e \Downarrow v \coloneqq e \mapsto^* v \times \kw{Val}(v)$. 
In a similar fashion, we may define the cost-aware evaluation relation by using a 
$\Nat$-indexed version of the reflexive-transitive closure of $\mapsto$: 
$e \Downarrow^c v \coloneqq e \mapsto^{(c)} v \times \kw{Val}(v)$. 

\subsubsection{Phase-separated cost semantics}\label{sec:phase-sep-cost-sem}

As discussed in \cref{sec:cost-aware-adequacy}, 
we cannot directly use the cost-aware evaluation relation defined above in the statement of 
the cost-aware adequacy theorem. 
Instead, we define a more refined version of the cost-aware evaluation relation that 
\emph{restricts} to the ordinary evaluation relation in the extensional phase. 
To this end, we may define a \emph{phase-separated} version of the cost-aware reflexive transitive closure: 
\begin{align*}
  \textbf{Inductive}\; & {\mapsto_{\ExtOpn}} :\impl{A} \kw{Pg}(A) \to \Cl \Nat \to \kw{Pg}(A) \to \tpv\; \textbf{where}\\
  \kw{refl} &: \impl{e} e \mapsto_{\ExtOpn}^{\eta_{\Cl} 0} e\\  
  \kw{trans} &: \impl{c, e} e \mapsto e_1 \to e_1 \mapsto_{\ExtOpn}^{c} e_2 \to e \mapsto_{\ExtOpn}^{c \mathrel{(\Cl +)} \eta_{\Cl}1} e_2 
\end{align*}
Cruically, this relation becomes equivalent to the ordinary reflexive transitive closure under the extensional phase: 
\begin{proposition}
  Given $u : \ExtOpn$, we have that $e \mapsto_{\ExtOpn}^{\eta_{\Cl} c} v$ if and only if $e \mapsto^* v$ for all $c : \Nat$.
\end{proposition}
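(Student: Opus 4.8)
The plan is to establish the two directions separately, with the whole difficulty concentrated in a single preliminary observation about the sealed cost index. Before touching either implication, I would record that under the hypothesis $u : \ExtOpn$ the intensional modality collapses on $\Nat$. Indeed, $\ExtOpn$ is a proposition by the axiom $\ExtOpn/\kw{uni}$, so the given inhabitant $u$ makes it contractible in the current context; hence the extensional modality $\Op B = \ExtOpn \to B$ is equivalent to $B$ for every $B$. Instantiating at $\Cl\Nat$ and combining with the defining isomorphism $\Op\Cl\Nat \cong 1$ yields $\Cl\Nat \cong 1$ in the presence of $u$. In particular $\Cl\Nat$ is a proposition there, so $\eta_{\Cl} c = \eta_{\Cl} c'$ for all $c, c' : \Nat$, and likewise $\eta_{\Cl} c \mathrel{(\Cl +)} \eta_{\Cl} 1 = \eta_{\Cl} c$. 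Since \calf{} is an extensional type theory, equality reflection promotes these propositional equalities to definitional ones, so indexed types such as $e \mapsto_{\ExtOpn}^{\eta_{\Cl} 0} v$ and $e \mapsto_{\ExtOpn}^{\eta_{\Cl} c} v$ are literally the same type and no explicit transport is required.

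For the forward direction I would show $e \mapsto_{\ExtOpn}^{\eta_{\Cl} c} v$ implies $e \mapsto^* v$ by induction on the phase-separated derivation, first generalizing the cost index to an arbitrary $d : \Cl\Nat$ so that the inductive family can be eliminated (the conclusion $e \mapsto^* v$ never mentions $d$, so this is harmless). The $\kw{refl}$ case is sent to $\kw{refl}$ of $\mapsto^*$ and the $\kw{trans}$ case to $\kw{trans}$ of $\mapsto^*$, simply discarding the cost annotation; notably this direction does not use $u$ at all.

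For the backward direction I would show $e \mapsto^* v$ implies $e \mapsto_{\ExtOpn}^{\eta_{\Cl} c} v$ by induction on the derivation of $e \mapsto^* v$, keeping the target index fixed at $\eta_{\Cl} c$. In the $\kw{refl}$ case the constructor of $\mapsto_{\ExtOpn}$ produces a proof at index $\eta_{\Cl} 0$, which by the collapse above is the same type as index $\eta_{\Cl} c$. In the $\kw{trans}$ case the inductive hypothesis supplies $e_1 \mapsto_{\ExtOpn}^{\eta_{\Cl} c} e_2$, and the constructor applied with the step $e \mapsto e_1$ gives a proof at index $\eta_{\Cl} c \mathrel{(\Cl +)} \eta_{\Cl} 1$, which again coincides with index $\eta_{\Cl} c$.

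The main obstacle is entirely one of matching the cost indices in the backward direction: each constructor of $\mapsto_{\ExtOpn}$ forces a specific sealed cost, and outside the extensional phase these forced costs ($\eta_{\Cl} 0$ in the reflexive case, a successor in the transitive case) would not agree with the single target index $\eta_{\Cl} c$. The preliminary contractibility of $\Cl\Nat$ under $u : \ExtOpn$ is precisely what reconciles them, so once that fact is in hand both inductions are routine. I would only be careful to generalize the $\Cl\Nat$-index before each elimination, as required for eliminating an inductive family at a fixed index.
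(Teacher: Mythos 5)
Your proof is correct. The paper states this proposition without giving any proof (neither in the body nor in \cref{appendix:STLC}), so there is no official argument to compare against; what you give is precisely the natural intended reasoning. Your key observation — that under $u : \ExtOpn$ the closed modality collapses, $\Cl\Nat \cong 1$, obtained by combining $\Op B \cong B$ in the presence of $u$ with the paper's stated fact $\Op\Cl A \cong 1$ — is exactly what makes the backward direction's forced indices ($\eta_{\Cl} 0$ and $\eta_{\Cl} c \mathrel{(\Cl +)} \eta_{\Cl} 1$) match the target index $\eta_{\Cl} c$, and the two inductions are then routine as you say. The care you take to generalize the $\Cl\Nat$-index before eliminating the inductive family, and the remark that equality reflection (available since the metatheory is extensional) lets you avoid explicit transports, are both sound.
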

Consequently, we may define phase-separated evaluation as 
$e \Downarrow_{\ExtOpn}^c v \coloneqq e \mapsto^{(c)} v \times \kw{Val}(v)$, which satisfies a similar 
restriction property: 
\begin{proposition}\label{prop:eval-ext-equiv}
  Given $u : \ExtOpn$, we have that $e \eval^{\eta_{\Cl} c} v$ 
  if and only if $e \Downarrow v$ for all $c : \Nat$.
\end{proposition}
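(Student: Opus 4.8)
The plan is to reduce the statement directly to the immediately preceding proposition, exploiting the fact that both $\eval$ and $\Downarrow$ are defined as conjunctions sharing the same value witness. First I would fix $c : \Nat$ and assume $u : \ExtOpn$, and then unfold the two relations: by definition $e \eval^{\eta_{\Cl} c} v$ is the product $(e \mapsto_{\ExtOpn}^{\eta_{\Cl} c} v) \times \kw{Val}(v)$ of the phase-separated cost-indexed closure with a value proof, while $e \Downarrow v$ is $(e \mapsto^* v) \times \kw{Val}(v)$. Since the second conjunct $\kw{Val}(v)$ is literally identical on both sides, proving the desired biconditional reduces to proving the biconditional between the two first conjuncts.

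But that biconditional is exactly the content of the preceding proposition, which under $u : \ExtOpn$ furnishes, for every $c : \Nat$, a logical equivalence between $e \mapsto_{\ExtOpn}^{\eta_{\Cl} c} v$ and $e \mapsto^* v$. I would then lift this to the packaged statement by congruence of the product: given a witness of $e \eval^{\eta_{\Cl} c} v$, I project out its two components, transport the closure component across the equivalence supplied by the preceding proposition, and re-pair it with the untouched $\kw{Val}(v)$ witness; the converse direction is symmetric. Abstractly this is just functoriality of the product former $- \times \kw{Val}(v)$ with respect to logical equivalence in its first argument.

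I do not expect any genuine obstacle here; the only point deserving attention is bookkeeping of the cost index. The phase-separated relation is indexed by $\Cl \Nat$ and the cost appearing in the statement is the sealed numeral $\eta_{\Cl} c$, so I must invoke the preceding proposition at precisely this index rather than at a bare $c$. Because that proposition is already phrased with index $\eta_{\Cl} c$ and universally quantified over $c : \Nat$, the sealing by the intensional modality has already been discharged and no further manipulation of $\Cl$ is required: under $u : \ExtOpn$ the particular value of $c$ is immaterial, and the whole argument collapses to a single appeal to the preceding proposition together with congruence of $\times$ in its left factor.
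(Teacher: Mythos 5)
Your proposal is correct and takes essentially the same route as the paper, which offers no separate proof at all: it introduces \cref{prop:eval-ext-equiv} with ``Consequently,'' signaling precisely the reduction you describe, namely that $\eval$ and $\Downarrow$ differ only in their first conjunct, so the restriction property is inherited from the preceding proposition on $\mapsto_{\ExtOpn}$ versus $\mapsto^*$ by congruence of the product with the shared $\kw{Val}(v)$ factor. Your attention to invoking that proposition at the sealed index $\eta_{\Cl} c$ is exactly the right bookkeeping and matches the paper's phrasing of both statements.
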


\subsection{A cost-aware denotational semantics for STLC}

We now define a denotational semantics for STLC in \calfpp{} based on the standard polarized decomposition of 
call-by-value. Types are interpreted as follows: 
\begin{align*}
  \semty{-} &: \kw{Ty} \to \tpv\\ 
  \semty{\code{bool}} &= \kw{bool}\\
  \semty{A_1 \Rightarrow A_2} &= \UU{\semty{A_1} \to \F{\semty{A_2}}}
\end{align*}
The interpretation for types is extended to contexts in the obvious way: 
\begin{align*}
  \semcon{-} &: \kw{Con} \to \tpv\\ 
  \semcon{\kw{nil}} &= 1\\ 
  \semcon{A :: \Gamma} &= \semty{A} \times \semcon{\Gamma}
\end{align*}
For the interpretation of terms, we insert cost effects for elimination forms to account for steps in the operational semantics: 
\begin{align*}
  \semvar{-} &: \impl{\Gamma, A} \kw{Var}(\Gamma, A) \to \semcon{\Gamma} \to \semty{A}\\ 
  \semvar{\code{here}} &= \pi_1\\ 
  \semvar{\code{next}(v)} &= \semtm{v} \circ \pi_2 \\\\ 
  \semtm{-} &: \impl{\Gamma, A} \kw{Tm}(\Gamma, A) \to \semcon{\Gamma} \to \F{\semty{A}}\\ 
  \semtm{\code{var}(v)} &= \kw{ret} \circ \semvar{v}\\ 
  \semtm{\code{tt}} &= \lambda \_.\; \ret{\kw{tt}}\\ 
  \semtm{\code{ff}} &= \lambda \_.\; \ret{\kw{ff}}\\ 
  \semtm{\code{lam}(e)} &= \lambda \gamma : \semcon{\Gamma}.\; \lambda a : \semty{A_1}.\; \semtm{e}(a, \gamma)\\ 
  \semtm{\code{ap}(e, e_1)} &= \lambda \gamma : \semcon{\Gamma}.\; 
    \bind{\semtm{e}(\gamma)}{\lambda f.\; \bind{\semtm{e_1}(\gamma)}{\lambda a.\; \mstep{1}{f(a)}}} 
\end{align*}

\subsection{Computational adequacy}\label{sec:adequacy-STLC}

\subsubsection{Logical relation for adequacy} 

Following the classic adequacy proof of Plotkin, we prove our cost-aware adequacy theorem by means of 
a logical relations construction. 
First, we define a binary logical relation relating the values of STLC of type $A : \TySTLC{}$ with 
values in the semantic domain $\semtySTLC{A}$ by induction on $A$:  
\begin{align*}
    {\approx} &: \impl{A} \kw{Pg}(A) \to \semty{A} \to \tpv\\
    \code{b} \approx_{\code{bool}} b &= (\code{b} = \overline{b})\\ 
    \code{e} \approx_{A_1 \Rightarrow A_2} e &= 
     (\Sigma \code{e_2} : \TmSTLC(A_1, A_2).\; \code{e} = \code{lam}(\code{e_2}) \\
     &\times \kw{U}(((\code{e_1} : \kw{Pg}(A_1), e_1 : \semty{A_1}) \to \code{e_1} \approx_{A_1} e_1
       \to \code{e_2}[\code{e_1}] \approx^\Downarrow_{A_2} e(e_1))))
\end{align*} 
In the above $\overline{-}$ sends $\kw{tt}$ to $\code{tt}$ and $\kw{ff}$ to \code{ff}, and $-^\Downarrow$ lifts a relation 
on values to computations using the phase-separated evaluation relation defined in \cref{sec:phase-sep-cost-sem}:  
\begin{align*}
  -^\Downarrow &: \impl{A} (\kw{Pg}(A) \to \semty{A} \to \tpv) \to (\kw{Pg}(A) \to \F{\semty{A}} \to \tpv)\\ 
  \code{e} \mathrel{R^\Downarrow} e &= 
  \Sigma \code{v} : \kw{Pg}(A).\; \Sigma v : \semty{A}.\; (\code{e} \eval^{\eta_{\Cl} c} \code{v}) \times e = \mstep{c}{\ret{v}} \times \code{v} \mathrel{R} v
\end{align*}
The relation is readily lifted to contexts: 
\begin{align*}
  \textbf{Inductive}\; &{\approx} : \impl{\Gamma} \kw{Inst}(\Gamma) \to \semcon{\Gamma} \to \tpv\; \textbf{where}\\ 
    \kw{emp} &: \kw{nil} \approx_{\cdot} \kw{\triv}\\ 
    \kw{cons} &: \impl{\Gamma, \code{\gamma}, \gamma, A, \code{a}, a} 
    \code{a} \approx_A a \to \code{\gamma} \approx_{\Gamma} \gamma \to 
    \kw{cons}(\code{a}, \code{\gamma}) \approx_{A::\Gamma} (a, \gamma)
\end{align*}

\subsubsection{Fundamental theorem}

As usual, we may prove the fundamental theorem of the logical relation by induction on terms. 
The details of the proof can be found in \cref{appendix:STLC}.

\begin{theorem}[FTLR]\label{thm:FTLR}
  Given a STLC term $e : \TmSTLC(\Gamma, A)$, if $\code{\gamma} \approx_{\Gamma} \gamma$, then 
  $e[\code{\gamma}] \approx^\Downarrow_A \semtm{e}(\gamma)$. 
\end{theorem}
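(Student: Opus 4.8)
The plan is to proceed by structural induction on the typing derivation $e : \TmSTLC(\Gamma, A)$, in each case unfolding the computation lifting $-^\Downarrow$ so as to exhibit witnesses $\code{v}$, $v$, and a cost $c : \Nat$ for which $e[\code{\gamma}] \eval^{\eta_{\Cl} c} \code{v}$, the denotation satisfies $\semtm{e}(\gamma) = \mstep{c}{\ret{v}}$, and $\code{v} \approx_A v$. The three leaf cases are immediate. For $\code{tt}$ and $\code{ff}$ the substitution acts trivially, the denotation is $\ret{\kw{tt}}$ (resp.\ $\ret{\kw{ff}}$), and I take $c = 0$ together with the corresponding value, using reflexivity of $\eval$ and $\mathsf{step}_{0}$; the value clause $\code{tt} \approx_{\code{bool}} \kw{tt}$ holds by definition of $\overline{-}$. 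For the variable case $\code{var}(v)$ I would first establish, by a side induction on the derivation of $v : \kw{Var}(\Gamma, A)$ (matching $\code{here}$ against the $\kw{cons}$ constructor of the environment relation and $\code{next}$ against $\pi_2$), that $\code{\gamma} \approx_{\Gamma} \gamma$ implies $\code{\gamma}(v) \approx_A \semvar{v}(\gamma)$. Since $\semtm{\code{var}(v)}(\gamma) = \ret{\semvar{v}(\gamma)}$ and any term in the image of $\approx_A$ is already a value, the conclusion follows with $c = 0$ by reflexivity and $\mathsf{step}_{0}$.

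For the abstraction case $\code{lam}(e)$ the denotation returns the semantic function at cost $0$, so I take $c = 0$ and $\code{v} = \code{lam}(e)[\code{\gamma}] = \code{lam}(e[\uparrow^{A_1}\code{\gamma}])$. To discharge the function clause of $\approx_{A_1 \Rightarrow A_2}$, I fix an argument pair $\code{e_1} \approx_{A_1} e_1$ and must show $e[\uparrow^{A_1}\code{\gamma}][\code{e_1}] \approx^\Downarrow_{A_2} \semtm{e}(e_1, \gamma)$. The induction hypothesis applied to $e$ against the extended environment relation $\kw{cons}(\code{e_1}, \code{\gamma}) \approx_{A_1 :: \Gamma} (e_1, \gamma)$, which holds by the $\kw{cons}$ constructor, gives $e[\kw{cons}(\code{e_1}, \code{\gamma})] \approx^\Downarrow_{A_2} \semtm{e}(e_1, \gamma)$, and \cref{prop:subst} identifies $e[\uparrow^{A_1}\code{\gamma}][\code{e_1}]$ with $e[\kw{cons}(\code{e_1}, \code{\gamma})]$, closing the case.

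The application case $\code{ap}(e, e_1)$ is the crux. Applying the induction hypothesis to $e$ and to $e_1$ and unfolding $-^\Downarrow$ yields values $\code{v}_f, v_f$ at cost $c_f$ with $e[\code{\gamma}] \eval^{\eta_{\Cl} c_f} \code{v}_f$, $\semtm{e}(\gamma) = \mstep{c_f}{\ret{v_f}}$, $\code{v}_f \approx_{A_1 \Rightarrow A_2} v_f$, and likewise $\code{v}_a, v_a$ at cost $c_a$ for the argument. The function relation forces $\code{v}_f = \code{lam}(\code{e'})$ and, instantiated at $\code{v}_a \approx_{A_1} v_a$, delivers $\code{e'}[\code{v}_a] \approx^\Downarrow_{A_2} v_f(v_a)$, hence values $\code{v}, v$ at cost $c_b$ with $\code{e'}[\code{v}_a] \eval^{\eta_{\Cl} c_b} \code{v}$ and $v_f(v_a) = \mstep{c_b}{\ret{v}}$. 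On the denotational side, using $\mathsf{bind}_{\mathsf{step}}$, the monadic law $\bind{\ret{a}}{f} = f(a)$, and $\mathsf{step}_{+}$, I compute $\semtm{\code{ap}(e, e_1)}(\gamma) = \mstep{c_f + c_a + 1 + c_b}{\ret{v}}$. On the operational side I assemble the matching evaluation: left-congruence of $\mapsto_{\ExtOpn}$ advances the function subterm ($c_f$ steps), right-congruence advances the argument once $\code{lam}(\code{e'})$ is known to be a value ($c_a$ steps), the $\beta$ rule fires ($1$ step), and the body reduces ($c_b$ steps), yielding $\code{ap}(e,e_1)[\code{\gamma}] \eval^{\eta_{\Cl}(c_f + c_a + 1 + c_b)} \code{v}$. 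Taking $c = c_f + c_a + 1 + c_b$ and $\code{v} \approx_{A_2} v$ finishes the case.

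I expect the main obstacle to be the bookkeeping in the application case, specifically the compatibility of the $\Cl\Nat$-valued operational cost with the $\Nat$-valued denotational cost. This requires establishing ahead of time a small suite of routine but essential lemmas about the phase-separated closure $\mapsto_{\ExtOpn}$: transitivity with additive cost, the two congruence rules lifting the single-step congruences through the closure, and the fact that $\eta_{\Cl}$ is a monoid homomorphism, so that $\eta_{\Cl} c_f \mathrel{(\Cl +)} \eta_{\Cl} c_a \mathrel{(\Cl +)} \eta_{\Cl} 1 \mathrel{(\Cl +)} \eta_{\Cl} c_b = \eta_{\Cl}(c_f + c_a + 1 + c_b)$. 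With these in hand the two costs line up, and the remaining work is the symbolic CBPV computation already sketched.
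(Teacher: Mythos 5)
Your proposal is correct and follows essentially the same route as the paper's proof: structural induction on the typing derivation, a hypothesis lemma for variables proved by side induction, \cref{prop:subst} to close the abstraction case, and the identical cost accounting $c_f + c_a + 1 + c_b$ in the application case via $\mathsf{bind}_{\mathsf{step}}$, the monad laws, and $\mathsf{step}_{+}$. The only difference is presentational: you explicitly flag the auxiliary lemmas about $\mapsto_{\ExtOpn}$ (transitivity, congruences, and $\eta_{\Cl}$ preserving addition) that the paper uses silently when assembling the operational derivation.
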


\begin{corollary}[Computational adequacy] \label{coro:cost-adequacy}
  Given a closed term $e : \kw{Pg}(\code{bool})$, if $e = \mstep{c}{\ret{b}}$ for some $c : \Nat$ and $b : \kw{bool}$, then 
  $e \Downarrow^{\eta_{\Cl} c} \overline{b}$. 
\end{corollary}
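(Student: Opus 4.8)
The plan is to obtain the corollary as a direct consequence of the fundamental theorem (\cref{thm:FTLR}), specialized to the empty context, with the two cost annotations reconciled by the uniqueness axiom $\kw{step}/\kw{inj}$. First I would instantiate \cref{thm:FTLR} at $\Gamma = \kw{nil}$ and $A = \code{bool}$ with the given closed term $e : \kw{Pg}(\code{bool})$, supplying the empty instantiation through the constructor $\kw{emp} : \kw{nil} \approx_{\cdot} \kw{\triv}$. Since applying the empty closing substitution to a closed term is the identity (a routine fact about substitution) and $\semtm{e}(\kw{\triv})$ is just the closed denotation $\semtm{e}$, this specialization reads $e \approx^\Downarrow_{\code{bool}} \semtm{e}$.

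Next I would unfold the lifting $-^\Downarrow$ at base type. This produces a cost $c' : \Nat$, a closed value $\code{v} : \kw{Pg}(\code{bool})$, and a semantic value $v : \kw{bool}$ together with (i) $e \eval^{\eta_{\Cl} c'} \code{v}$, (ii) $\semtm{e} = \mstep{c'}{\ret{v}}$, and (iii) $\code{v} \approx_{\code{bool}} v$, i.e.\ $\code{v} = \overline{v}$. Combining the hypothesis, which asserts $\semtm{e} = \mstep{c}{\ret{b}}$, with (ii) yields the equation $\mstep{c}{\ret{b}} = \mstep{c'}{\ret{v}}$ in $\F{\kw{bool}}$. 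Applying $\kw{step}/\kw{inj}$ to this equation delivers both the value equation $b = v$ and the sealed cost equation $\Cl(c = c')$. The value equation combines with (iii) to rewrite $\code{v} = \overline{v} = \overline{b}$, so that (i) already reads $e \eval^{\eta_{\Cl} c'} \overline{b}$.

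It remains only to replace the annotation $c'$ by the $c$ of the statement, and this is the step I expect to carry the conceptual weight. The equality of costs is available solely in the sealed form $\Cl(c = c')$ --- it must be sealed, since the equation $\mstep{c}{\ret{b}} = \mstep{c'}{\ret{v}}$ might have been derived inside the extensional phase, where all costs collapse to $0$ --- so I cannot transport (i) along a bare equality $c = c'$. Instead I would exploit the fact that the cost index of $\eval$ lives in $\Cl\Nat$: what is actually required is the equality $\eta_{\Cl} c = \eta_{\Cl} c'$ in $\Cl\Nat$. Because $\Cl$ is a modality and $\Cl\Nat$ is modal, its identity types are again modal, so I may eliminate $\Cl(c = c')$ into the goal $\eta_{\Cl} c = \eta_{\Cl} c'$, reducing to the unsealed case $c = c'$ where the goal follows by congruence of $\eta_{\Cl}$. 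Transporting (i) along $\eta_{\Cl} c = \eta_{\Cl} c'$ then gives $e \eval^{\eta_{\Cl} c} \overline{b}$, as desired.

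This final maneuver is exactly what the phase-separated design is built to permit: $\eval$ carries its cost in $\Cl\Nat$ rather than $\Nat$ precisely so that the only cost information $\kw{step}/\kw{inj}$ can recover, an equation sealed under the intensional modality, is just strong enough to identify the two annotations. Everything else is bookkeeping around the empty substitution and the unfolding of the logical relation, so the substance of the corollary lies in this interaction between $\kw{step}/\kw{inj}$ and the modal cost index.
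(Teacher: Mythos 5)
Your proposal is correct and follows essentially the same route as the paper's own proof: instantiate the fundamental theorem at the empty context, unfold $-^\Downarrow$ at $\code{bool}$, reconcile the two cost/value pairs with $\kw{step}/\kw{inj}$, and conclude via the logical relation at base type together with $\eta_{\Cl} c = \eta_{\Cl} c'$. The only cosmetic difference is in the last step's justification --- the paper appeals to $\Cl$ being a lex monad, while you argue via modality of identity types of the modal type $\Cl\Nat$ --- but these are two phrasings of the same fact.
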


\begin{proof}
  Suppose that $e : \kw{Pg}(\code{bool})$ and $e = \mstep{c}{\ret{b}}$.  
  By \cref{thm:FTLR}, we know that $e \approx^{\Downarrow}_{\code{bool}} \semtmSTLC{e}$, which means that 
  there exists $c'$, $\code{b}$, and $b'$ such that $e \eval^{\eta_{\Cl} c'} \code{b}$,
  $e = \mstep{c'}{\ret{b'}}$, and $\code{b} \approx_{\code{bool}} b'$. 
  By $\kw{step}/\kw{inj}$, we have that $b = b'$ and $\Cl(c = c')$. 
  By the definition of the logical relation at $\code{bool}$, we have that 
  $\code{b} = \overline{b'} = \overline{b}$. The result then holds because 
  $\eta_{\Cl} c' = \eta_{\Cl} c$ since $\Cl$ is a lex monad. 
\end{proof}

\begin{corollary}[Extensional adequacy]
  Suppose that $u : \ExtOpn$. Given a closed term $e : \kw{Pg}(\code{bool})$, if $e = \ret{b}$ for some $b : \kw{bool}$, 
  then $e \Downarrow \overline{b}$. 
\end{corollary}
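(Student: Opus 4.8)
The plan is to derive extensional adequacy as an immediate corollary of the cost-aware adequacy theorem (\cref{coro:cost-adequacy}) by working inside the extensional phase witnessed by $u : \ExtOpn$. The key observation is that once we have $u : \ExtOpn$, the intensional modality collapses and the axiom $\kw{step}/\ExtOpn$ trivializes cost, so the cost-aware statement degenerates exactly to the extensional one. First I would take the hypothesis $e = \ret{b}$ and observe that, since $\ret{b} = \mstep{0}{\ret{b}}$ by $\kw{step}_0$, we may instantiate \cref{coro:cost-adequacy} with $c \coloneqq 0$ to obtain $e \eval^{\eta_{\Cl} 0} \overline{b}$.

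Next I would convert this phase-separated evaluation judgment into the ordinary evaluation judgment. This is precisely the content of \cref{prop:eval-ext-equiv}: given $u : \ExtOpn$, we have $e \eval^{\eta_{\Cl} c} v$ if and only if $e \Downarrow v$ for all $c : \Nat$. Applying the forward direction of this equivalence at $c \coloneqq 0$ to the judgment $e \eval^{\eta_{\Cl} 0} \overline{b}$ yields $e \Downarrow \overline{b}$, which is exactly the desired conclusion. The use of $u : \ExtOpn$ here is essential, since \cref{prop:eval-ext-equiv} holds only under a proof of the extensional phase.

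The main conceptual point—rather than a genuine obstacle—is ensuring that the cost annotation is handled coherently when passing through the modality. The cost-aware theorem delivers an evaluation at cost $\eta_{\Cl} 0$, and we must make sure that \cref{prop:eval-ext-equiv} is being applied to exactly this sealed cost; this matches because the proposition is stated for arbitrary $c : \Nat$ applied under $\eta_{\Cl}$. I do not expect any real difficulty, since both the trivialization of cost (via $\kw{step}_0$, or equivalently $\kw{step}/\ExtOpn$ under $u$) and the restriction of the evaluation relation (via \cref{prop:eval-ext-equiv}) have already been established; the corollary is essentially a specialization-and-restriction argument. This is the promised payoff advertised throughout the introduction: the extensional result follows \emph{immediately} from the cost-aware result by projecting into the extensional phase, with no separate logical-relations argument required.
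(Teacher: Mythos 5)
Your proposal is correct and follows exactly the paper's own argument: rewrite $e = \ret{b}$ as $\mstep{0}{\ret{b}}$ via $\kw{step}_0$, apply \cref{coro:cost-adequacy} to obtain $e \eval^{\eta_{\Cl} 0} \overline{b}$, and then use \cref{prop:eval-ext-equiv} under $u : \ExtOpn$ to conclude $e \Downarrow \overline{b}$. No differences in approach worth noting.
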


\begin{proof}
  Because $e = \ret{b} = \mstep{0}{\ret{b}}$, we may apply \cref{coro:cost-adequacy} to obtain  $e \eval^{\eta_{\Cl} 0} \overline{b}$. But since we have $u : \ExtOpn$, 
  we have $e \Downarrow \overline{b}$ by \cref{prop:eval-ext-equiv}.
\end{proof}
  \section{\calfiter{}: unbounded iteration}\label{sec:iteration}

In order to define a model of Modernized Algol (see \cref{sec:MA}), we will need to extend \calfpp{} 
with facilities for modeling while loops.  
In this section, we present \calfiter{}, a metalanguage for \emph{unbounded iteration}. 
An intuitive way to think about iteration is as a coinductive system. 
If we are given a ``one step'' computation $f : A \to B + A$ in which the left summand 
represents the terminal state and the right summand represents the nonterminal state, 
an iterative computation of $f$ can be thought of as running $f$ until the terminal state is reached. 
In terms of equations this is expressed as an unfolding rule: 
$\kw{iter}(f, a) = [\kw{ret}; \kw{iter}(f)] \circ f(a)$ (here $[f ; g] : A + B \to C$ is the 
sum of $f : A \to C$ and $g : B \to C$).  

\paragraph{Lifted computations}

Although it might be tempting to combine the computational effects of 
\emph{cost} and \emph{partiality}, doing so will complicate our model construction for \calfiter{}. 
In particular, it is not immediately clear how to assign meaning to a potential divergent \emph{type} computation 
in the usual adjunction models of CBPV we consider in this paper.
Fortunately, we may sidestep this problem by axiomatizing a more general class of \emph{lifted computations} $\Lift{A}$ that supports 
possibly divergent computations and isolate among these the previously cost-sensitive (but total) computations 
$F A \hookrightarrow L A$: 
\begin{align*}
  \kw{L} &: \tpv \to \tpc\\ 
  \kw{lift} &: \impl{A} \F{A} \to \Lift{A}\\
  \kw{lift}/\kw{inj} &: \impl{A} \kw{lift}(e) = \kw{lift}(e') \to e = e'
\end{align*}
Similar to free computations, we may sequence lifted computations: 
\begin{align*}
  \kw{bind}_{\kw{L}} &: \impl{A, B} \tmc{\Lift{A}} \to (\tmv{A} \to \tmc{\Lift{B}}) \to \tmc{\Lift{B}}
\end{align*}
Note that in contrast to free computations one may only sequence a lifted computation with another lifted computation. 
The sequencing satisfies the 
expected equational laws with respect to the unit of lifting, defined as 
$\kw{ret}_{\kw{L}} \coloneqq \kw{lift} \circ \kw{ret}$. 
Moreover, the lifting of free computations commutes with sequencing and the cost effect: 
\begin{align*}
  \kw{lift}/\kw{bind} &: \impl{A, B, e, f} \kw{lift}(\bind{e}{f}) = \bindl{\kw{lift}(e)}{\kw{lift} \circ f}\\
  \kw{lift}/\kw{step} &: \impl{A, e, c} \kw{lift}(\mstep{c}{e}) = \mstep{c}{\kw{lift}(e)}
\end{align*}

\paragraph{Notation}

We write $a \leftarrow e; f(a)$ for $\bind{e}{f}$ and $a \leftarrow_{\kw{L}} e; f(a)$ 
for $\bindl{e}{f}$.

\paragraph{Propositional truncation}

In order to state the axioms governing cost decomposition for \calfiter{}, we will 
need to work with the \emph{propositional truncation} of a type \citet{hottbook}. 
Following standard notation, we write $\norm{A}$ for the propositional truncation of a given type $A$. 
As usual we define \emph{mere existence} as the propositional truncation of a dependent sum: 
$\exists a : A.\; B(a) \coloneqq \norm{\Sigma a : A.\; B(a)}$. 
Given an assumption of the form $\exists a : A.\; B(a)$, 
we say that ``there merely exists $a : A$ such that $B(a)$''. The universal property of 
propositional truncation allows one to extract the witness $a$ and associated data $B(a)$ when one 
is proving a proposition. 

\paragraph{Higher-order recursion}

Given that we have indulged in unbounded iteration, 
it is natural to ask why not also assume arbitrary fixed-points, 
from which iteration may be derived as a special case? As we will show in \cref{sec:metatheory-iter}, 
we model \emph{lifted} computations of \calfiter{} as terms of a certain partiality monad $\bot$. 
However, $\bot$ only supports recursion for \emph{continuous} functions, and 
it is not possible to enforce this property in the kind of model we use to interpret \calfpp{} and 
\calfiter{}. A possibility for adding \emph{all} fixed-points on a computation domain is 
to use \emph{synthetic domain theory}, which we discuss in \cref{sec:conclusion}.

\subsection{Axioms for iteration}

Equipped with this intuition, we may axiomatize iteration in \calfiter{} as follows; 
note that iteration is only available for lifted computations: 
\begin{align*}
  \kw{iter} &: \impl{A, B} (\tmv{A} \to \tmc{\Lift{B + A}}) \to A \to \tmc{\Lift{B}}\\ 
  \kw{iter}/\kw{unfold} &: \impl{A, B, f, a, a'} \kw{iter}(f)(a) = \bindl{f(a)}{[\retl{b}; \kw{iter}(f)]}
\end{align*}
As we will see in \cref{sec:adequacy-MA}, we need iterative computations to satisfy a certain 
compactness property, in the sense that whenever an iterative computation $\kw{iter}(f, a)$ 
has a cost bound, there is a finite prefix $\kw{seq}(f, k, a)$ that suffices for 
obtaining that cost bound: 
\begin{align*}
  \kw{seq} &: \impl{A, B} (\tmv{A} \to \tmc{\Lift{B + A}}) \to \Nat \to \tmv{A} \to \tmc{\Lift{B + A}}\\ 
  \kw{seq}(f, 0)(a) &= \retl{\inr{a}}\\
  \kw{seq}(f, k+1)(a) &= \bindl{f(a)}{[\kw{ret}_{\kw{L}} \circ \kw{inl}; \kw{seq}(f, k)]}\\ 
  \kw{iter}/\kw{trunc} &: \impl{A, B, f, a, b, c} \kw{iter}(f, a) = \mstep{c}{\retl{b}} \to\\ 
  &\norm{\Sigma k : \Nat.\; \kw{seq}(g, k)(a) =\mstep{c}{\retl{\inl{b}}}}
\end{align*}

Similar to the uniqueness axioms we introduced in \cref{sec:STLC}, we require that 
cost bounds for lifted computations are unique: 
\begin{align*}
  \kw{step}_{\kw{L}}/\kw{inj} &: \impl{A, (a , a' : A) (c, c' : \mathbb{C})} 
    \mstep{c}{\retl{a}} = \mstep{c'}{\retl{a'}} \to (a = a') \times \Cl (c = c')
\end{align*}
we will also postulate that cost bounds on lifted computations may be decomposed: 
\begin{align*}
  \kw{bind}_{\kw{L}}^{-1} &: \impl{A, B, e, f, c, b} \bindl{e}{f} = \mstep{c}{\retl{b}} \to 
  \lVert \Sigma c_1, c_2 : \mathbb{C}.\; \Sigma a : A. \; e = \mstep{c_1}{\retl{a}} \times\\
  & f(a) = \mstep{c_2}{\retl{b}} \times \Cl(c = c_1 + c_2)\rVert\\ 
  \kw{step}_{\kw{L}}^{-1} &: \impl{A, c, c_1, a} \mstep{c_1}{e} = \mstep{c}{\retl{a}} \to 
  \lVert \Sigma c_2 : \mathbb{C}.\; e = \mstep{c_2}{\retl{a}} \times \Cl(c = c_1 + c_2) \rVert
\end{align*}
Note that the corresponding laws for free computations may be derived using the 
properties of the lifting operation $\kw{lift}$: 
\begin{proposition}
  There are terms of the following types: 
  \begin{align*}
    \kw{step}/\kw{inj} &: \impl{A, (a , a' : A) (c, c' : \mathbb{C})} 
    \mstep{c}{\ret{a}} = \mstep{c'}{\ret{a'}} \to (a = a') \times \Cl (c = c')\\
    \kw{bind}^{-1} &: \impl{A, X, e, f, c, b} \bind{e}{f} = \mstep{c}{\ret{b}} \to 
    \lVert \Sigma c_1, c_2 : \mathbb{C}.\; \Sigma a : A. \; e = \mstep{c_1}{\ret{a}} \times\\
    &f(a) = \mstep{c_2}{\ret{b}} \times \Cl(c = c_1 + c_2) \rVert\\ 
    \kw{step}^{-1} &: \impl{A, c, c_1, e, a} \mstep{c_1}{e} = \mstep{c}{\ret{a}} \to 
    \lVert \Sigma c_2 : \mathbb{C}.\; e = \mstep{c_2}{\ret{a}} \times \Cl(c = c_1 + c_2) \rVert
  \end{align*}
\end{proposition}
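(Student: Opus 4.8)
The plan is to \emph{reduce} each of the three free-computation laws to its already-postulated lifted counterpart ($\kw{step}_{\kw{L}}/\kw{inj}$, $\kw{bind}_{\kw{L}}^{-1}$, $\kw{step}_{\kw{L}}^{-1}$) by transporting along the embedding $\kw{lift} : \F{A} \to \Lift{A}$. The essential observations are that $\kw{lift}$ is injective ($\kw{lift}/\kw{inj}$), commutes with the cost effect and with sequencing ($\kw{lift}/\kw{step}$ and $\kw{lift}/\kw{bind}$), and sends $\ret{a}$ to $\retl{a}$ by the definition $\kw{ret}_{\kw{L}} = \kw{lift} \circ \kw{ret}$. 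Together these let me move an equation of free computations to the corresponding equation of lifted computations, apply a lifted axiom, and then move the resulting conclusion back across $\kw{lift}$.

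For $\kw{step}/\kw{inj}$: given $\mstep{c}{\ret{a}} = \mstep{c'}{\ret{a'}}$, I would apply $\kw{lift}$ to both sides and rewrite using $\kw{lift}/\kw{step}$ and $\retl{a} = \kw{lift}(\ret{a})$ to obtain $\mstep{c}{\retl{a}} = \mstep{c'}{\retl{a'}}$. Then $\kw{step}_{\kw{L}}/\kw{inj}$ yields $(a = a') \times \Cl(c = c')$, which is exactly the desired conclusion; no further transport is needed here since $a, a', c, c'$ already live at the value and cost levels.

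For $\kw{step}^{-1}$ and $\kw{bind}^{-1}$ the same forward step applies $\kw{lift}$ together with the commutation laws to turn the hypothesis into $\mstep{c_1}{\kw{lift}(e)} = \mstep{c}{\retl{a}}$, resp.\ $\bindl{\kw{lift}(e)}{\kw{lift}\circ f} = \mstep{c}{\retl{b}}$, after which $\kw{step}_{\kw{L}}^{-1}$, resp.\ $\kw{bind}_{\kw{L}}^{-1}$, supplies a \emph{truncated} witness. The remaining work is to transport each equation back across $\kw{lift}$: from $\kw{lift}(e) = \mstep{c_2}{\retl{a}}$ I would rewrite the right-hand side as $\kw{lift}(\mstep{c_2}{\ret{a}})$ (again via $\kw{lift}/\kw{step}$) and conclude $e = \mstep{c_2}{\ret{a}}$ by $\kw{lift}/\kw{inj}$; the $\kw{bind}^{-1}$ case applies this twice, once to $e$ and once to $f(a)$ using $(\kw{lift}\circ f)(a) = \kw{lift}(f(a))$, while the cost side condition $\Cl(c = c_1 + c_2)$ carries over verbatim. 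Since the goal in these two cases is itself a propositional truncation, this transport is performed under the universal property of $\norm{-}$, eliminating the truncated hypothesis directly into the truncated goal.

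The construction is essentially mechanical, and I do not expect a genuine obstacle. The only point demanding care is the two-directional bookkeeping of the rewriting $\kw{lift}(\mstep{c}{\ret{a}}) = \mstep{c}{\retl{a}}$ and the observation that injectivity of $\kw{lift}$ is precisely what allows the conclusions to descend from $\Lift{-}$ back down to $\F{-}$. In particular, the truncation elimination is legitimate in each case because the target is a proposition, being itself a truncation.
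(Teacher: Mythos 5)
Your proposal is correct and follows exactly the route the paper intends: the paper introduces this proposition with the remark that the free-computation laws ``may be derived using the properties of the lifting operation $\kw{lift}$,'' which is precisely your reduction via $\kw{lift}/\kw{step}$, $\kw{lift}/\kw{bind}$, $\kw{ret}_{\kw{L}} = \kw{lift} \circ \kw{ret}$, the lifted axioms, and $\kw{lift}/\kw{inj}$ to descend back to $\F{A}$. Your write-up in fact supplies the bookkeeping (including the truncation elimination into a propositional goal) that the paper leaves implicit.
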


From these axioms, we may derive some useful reasoning principles regarding cost bounds:
\begin{proposition}\label{prop:ret-zero-lift}
  There is a term of type $\mstep{c}{\retl{a}} = \retl{a'} \to \Cl(c = 0)$. 
\end{proposition}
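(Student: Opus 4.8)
The plan is to recognize this as an immediate consequence of the uniqueness axiom $\kw{step}_{\kw{L}}/\kw{inj}$. The only gap between the hypothesis and a direct application is cosmetic: the right-hand side $\retl{a'}$ is not literally in the shape $\mstep{c'}{\retl{a'}}$ expected by $\kw{step}_{\kw{L}}/\kw{inj}$. I would first repair this using the $\mathsf{step}_0$ law from the signature of \calf{} (\cref{fig:calf}), which gives $\retl{a'} = \mstep{0}{\retl{a'}}$.

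Concretely, assume a proof $h : \mstep{c}{\retl{a}} = \retl{a'}$. Rewriting the right-hand side of $h$ by $\mathsf{step}_0$ produces a proof of $\mstep{c}{\retl{a}} = \mstep{0}{\retl{a'}}$. Applying $\kw{step}_{\kw{L}}/\kw{inj}$ to this equation (instantiating $c' \coloneqq 0$) yields a term of type $(a = a') \times \Cl(c = 0)$, and projecting out the second component delivers the desired $\Cl(c = 0)$.

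I expect no genuine obstacle here: the statement is a one-line corollary once the $\mathsf{step}_0$ rewrite is in place. The single conceptual point worth noting is that the cost equation emerges already sealed by the intensional modality $\Cl$, and this is forced rather than incidental: the hypothesis $h$ could, a priori, have been derived under a proof of the extensional phase $\ExtOpn$, where $\kw{step}$ is trivial and every cost collapses, so $c = 0$ can only be asserted intensionally. Since the goal is itself stated as $\Cl(c = 0)$, this matches the output of $\kw{step}_{\kw{L}}/\kw{inj}$ exactly, and no further work — in particular no appeal to $\kw{step}_{\kw{L}}^{-1}$ or the $\kw{bind}_{\kw{L}}^{-1}$ decomposition — is required.
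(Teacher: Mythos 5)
Your proof is correct, but it is not the paper's proof: the paper never invokes $\kw{step}_{\kw{L}}/\kw{inj}$ here. Instead it applies the decomposition axiom $\kw{step}_{\kw{L}}^{-1}$ (after the same implicit $\mathsf{step}_{0}$ reading of $\retl{a'}$ as $\mstep{0}{\retl{a'}}$), obtaining that there \emph{merely} exists $c'$ with $\retl{a} = \mstep{c'}{\retl{a'}}$ and $\Cl(0 = c + c')$; it then eliminates the propositional truncation --- legitimate because the goal $\Cl(c = 0)$ is a proposition --- and finishes by the functorial action of $\Cl$ applied to the arithmetic implication $0 = c + c' \to c = 0$. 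Your route is genuinely shorter and buys two things. First, it needs no property of the cost monoid beyond the axioms: the paper's final step requires $0 = c + c' \to c = 0$, which it attributes to cancellativity but which is really a positivity-type condition (it fails, for instance, in a group), whereas $\kw{step}_{\kw{L}}/\kw{inj}$ plus $\mathsf{step}_{0}$ asks nothing of $\mathbb{C}$. Second, the injectivity axiom hands you $a = a'$ in the same application, which is precisely \cref{prop:step-ret-lift}; the paper instead derives that follow-up proposition separately, by case analysis on $\Cl(c = 0)$ via the induction principle of the closed modality, so your argument subsumes both propositions at once. What the paper's longer route exercises --- truncation elimination into a propositional goal and lifting along the functorial action of $\Cl$ --- are idioms it reuses heavily in the later adequacy proofs, but logically nothing in its proof is needed once $\kw{step}_{\kw{L}}/\kw{inj}$ is on the table, and your observation that the $\Cl$-seal on the cost equation is forced by the extensional phase is exactly the right reading of why the axiom is stated that way.
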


\begin{proof}
  By $\kw{step}_{\kw{L}}^{-1}$, we have that there merely exists $c'$ such that $\retl{a} = \mstep{c'}{ \retl{a'}}$ and $\Cl(0 = c + c')$. 
  Because $\Cl(c = 0)$ is a proposition, we may project the underlying witness and data. 
  Lifting using the functorial action of $\Cl$, it suffices to show that there is a term 
  $0 = c + c' \to c = 0$, which clearly holds for a cancellative monoid.
\end{proof}

\begin{proposition}\label{prop:step-ret-lift}
  There is a term of type $\mstep{c}{\retl{a}} = \retl{a'} \to (a = a')$. 
\end{proposition}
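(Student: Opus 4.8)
The plan is to reduce the statement directly to the injectivity axiom $\kw{step}_{\kw{L}}/\kw{inj}$ for lifted computations. The only obstacle is cosmetic: $\kw{step}_{\kw{L}}/\kw{inj}$ expects both sides of the hypothesized equation to be presented in the canonical form $\mstep{-}{\retl{-}}$, whereas here the right-hand side $\retl{a'}$ carries no explicit cost annotation. So the first step is to normalize the right-hand side using the $\mathsf{step}_0$ law, which gives $\retl{a'} = \mstep{0}{\retl{a'}}$.

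Concretely, I would assume a proof $p : \mstep{c}{\retl{a}} = \retl{a'}$, rewrite along $\mathsf{step}_0$ to obtain $\mstep{c}{\retl{a}} = \mstep{0}{\retl{a'}}$, and then apply $\kw{step}_{\kw{L}}/\kw{inj}$ (instantiated at $c' \coloneqq 0$) to this equation. This yields a pair in $(a = a') \times \Cl(c = 0)$, and projecting the first component delivers exactly the desired $a = a'$. Note that, unlike \cref{prop:ret-zero-lift}—which extracts the \emph{cost} relationship and therefore must contend with the intensional modality $\Cl$ surrounding the equation $c = 0$—here we only need the behavioral (first) component, which is an unsealed equation $a = a'$; no manipulation of $\Cl$, and in particular no appeal to its functorial action, is required. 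I do not expect any genuine difficulty: the result is an immediate corollary of the uniqueness axiom once the right-hand side has been put in the matching normal form.
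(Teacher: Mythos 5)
Your proof is correct, but it takes a genuinely different route from the paper's. You normalize the right-hand side via $\mathsf{step}_0$ to get $\mstep{c}{\retl{a}} = \mstep{0}{\retl{a'}}$ and then apply $\kw{step}_{\kw{L}}/\kw{inj}$ at $c' \coloneqq 0$, projecting out the first component; this is a valid derivation, since $\Lift{A}$ is a computation type (so $\mathsf{step}_0$ applies) and the axiom delivers the value equation $a = a'$ \emph{unsealed}, exactly as you observe. The paper instead factors through \cref{prop:ret-zero-lift}: it first obtains $\Cl(c = 0)$ (itself proved via $\kw{step}_{\kw{L}}^{-1}$ and cancellativity of the cost monoid), and then eliminates the closed modality by cases --- either $u : \ExtOpn$, in which case $\kw{step}/\ExtOpn$ trivializes the step, or $c = 0$ --- reducing both branches to $\retl{a} = \retl{a'}$ and concluding by injectivity of $\kw{ret}$ and $\kw{lift}$. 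Your argument is shorter and self-contained: it bypasses the closed-modality induction entirely, and its discarded second component $\Cl(c = 0)$ would even reprove \cref{prop:ret-zero-lift} as a byproduct. What the paper's detour buys is that it leans only on the weaker ingredients ($\kw{step}_{\kw{L}}^{-1}$ and injectivity of the unit $\retl{-}$) rather than the full step-injectivity axiom, and it exhibits the phase-case-analysis technique (extensional phase versus genuine cost equation) that recurs throughout the adequacy proofs; but as a standalone derivation in \calfiter{}, yours is perfectly sound.
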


\begin{proof}
  By \cref{prop:ret-zero-lift}, we have that $\Cl(c = 0)$. By induction principle of the closed modality, we have to consider 
  two cases. First, if $u : \ExtOpn$, then we have $\mstep{c}{\retl{a}} = \retl{a} = \retl{a'}$, from which the result 
  follows from $\kw{ret}/\kw{inj}$ and $\kw{in}/\kw{inj}$. Otherwise, we have $c = 0$, and the result holds from the same argument.
\end{proof}
  \section{Modernized Algol}\label{sec:MA}

We define and study a denotational semantics for a variant of Modernized Algol (\MA{}) 
as formulated in \citet{harper:2012:pfpl}. \MA{} is a procedural programming language with 
first-order store and unbounded iteration and obeys a stack discipline in the sense that 
store assignables are deallocated when they go out of scope. 
A characteristic feature of \MA{} is the distinction between expressions and commmands 
that reflects the separation of mathematical and effectful computation. 
Unlike \opcit{}, we restrict our expression language to a total language 
(a mild extension of \STLC{} as presented in \cref{sec:STLC}) 
and extend the command language with a primitive iteration command. 

The version of \MA{} that we present is not as bare-bones as the \STLC{} from \cref{sec:STLC}, but it is 
still somewhat austere in terms of type structures, with natural numbers being the only 
(non trivial) inductive data type. Because our intention is to illustrate  
the construction and adequacy proof of a cost-aware denotational semantics, we have delibrately kept the 
language in question simple to isolate the core ideas. 
The methods we develop here may be extended to include richer data structures so that one may program the 
algorithms studied in \citet{niu-sterling-grodin-harper:2022} 
(in fact we can already define Euclid's algorithm in this modest version of \MA{}). 

\subsection{Syntax of \MA{}}

\MA{} is equipped with a small class of inductive types, functions types, and 
a type $\code{cmd}(A)$ of reified commands that compute expressions of type $A$ 
(see \cref{fig:types-MA}). We also outline a class of \emph{strictly positive types} 
that may be used as assignables, which restricts \MA{} to \emph{first-order stores}.
We define the type of strictly positive types as $\kw{Pos} \coloneqq \Sigma A : \TyMA{}.\; \kw{pos}(A)$. 

\paragraph{Convention} 
In order to faciliate readbility, we will not write the proof of strict positivity,
\ie{} we write $\code{bool} : \kw{Pos}$ for $(\code{bool}, \code{bool}/\kw{pos}) : \kw{Pos}$. 

\begin{figure}
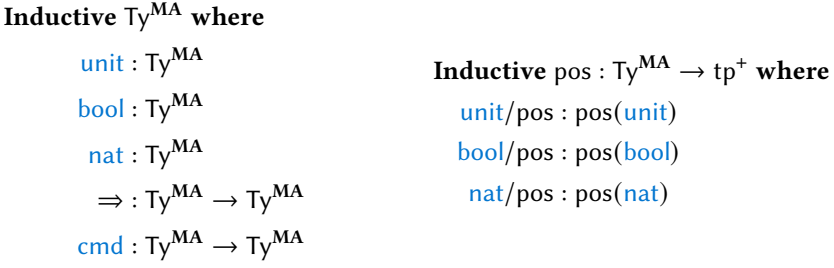

  \begin{minipage}{0.45\textwidth}
    \begin{align*}
      \textbf{Inductive}\; &\TyMA{} \; \textbf{where}\\ 
        \code{unit} &: \TyMA{}\\
        \code{bool} &: \TyMA{}\\ 
        \code{nat} &: \TyMA{}\\
        \Rightarrow &: \TyMA{} \to \TyMA{}\\
        \code{cmd} &: \TyMA{} \to \TyMA{}
    \end{align*} 
  \end{minipage}
  \begin{minipage}{0.45\textwidth}
    \begin{align*}  
      \textbf{Inductive}\; &\kw{pos} : \TyMA{} \to \tpv \; \textbf{where}\\ 
      \code{unit}/\kw{pos} &: \kw{pos}(\code{unit})\\ 
      \code{bool}/\kw{pos} &: \kw{pos}(\code{bool})\\ 
      \code{nat}/\kw{pos} &: \kw{pos}(\code{nat})
    \end{align*}
  \end{minipage}
  \caption{Left: types of \MA{}; right: strictly positive types.}
  \label{fig:types-MA}
\end{figure}

As for the \STLC{}, we work with an instrinsic encoding of \MA{}; the terms of \MA{} 
are presented in \cref{fig:ma-statics}. 
Both the judgment for well-typed commands $\Cmd$ and expressions $\TmMA$ are indexed by 
a \emph{signature} $\kw{Sig} \coloneqq \kw{list}(\kw{Pos})$ of strictly positive types. 
Well-typed expressions $e : \TmMA(\Sigma, \Gamma, A)$ are written as $\Gamma \vdash_\Sigma e : A$ and 
well-typed commands $m : \Cmd(\Sigma, \Gamma, A)$ are written as $\Gamma \vdash_\Sigma m \div A$;   
note that we make use of a mutual inductive definition because well-typed expressions and 
commands must be defined simultaneously. 

\begin{figure}
  \begin{align*}  
  \textbf{Inductive}\; &\kw{Var} : \kw{Ctx} \to \TyMA \to \tpv\; \textbf{where}\\ 
  \textbf{Inductive}\;&\Cmd : \kw{Sig} \to \kw{Ctx} \to \TyMA{} \to \tpv \; \textbf{where}\\
  \textbf{mutual}\;&\TmMA : \kw{Sig} \to \kw{Ctx} \to \TyMA{} \to \tpv \; \textbf{where}
  \end{align*}
  \begin{mathpar}
    \inferrule{
      v : \kw{Var}(\Gamma, A)
    }{
      \Gamma \vdash_\Sigma \code{var}(v) : A 
    }

    \inferrule{
    }{
      \Gamma \vdash_\Sigma \code{\triv} : \code{unit}
    }

    \inferrule{
    }{
      \Gamma \vdash_\Sigma \code{zero} : \code{nat}
    }

    \inferrule{
      \Gamma \vdash_\Sigma e : \code{nat}
    }{
      \Gamma \vdash_\Sigma \code{suc}(e) : \code{nat} 
    }

    \inferrule{
      \Gamma \vdash_\Sigma e : \code{nat}\\
      \Gamma \vdash_\Sigma e_1 : A\\
      \code{nat}::\Gamma \vdash_\Sigma e_2 : A
    }{
      \Gamma \vdash_\Sigma \code{ifz}(e, e_1, e_2) : A
    }

    \inferrule{
    }{
      \Gamma \vdash_\Sigma \code{tt}, \code{ff} : \code{bool}
    }
  
    \inferrule{
      A_1::\Gamma \vdash_\Sigma e : A_2
    }{
      \Gamma \vdash_\Sigma \code{lam}(e) : A_1 \Rightarrow A_2
    }
  
    \inferrule{
      \Gamma \vdash_\Sigma e : A_1 \Rightarrow A_2\\
      \Gamma \vdash_\Sigma e_1 : A_1
    }{
      \Gamma \vdash_\Sigma \code{ap}(e, e_1) : A_2
    }
  
    \inferrule{
      \Gamma \vdash_{\Sigma} m \div A
    }{
      \Gamma \vdash_{\Sigma} \code{cmd}(m) : \code{cmd}(A)
    }
  
    \inferrule{
      \Gamma \vdash_\Sigma a : A
    }{
      \Gamma \vdash_\Sigma \code{ret}(a) : \code{cmd}(A)
    }
  
    \inferrule{
      \Gamma \vdash_\Sigma e : \code{cmd}(A)\\ 
      A::\Gamma \vdash_\Sigma m \div B
    }{
      \Gamma \vdash_\Sigma \code{bnd}(e, m) \div B
    }
  
    \inferrule{
      \Sigma[n] = (\code{bool} , -)\\
      \Gamma \vdash_\Sigma m \div \code{unit}
    }{
      \Gamma \vdash_\Sigma \code{while}[n](m) \div \code{unit}
    }
    
    \inferrule{
      \Sigma[n] = (A , -)
    }{
      \Gamma \vdash_{\Sigma} \code{get}[n] \div A
    }
  
    \inferrule{
      \Sigma[n] = (A , -)\\
      \Gamma \vdash_\Sigma e : A
    }{
      \Gamma \vdash_{\Sigma} \code{set}[n](e) \div \code{bool}
    }
  
    \inferrule{
      A_{\kw{pos}} : \kw{pos}(A)\\ 
      \Gamma \vdash_{\Sigma} e : A\\ 
      \Gamma \vdash_{(A, A_{\kw{pos}})::\Sigma} m \div \code{bool}
    }{
      \Gamma \vdash_\Sigma \code{dcl}(e, m) : \code{bool}
    }
  \end{mathpar}
   \caption{Expressions and commands of \MA{}.} 
   \label{fig:ma-statics}
  \end{figure}

\subsection{Preorder of signatures}\label{sec:preorder}

In the possible worlds semantics/Kripke models of mutable store, 
the interpretation of signature-indexed commands and expressions 
can be thought of as a family of models linked by a 
contravariant action on the \emph{preorder relation} on the signatures, 
which represents the stability of the interpretation with respect to allocation of 
new assignables. We may define the preorder on signatures as follows: 
\begin{align*}
  \textbf{Inductive}\;& \ge : \kw{Con} \to \kw{Con} \to \tpv\; \textbf{where}\\  
    \kw{refl} &: \impl{\Sigma} \Sigma \ge \Sigma\\ 
    \kw{mono} &: \impl{\Sigma, \Sigma', A} \Sigma' \ge \Sigma \to A::\Sigma' \ge A::\Sigma\\
    \kw{extend} &: \impl{\Sigma, \Sigma', A} \Sigma' \ge \Sigma \to A::\Sigma' \ge \Sigma
\end{align*}

In other words, we have a proof of $\Sigma' \ge \Sigma$ whenever $\Sigma$ occurs as a 
\emph{subsequence} of $\Sigma'$. 

\begin{proposition}
  The relation $\ge$ is reflexive and transitive. We write $\kw{tr} : \impl{\Sigma'', \Sigma', \Sigma : \kw{Sig}} \Sigma'' \ge \Sigma' \to \Sigma' \ge \Sigma \to \Sigma'' \ge \Sigma$ 
  for the proof of transitivity. 
\end{proposition}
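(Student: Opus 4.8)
The plan is to treat reflexivity and transitivity separately, the former being immediate and the latter requiring a nested induction. Reflexivity is witnessed directly by the constructor $\kw{refl}$, which gives $\Sigma \ge \Sigma$ for every signature $\Sigma$, so nothing further is needed there.

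For transitivity I would proceed by induction on the \emph{first} derivation $p : \Sigma'' \ge \Sigma'$, keeping $\Sigma$ and the second derivation $q : \Sigma' \ge \Sigma$ general, so that the induction hypothesis quantifies over all $\Sigma$ and all proofs of $\Sigma' \ge \Sigma$. The $\kw{refl}$ case is trivial: there $\Sigma'' = \Sigma'$, so $q$ itself already proves $\Sigma'' \ge \Sigma$. The $\kw{extend}$ case is equally direct: if $p = \kw{extend}(r)$ with $\Sigma'' = A :: \Sigma_2$ and $r : \Sigma_2 \ge \Sigma'$, then the induction hypothesis applied to $r$ and $q$ yields $\Sigma_2 \ge \Sigma$, and one further application of $\kw{extend}$ gives $A :: \Sigma_2 \ge \Sigma$ as required.

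The only case with content is $\kw{mono}$, and this is where I expect the main obstacle to lie, since it forces an inversion on the second derivation $q$. Here $p = \kw{mono}(r)$ with $\Sigma'' = A :: \Sigma_2$, $\Sigma' = A :: \Sigma_1$, and $r : \Sigma_2 \ge \Sigma_1$, and we must produce $A :: \Sigma_2 \ge \Sigma$ from $q : A :: \Sigma_1 \ge \Sigma$. I would then case-analyze $q$: if $q = \kw{refl}$ then $\Sigma = A :: \Sigma_1$ and $\kw{mono}(r)$ closes the goal; if $q = \kw{mono}(s)$ with $\Sigma = A :: \Sigma_0$ and $s : \Sigma_1 \ge \Sigma_0$, then the induction hypothesis on $r$ and $s$ gives $\Sigma_2 \ge \Sigma_0$, which $\kw{mono}$ lifts to $A :: \Sigma_2 \ge A :: \Sigma_0$; and if $q = \kw{extend}(s)$ with $s : \Sigma_1 \ge \Sigma$, then the induction hypothesis gives $\Sigma_2 \ge \Sigma$, which $\kw{extend}$ promotes to $A :: \Sigma_2 \ge \Sigma$.

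The crux is thus the combination of structural induction on $p$ with a subsidiary case split on $q$ inside the $\kw{mono}$ branch; because \calf{} and its extensions are extensional type theories, this inversion on the inductively-defined relation $\ge$ is unproblematic, and each subcase closes by a single reapplication of $\kw{mono}$ or $\kw{extend}$ to the output of the induction hypothesis. I would define $\kw{tr}$ by exactly this recursion.
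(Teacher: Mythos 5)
Your proof is correct: reflexivity is immediate from $\kw{refl}$, and your definition of $\kw{tr}$ by structural recursion on the first derivation---with a subsidiary case analysis on the second derivation inside the $\kw{mono}$ branch---covers every case and is well-founded, since each recursive call ($\kw{tr}(r,q)$ in the $\kw{extend}$ case, $\kw{tr}(r,s)$ in both $\kw{mono}$ subcases) takes a strict subderivation of the first argument. The paper states this proposition without any proof, treating it as routine, so there is no argument to compare against; yours is the standard one and fills that gap correctly.
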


\paragraph{Action of the preorder of signatures}

First, we show that one may shift assignables along a signature extension: 
\begin{proposition}
  There is a map $\kw{sh} : \impl{\Sigma, \Sigma'} \Sigma' \ge \Sigma \to \Nat \to \Nat$  
\end{proposition}

\begin{proposition}
  There is a map $\Uparrow : \impl{\Sigma, \Sigma', \Gamma, A} \Sigma' \ge \Sigma \to \TmMA(\Sigma, \Gamma, A) \to \TmMA(\Sigma', \Gamma, A)$.  
\end{proposition}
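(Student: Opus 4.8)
The plan is to define $\Uparrow$ by mutual structural recursion on the derivations of $\TmMA$ and $\Cmd$, generalizing over all signatures $\Sigma, \Sigma'$ and proofs $w : \Sigma' \ge \Sigma$ simultaneously (so that under the declaration former, where $w$ must change, the recursive calls remain available at the appropriate signatures). Because the encoding is intrinsic, the substance of the argument is really the verification that each shifted syntax tree is still well-typed in $\Sigma'$; the tree itself is obtained simply by pushing $\Uparrow_w$ through every constructor.

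First I would dispatch the cases that do not mention assignables. For the context-only former $\code{var}$ and the purely structural formers $\code{triv}, \code{zero}, \code{suc}, \code{ifz}, \code{tt}, \code{ff}, \code{lam}, \code{ap}, \code{cmd}, \code{ret}, \code{bnd}$, the signature appears unchanged in every premise --- note that $\code{lam}$ and $\code{bnd}$ extend the \emph{context}, not the signature --- so I apply $\Uparrow_w$ recursively to each immediate subterm and reassemble with the same constructor, keeping $w$ fixed. These cases are entirely routine.

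The interesting cases are the store operations. For $\code{get}[n]$, $\code{set}[n](e)$, and $\code{while}[n](m)$ I would replace the location index $n$ by $\kw{sh}(w, n)$ (recursing on $e$, resp.\ $m$, with $w$). To see that the result typechecks I need the key coherence property of $\kw{sh}$: that it preserves signature lookup, i.e.\ if $\Sigma[n] = (A, p)$ then $\Sigma'[\kw{sh}(w, n)] = (A, p)$. This transports the side conditions $\Sigma[n] = (A, -)$ and $\Sigma[n] = (\code{bool}, -)$ of the premises in $\Sigma$ to the corresponding premises in $\Sigma'$, which is exactly what the typing rules in \cref{fig:ma-statics} demand. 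For $\code{dcl}(e, m)$, which declares a fresh assignable, I recurse on $e$ with $w$ but recurse on $m$ with the extended proof $\kw{mono}(w) : (A, A_{\kw{pos}})::\Sigma' \ge (A, A_{\kw{pos}})::\Sigma$, matching the fact that the body $m$ is typed in the extended signature.

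The main obstacle is precisely this coherence of $\kw{sh}$ with signature lookup and with the constructors of $\ge$. I expect $\kw{sh}$ to have been defined so that $\kw{sh}(\kw{mono}(w), 0) = 0$ and $\kw{sh}(\kw{mono}(w), n{+}1) = \kw{sh}(w, n){+}1$ (a freshly declared location stays at index $0$ while older ones shift uniformly), and so that $\kw{sh}(\kw{extend}(w), n) = \kw{sh}(w, n){+}1$. Establishing lookup-preservation is then a short induction on $w$ using these defining equations, and it is the one genuinely load-bearing lemma; once it is in hand, every store-operation case typechecks mechanically and the mutual recursion goes through.
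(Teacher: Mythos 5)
Your proposal matches the paper's own proof: the paper defines $\Uparrow$ by exactly this mutual structural recursion on expressions and commands, pushing the shift through every constructor, replacing assignable indices $n$ by $\kw{sh}(p, n)$ in $\code{get}$, $\code{set}$, and $\code{while}$ (with the same defining clauses for $\kw{sh}$ that you anticipate), and recursing under $\code{dcl}$ with the extended proof $\kw{mono}(p)$. The only difference is that you explicitly isolate and discharge the lookup-preservation property of $\kw{sh}$ needed for the intrinsically-typed definition to typecheck, a coherence obligation the paper's proof leaves implicit.
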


\begin{proposition} 
  There is a map $\Uparrow : \impl{\Sigma, \Sigma', \Gamma, A} \Sigma' \ge \Sigma \to \Cmd(\Sigma, \Gamma, A)\to \Cmd(\Sigma', \Gamma, A)$.  
\end{proposition}

\subsubsection{Substitution}

\begin{definition}[Substitution]\label{def:subst-MA}
  Given a signature $\Sigma$, a substitution from $\Gamma$ to $\Gamma'$ is defined as 
  $\kw{Sub}_\Sigma(\Gamma, \Gamma') \coloneqq (A : \kw{Ty}) \to \kw{Var}(\Gamma, A) \to \kw{Tm}(\Sigma, \Gamma', A)$.
\end{definition}

The action of the preorder of signatures extends to substitions: 
\begin{proposition}
  There is a map ${\Uparrow} : \impl{\Sigma, \Sigma', \Gamma, \Gamma'} \Sigma' \ge \Sigma \to \kw{Sub}_\Sigma(\Gamma, \Gamma') \to \kw{Sub}_{\Sigma'}(\Gamma, \Gamma')$. 
\end{proposition}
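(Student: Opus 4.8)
The plan is to reduce the statement to the action of the signature preorder on terms that was established in the preceding proposition. First I would unfold both sides of \cref{def:subst-MA}: an element of $\kw{Sub}_\Sigma(\Gamma, \Gamma')$ is a map sending each $A : \kw{Ty}$ and $v : \kw{Var}(\Gamma, A)$ to a term of $\TmMA(\Sigma, \Gamma', A)$, whereas an element of $\kw{Sub}_{\Sigma'}(\Gamma, \Gamma')$ sends the same data to a term of $\TmMA(\Sigma', \Gamma', A)$. The only discrepancy between the two is the signature index appearing in the codomain of the term, which must be transported from $\Sigma$ to $\Sigma'$.

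Given a witness $w : \Sigma' \ge \Sigma$ and a substitution $\sigma : \kw{Sub}_\Sigma(\Gamma, \Gamma')$, I would define the transported substitution pointwise by postcomposition with the term-level action, setting $(\Uparrow^w \sigma)(A, v) \coloneqq \Uparrow^w(\sigma(A, v))$, where the outer $\Uparrow^w$ is the map on terms supplied by the earlier proposition. Since $\sigma(A, v) : \TmMA(\Sigma, \Gamma', A)$ and the term-level action has type $\Sigma' \ge \Sigma \to \TmMA(\Sigma, \Gamma', A) \to \TmMA(\Sigma', \Gamma', A)$, the result inhabits $\TmMA(\Sigma', \Gamma', A)$ as required; abstracting again over $A$ and $v$ then yields an element of $\kw{Sub}_{\Sigma'}(\Gamma, \Gamma')$.

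I do not anticipate any genuine obstacle here: the construction is immediate once the action on terms is in hand, and no induction over the structure of $\sigma$ is needed, since substitutions are already packaged as functions out of $\kw{Var}$ rather than as an inductively generated datatype. The only place one might wish to do more work is in establishing \emph{functoriality} of the resulting action (i.e. that $\Uparrow^{\kw{refl}}$ is the identity and that it commutes with $\kw{tr}$ from \cref{sec:preorder}); this would follow directly from the corresponding functoriality laws for the term-level action, applied pointwise, but it is not demanded by the statement as given and so I would omit it.
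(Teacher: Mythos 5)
Your construction is correct and is precisely the argument the paper intends (the paper in fact leaves this proposition without an explicit proof, as it is immediate): since $\kw{Sub}_\Sigma(\Gamma, \Gamma')$ is by \cref{def:subst-MA} a function type valued in $\TmMA(\Sigma, \Gamma', A)$, the action on substitutions is obtained pointwise by postcomposing with the term-level map $\Uparrow$ from the preceding proposition. Nothing further is required, and your remark that functoriality laws are not demanded by the statement is also accurate.
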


We write the following for the application of a subsititution to an expression and a command:
\begin{align*}
-[-] &: \impl{\Sigma, \Gamma, A} \TmMA(\Sigma, \Gamma, A) \to \kw{Sub}_\Sigma(\Gamma, \Gamma') \to  \TmMA(\Sigma, \Gamma', A)\\
-[-] &: \impl{\Sigma, \Gamma, A} \Cmd(\Sigma, \Gamma, A) \to \kw{Sub}_\Sigma(\Gamma, \Gamma') \to  \Cmd(\Sigma, \Gamma', A)
\end{align*}

\subsection{Operational semantics of \MA{}}

The operational semantics of \MA{} is defined separately for expressions and commands. 
Expressions execute via substitution as in \STLC{}: 
\begin{align*}
  \kw{val} &: \impl{\Sigma, A}  \kw{Pg}(\Sigma, A) \to \tpv\\ 
  {\mapsto} &: \impl{\Sigma, A} \kw{Pg}(\Sigma, A) \to \kw{Pg}(\Sigma, A) \to \tpv 
\end{align*}
In the above, we defined closed expressions as $\kw{Pg}(A) \coloneqq \TmMA(\Sigma, \kw{nil}, A)$. 
Define $\kw{Val}(\Sigma, A) \coloneqq \Sigma e : \kw{Pg}(\Sigma, A).\; \kw{val}(e)$. 
In contrast to expressions, commands execute in conjunction with a 
\emph{store} that contains values associated to the declared assignables. More explicitly, we may 
define a store as the following family indexed in a signature: 
\begin{align*} 
  \kw{store} &: \kw{Sig} \to \tpv\\ 
  \kw{emp} &: \kw{Store}(\kw{nil})\\ 
  \kw{extend} &: \impl{\Sigma, A} (p:\kw{pos}(A)) \to \kw{Val}(\kw{nil}, A) \to \kw{Store}(\Sigma) \to \kw{Store}((A, p)::\Sigma)
\end{align*}
A \emph{state} is composed of a store and command, defined as 
$\kw{State}(\Sigma, A) \coloneqq \kw{Store}(\Sigma) \times \kw{Cmd}(\Sigma, A)$, and 
we have the following judgments governing the dynamics of states: 
\begin{align*} 
  \kw{final} &: \impl{\Sigma, A}  \kw{State}(\Sigma, A) \to \tpv\\ 
  {\Mapsto} &: \impl{\Sigma, A} \kw{State}(\Sigma, A) \to \kw{State}(\Sigma, A) \to \tpv 
\end{align*}

\begin{figure}
  \begin{mathpar}
    \inferrule{
      \mu[n] = \code{ff}
    }{
      (\mu, \code{while}[n](m)) \Mapsto (\mu, \code{ret}{\code{\triv}})
    }

    \inferrule{
      \mu[n] = \code{tt}
    }{
      (\mu, \code{while}[n](m)) \Mapsto (\mu, \code{bnd}(\code{cmd}(m), \kw{wk}(\code{while}[n](m))))
    }
  \end{mathpar}
  \label{fig:dynamics-MA}
  \caption{Selected rules for the one-step transition of commands of \MA{}; 
  here $\kw{wk}$ denotes weakening of a term.}
\end{figure}
As an example, we give the rules for \code{while} in \cref{fig:dynamics-MA}.  
For brevity we have suppressed the definitions of the other commands; 
the complete definition may be found in \citet{harper:2012:pfpl}. 
As usual we define evaluation of expressions as 
$e \Downarrow_{\kw{exp}} v \coloneqq e \mapsto^* v \times \kw{val}(v)$ and 
evaluation of commands as 
$(\mu, m) \Downarrow_{\kw{cmd}} (\mu', m') \coloneqq (\mu, m) \Mapsto^* (\mu', m') \times \kw{final}(m')$. 

\subsubsection{Phase-separated operational semantics}

Similar to the case for \STLC{}, in order to state and prove adequacy, 
we will need a phase-separated version of the transition relation for both expressions and 
commands where the evaluation cost is \emph{sealed} by the closed modality: 
\begin{align*}
  {\mapsto_{\ExtOpn}} &:\impl{\Sigma, A} \kw{Pg}(\Sigma, A) \to \Cl \Nat \to \kw{Pg}(\Sigma, A) \to \tpv\\
  {\Mapsto_{\ExtOpn}} &:\impl{\Sigma, A} \kw{State}(\Sigma, A) \to \Cl \Nat \to \kw{State}(\Sigma, A) \to \tpv
\end{align*}
Using these phase-separated relations we may define phase-separated evaluation for both 
expressions and commands as 
$e \evalexp^{c} v \coloneqq e \mapsto_{\ExtOpn{}}^c v \times \kw{val}(v)$
and 
$(\mu, m) \evalcmd^c (\mu', m') \coloneqq (\mu, m) \Mapsto_{\ExtOpn}^c (\mu', m') \times \kw{final}(m')$
respectively. As before, phase-separated evaluation restricts to ordinary evaluation in the 
extensional phase: 
\begin{proposition}
  Given $u : \ExtOpn$ and $c : \Nat$, we have that $e \evalexp^{\eta_{\Cl} c} v$ is 
  equivalent to $e \Downarrow v$ and that $(\mu, m) \evalcmd^{\eta_{\Cl} c} (\mu', m')$ is 
  equivalent to $(\mu, m) \Downarrow_{\kw{cmd}} (\mu', m')$. 
\end{proposition}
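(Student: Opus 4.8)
The plan is to prove the two equivalences independently---one for expressions, one for commands---each by exactly the schema already used for \STLC{} in \cref{prop:eval-ext-equiv}, exploiting the fact that $\mapsto_{\ExtOpn}$ and $\Mapsto_{\ExtOpn}$ are by construction cost-tracking reflexive-transitive closures of the plain one-step relations $\mapsto$ and $\Mapsto$. Since $e \evalexp^{\eta_{\Cl} c} v = (e \mapsto_{\ExtOpn}^{\eta_{\Cl} c} v) \times \kw{val}(v)$ and $e \Downarrow_{\kw{exp}} v = (e \mapsto^* v) \times \kw{val}(v)$ share the common predicate $\kw{val}$, and likewise the command evaluations share $\kw{final}$, it suffices to establish, under the hypothesis $u : \ExtOpn$, that for every $c : \Nat$ the family $e \mapsto_{\ExtOpn}^{\eta_{\Cl} c} v$ is equivalent to $e \mapsto^* v$, and correspondingly that $(\mu, m) \Mapsto_{\ExtOpn}^{\eta_{\Cl} c} (\mu', m')$ is equivalent to $(\mu, m) \Mapsto^* (\mu', m')$; tensoring each such equivalence with the shared value/final component then yields the stated evaluation equivalences.

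The one modal ingredient I would isolate first is that, given $u : \ExtOpn$, the closed modality collapses: from $\Op \Cl \Nat \cong 1$ the type $\Cl \Nat$ is contractible in the extensional phase, so any two sealed costs are identified, $\eta_{\Cl} c = \eta_{\Cl} c'$. This is precisely the identification already deployed at the end of the proof of \cref{coro:cost-adequacy}, where $\eta_{\Cl} c' = \eta_{\Cl} c$ follows from $\Cl$ being a lex monad. Consequently the cost annotation carried by $\mapsto_{\ExtOpn}$ conveys no information once $u$ is in scope, and the constructors $\kw{refl}$ and $\kw{trans}$ of $\mapsto_{\ExtOpn}$ are placed in evident bijection with those of $\mapsto^*$. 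With this in hand the forward direction proceeds by induction on the derivation of $e \mapsto_{\ExtOpn}^{\eta_{\Cl} c} v$, simply discarding the sealed index: $\kw{refl}$ maps to $\kw{refl}$, and $\kw{trans}$---which supplies a step $e \mapsto e_1$ together with a tail $e_1 \mapsto_{\ExtOpn}^{c'} v$---maps to $\kw{trans}$ after applying the inductive hypothesis to the tail; this direction does not even require $u$. For the backward direction I would induct on $e \mapsto^* v$: the $\kw{refl}$ case produces $e \mapsto_{\ExtOpn}^{\eta_{\Cl} 0} v$ and the $\kw{trans}$ case assembles $e \mapsto_{\ExtOpn}^{\eta_{\Cl} c'' \mathrel{(\Cl +)} \eta_{\Cl} 1} v$ from the inductive hypothesis, and in each case I transport along the identification $\eta_{\Cl}(-) = \eta_{\Cl} c$ afforded by contractibility of $\Cl \Nat$ so as to hit the requested index $\eta_{\Cl} c$.

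The command clause is then handled by the identical argument on $\Mapsto_{\ExtOpn}$ versus $\Mapsse{}$---more precisely $\Mapsto^*$---and although the one-step state relation $\Mapsto$ dispatches internally to expression evaluation in rules such as $\code{set}$ and $\code{bnd}$, this is irrelevant to the closure equivalence, which only manipulates the reflexive-transitive structure and its sealed step count and never inspects the shape of individual steps. The only genuinely delicate point is the cost-index bookkeeping in the backward direction, and I expect it to be entirely dissolved by the contractibility of $\Cl \Nat$ in the extensional phase; the remaining obligation is merely to check that the transports compose coherently, which follows from $\Cl$ being a lex monad as used throughout \cref{sec:adequacy-STLC}, so I anticipate no essential obstacle beyond this routine verification.
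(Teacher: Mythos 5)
Your proof is correct and is the evident intended argument: the paper states this proposition without proof (as it does for its \STLC{} analogue, \cref{prop:eval-ext-equiv}), and your reduction to the underlying reflexive-transitive closures, with the forward direction by index-erasing induction and the backward direction by induction plus transport along the contractibility of $\Cl\Nat$ under $u : \ExtOpn$, is exactly what is needed. One small correction: the closing appeal to $\Cl$ being a lex monad is superfluous here --- under $u : \ExtOpn$ the contractibility of $\Cl\Nat$ already identifies any two sealed indices and any two such transports; lex-ness is only needed in \cref{coro:cost-adequacy} to pass from $\Cl(c = c')$ to $\eta_{\Cl}c = \eta_{\Cl}c'$ \emph{outside} the extensional phase.
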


\subsection{A denotational model for \MA{}}\label{sec:model-MA}

As mentioned in \cref{sec:preorder}, our denotational semantics of \MA{} is based on 
a possible-worlds model of allocation. 
Consequently, we interepret (closed) commands as 
\emph{families} of functions that may be executed on any future semantic store
(according the the preorder of signatures). 
In a language with higher-order store this definition becomes circular because  
the interpretation of signatures depends on \emph{all} types and so in particular command types; 
consequently, more refined techniques such as (abstract) step-indexing 
\citep{bmss:2011} is required to resolve the circularity of the definition. 
Because \MA{} only admits first-order store, we may  
bootstrap the definition by first defining the meaning of strictly positive types, 
which is independent of the interpretation of signatures (see \cref{fig:types-pos}). 
The definition of types is then allowed to reference the meaning of signatures: 
\begin{align*}
  \semtyMA{-} &: \TyMA \to \kw{Sig} \to \tpv\\ 
  \semtyMA{\code{unit}}(\Sigma) &= 1\\
  \semtyMA{\code{bool}}(\Sigma) &= \kw{bool}\\
  \semtyMA{\code{nat}}(\Sigma) &= \Nat\\
  \semtyMA{A_1 \Rightarrow A_2}(\Sigma) &= \UU{(\Sigma' : \kw{Con}) \to \Sigma' \ge \Sigma \to \semtyMA{A_1}(\Sigma') \to \F{\semtyMA{A_2}(\Sigma')}}\\ 
  \semtyMA{\kw{cmd}(A)}(\Sigma) &= \UU{(\Sigma' : \kw{Con}) \to \Sigma' \ge \Sigma \to \semsigMA{\Sigma'} \to \Lift{\semtyMA{A}(\Sigma') \times \semsigMA{\Sigma'}}}
\end{align*}

\begin{figure}
  \begin{minipage}{0.3\textwidth}
    \begin{align*} 
      \semtyPosMA{-} &: \kw{Pos} \to \tpv\\ 
      \semtyPosMA{(\code{unit}, \code{unit}/\kw{base})} &= 1\\
      \semtyPosMA{(\code{bool}, \code{bool}/\kw{base})} &= \kw{bool}\\
      \semtyPosMA{(\code{nat}, \code{nat}/\kw{base})} &= \Nat
    \end{align*} 
  \end{minipage}
  \begin{minipage}{0.3\textwidth}
    \begin{align*}
      \semsigMA{-} &: \kw{Sig} \to \tpv\\ 
      \semsigMA{\cdot} &= 1\\ 
      \semsigMA{A :: \Sigma} &= \semtyPosMA{A} \times \semsigMA{\Sigma} 
    \end{align*}
  \end{minipage}
  \begin{minipage}{0.3\textwidth}
    \begin{align*}
      \semconMA{-} &: \kw{Con} \to \kw{Sig} \to \tpv\\  
      \semconMA{\cdot}(\Sigma) &= 1\\ 
      \semconMA{A::\Gamma}(\Sigma) &= \semtyMA{A}(\Sigma) \times \semconMA{\Sigma}
    \end{align*}
  \end{minipage}
  \label{fig:types-pos}
  \caption{Top left: the interpretation of strictly positive types; top right: interpretation of signatures;
  bottom: interpretation of contexts.}
\end{figure}

Similar to the action of preorders at the syntactic level, 
we may define an analogous action on the interpretation of types: 
\begin{proposition} 
  There is a map 
  $\kw{up} : \impl{A, \Sigma', \Sigma} \Sigma' \ge \Sigma \to \semtyMA{A}(\Sigma) \to \semtyMA{A}(\Sigma')$. 
\end{proposition}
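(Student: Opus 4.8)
The plan is to define $\kw{up}$ by structural recursion on the type $A : \TyMA{}$, following the shape of the interpretation $\semtyMA{-}$. The key observation is that the interpretation has been arranged so that the monotonicity action is essentially \emph{free}: at the ground types it is the identity, and at the higher types it amounts to precomposing a future-world-indexed family with the transitivity proof $\kw{tr}$.

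First I would dispatch the base cases $\code{unit}$, $\code{bool}$, and $\code{nat}$. Here $\semtyMA{A}(\Sigma)$ is $1$, $\kw{bool}$, and $\Nat$ respectively, none of which mention $\Sigma$; accordingly, for any $p : \Sigma' \ge \Sigma$ the map $\kw{up}(p)$ is simply the identity. Next I would treat the function type $A_1 \Rightarrow A_2$, whose interpretation at $\Sigma$ is the thunk type $\UU{(\Sigma'' : \kw{Con}) \to \Sigma'' \ge \Sigma \to \semtyMA{A_1}(\Sigma'') \to \F{\semtyMA{A_2}(\Sigma'')}}$. Given such an $f$ and $p : \Sigma' \ge \Sigma$, I would force $f$, reindex along $p$, and re-thunk: concretely $\kw{up}(p)(f)$ is the thunk of $\lambda \Sigma''.\; \lambda q : \Sigma'' \ge \Sigma'.\; \lambda a.\; f(\Sigma'', \kw{tr}(q, p), a)$, where $\kw{tr}(q, p) : \Sigma'' \ge \Sigma$ manufactures the proof that $f$ expects out of the hypothesis $q : \Sigma'' \ge \Sigma'$ and the given $p$. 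The command type $\kw{cmd}(A)$ is handled identically, with $\semsigMA{\Sigma''}$ and $\Lift{-}$ playing the roles of the domain and $\F{-}$; once more the action is precomposition of the future-world quantification with $\kw{tr}(-, p)$.

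The point worth emphasizing --- and what keeps the construction from being anything more than bookkeeping --- is that at the function and command cases the inductive hypotheses on $A_1$ and $A_2$ are \emph{not} needed: both the domain $\semtyMA{A_1}(\Sigma'')$ and the codomain $\F{\semtyMA{A_2}(\Sigma'')}$ are already evaluated at the same world $\Sigma''$ before and after reindexing, so the components of the element are left untouched and only the quantified proof is rerouted. Consequently the sole genuine content is the transitivity composition supplied by $\kw{tr}$, and the only care required is threading the $\UU{-}$/force adjunction correctly when moving between thunks and their underlying world-indexed computations. I expect this adjunction bookkeeping to be the one mild obstacle, rather than any real mathematical difficulty; functoriality of $\kw{up}$ (its compatibility with $\kw{refl}$ and $\kw{tr}$), should it be required in later developments such as \cref{sec:adequacy-MA}, would then follow from the corresponding unit and associativity laws of $\kw{tr}$.
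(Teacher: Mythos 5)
Your construction is correct and is exactly the one the paper intends (it states this proposition without proof): the identity at the ground types $\code{unit}$, $\code{bool}$, $\code{nat}$, and at $A_1 \Rightarrow A_2$ and $\code{cmd}(A)$ precomposition of the Kripke-quantified proof with $\kw{tr}(q, p)$, with no use of the inductive hypotheses since both domain and codomain are evaluated at the same future world $\Sigma''$. This matches how the paper itself uses the action, e.g.\ the $\kw{tr}(p', p)$ and $\kw{up}(p', \gamma')$ pattern in $\semexpMA{\code{lam}(e)}$, so there is nothing to add.
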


\begin{corollary}
  There is a map 
  $\kw{up} : \impl{\Gamma, \Sigma', \Sigma} \Sigma' \ge \Sigma \to \semconMA{\Gamma}(\Sigma) \to \semconMA{\Gamma}(\Sigma')$. 
\end{corollary}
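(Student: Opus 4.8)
The plan is to proceed by structural induction on the context $\Gamma$, constructing the map $\kw{up}$ componentwise and delegating the genuine content at each type to the preceding proposition (the action of the preorder on the interpretation of types). Throughout, fix signatures $\Sigma, \Sigma'$ together with a proof $p : \Sigma' \ge \Sigma$.

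In the base case $\Gamma = \cdot$, the interpretation $\semconMA{\cdot}(\Sigma) = 1$ is the unit type regardless of the signature, so the required map $\semconMA{\cdot}(\Sigma) \to \semconMA{\cdot}(\Sigma')$ is just the identity on $1$. For the inductive step, take $\Gamma = A :: \Gamma'$ and assume as induction hypothesis that we already have the desired map on $\Gamma'$. By definition, $\semconMA{A :: \Gamma'}(\Sigma) = \semtyMA{A}(\Sigma) \times \semconMA{\Gamma'}(\Sigma)$, so an element is a pair $(a, \gamma)$ with $a : \semtyMA{A}(\Sigma)$ and $\gamma : \semconMA{\Gamma'}(\Sigma)$. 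I transport the head using the proposition on types, obtaining $\kw{up}(p, a) : \semtyMA{A}(\Sigma')$, and I transport the tail using the induction hypothesis, obtaining $\kw{up}(p, \gamma) : \semconMA{\Gamma'}(\Sigma')$. Pairing these yields an element of $\semtyMA{A}(\Sigma') \times \semconMA{\Gamma'}(\Sigma') = \semconMA{A :: \Gamma'}(\Sigma')$, which is exactly the output we need.

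Since the interpretation of contexts is a finite product of type interpretations and the construction is a direct structural recursion that merely applies the type-level action pointwise, there is no real obstacle here: the substantive work lives entirely in the preceding proposition, and this corollary is a mechanical lift of that result along the product structure of $\semconMA{-}$.
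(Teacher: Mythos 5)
Your proof is correct and matches the paper's intent: the paper states this as an immediate corollary of the preceding proposition precisely because the only argument needed is the structural induction on $\Gamma$ you give, applying the type-level $\kw{up}$ pointwise across the product structure of $\semconMA{-}$ (identity at the empty context, pairing at a cons). Nothing is missing.
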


\subsubsection{The model}

In \cref{fig:model-exp,fig:model-cmd} we present the denotational semantics of 
expression and commands of \MA{}. 
The expression-level denotational semantics for \MA{} is defined in a similar style to \cref{sec:STLC}; 
reified commands are defined using the mutually recursive interpretation of commands. 
Commands are defined using a possible-worlds semantics of first-order store. 
Observe that both the interpretation of expressions and commands at a world (\ie{} signature) 
$\Sigma$ are paramterized by a \emph{future world} $\Sigma' \ge \Sigma$. 
Consequently commands of a type $A$ 
are \emph{families} of \emph{lifted} transformations 
$\semtyMA{A}(\Sigma') \to \Lift{\semtyMA{A}(\Sigma') \times \semsigMA{\Sigma'}}$ 
of semantic stores that also produces a value of type $A$ (at the extended signature). 
In contrast, because expressions are total and cannot modify the store, they are simply interpreted 
as \emph{free} computations of the given type. 
As in the case for the \STLC{}, we must insert \kw{step}'s in both the interpretation of 
expressions and commands when $\beta$-reductions occur in the operational semantics. 

Here, we highlight the the fact that the while loop of \MA{} is interpreted as a ``while loop'' in 
\calfiter{}, a feature typical of synthetic denotational semantics. As we shall see in \cref{sec:adequacy-MA},
this synthetic interpretation allows for a somewhat involved but elementary proof of 
computational adequacy. 

\begin{small}
\begin{figure}
  \begin{align*}
    \semvar{-} &: \impl{\Sigma, \Gamma, A} \kw{Var}(\Gamma, A) \to 
    \to (\Sigma' : \kw{Sig}) \to \Sigma' \ge \Sigma \to \semconMA{\Gamma}(\Sigma') \to \semty{A}(\Sigma')\\ 
    \semvar{\code{here}}(\Sigma', p, \gamma') &= \pi_1(\gamma')\\ 
    \semvar{\code{next}(v)}(\Sigma', p, \gamma') &= (\semtm{v}(\Sigma', p) \circ \pi_2) \gamma' \\\\ 
    \semexpMA{-} &: \impl{\Sigma, \Gamma, A} \TmMA(\Sigma, \Gamma, A) \to (\Sigma' : \kw{Sig}) \to 
    \Sigma' \ge \Sigma \to \semconMA{\Gamma}(\Sigma') \to \F{\semtyMA{A}(\Sigma')}\\ 
    \semexpMA{\code{var}(v)} &= \kw{ret} \circ \semvar{v}\\ 
    \semexpMA{\code{tt}}(\Sigma', p, -) &= \ret{\kw{tt}}\\ 
    \semexpMA{\code{ff}}(\Sigma', p, -) &= \ret{\kw{ff}}\\ 
    \semexpMA{\code{zero}}(\Sigma', p, -) &= \ret{0}\\ 
    \semexpMA{\code{suc}}(\Sigma', p, \gamma') &= \bind{\semexpMA{e}(\Sigma', p, \gamma')}{\lambda n : \Nat.\; \ret{n+1}}\\ 
    \semexpMA{\code{ifz}(e, e_1, e_2)}(\Sigma', p, \gamma') &=
      n \leftarrow \semexpMA{e}(\Sigma', p, \gamma');\\ 
      &\kw{if}(n, \mstepAlert{1}{\semexpMA{e_1}(\Sigma', p, \gamma')}, \lambda n'.\; \mstepAlert{1}{\semexpMA{e_2}(\Sigma', p, (n', \gamma'))})\\
    \semexpMA{\code{lam}(e)}(\Sigma', p, \gamma') &= \ret{\lambda \Sigma'', p', \lambda a : \semtyMA{A_1}(\Sigma'').\; 
      \semexpMA{e}(\Sigma'', \kw{tr}(p', p), (a, \kw{up}(p', \gamma')))}\\ 
    \semexpMA{\code{ap}(e, e_1)}(\Sigma', p, \gamma') &= 
      f \leftarrow \semexpMA{e}(\Sigma', p, \gamma'); 
      a \leftarrow \semexpMA{e_1}(\Sigma', p, \gamma);
      \mstepAlert{1}{f(\Sigma', \kw{refl}, a)} \\
    \semexpMA{\code{cmd}(m)}(\Sigma', p, \gamma') &= 
    \ret{\lambda \Sigma'', p', (\sigma'' : \semsigMA{\Sigma''}).\; \semcmdMA{m}(\Sigma'', \kw{tr}(p', p), \kw{up}(p', \gamma'), 
    \sigma'')}
  \end{align*}  
  \caption{The interpretation of expressions of \MA{}. }
  \label{fig:model-exp}
\end{figure}
\end{small}

\begin{small}
\begin{figure}
  \begin{align*}
    \semcmdMA{-} &: \impl{\Sigma, \Gamma, A} \TmMA(\Sigma, \Gamma, A) \to (\Sigma' : \kw{Sig}) \to \Sigma' \ge \Sigma \to\\
     &\semconMA{\Gamma}(\Sigma') \to \semsigMA{\Sigma'} \to
     \Lift{\semtyMA{A}(\Sigma') \times \semsigMA{\Sigma'}}\\  
    \semcmdMA{\code{ret}(a)}(\Sigma', p, \gamma', \sigma') &= 
    \bind{\semexpMA{a}(\Sigma', p, \gamma')}{\lambda a : \semty{A}(\Sigma').\; \retl{a, \sigma'}}\\ 
    \semcmdMA{\code{bnd}(e, m)}(\Sigma', p, \gamma', \sigma') &= 
      m_1 \leftarrow_{\kw{L}} \kw{lift}(\semexpMA{e}(\Sigma', p, \gamma'));
      (a, \sigma_1) \leftarrow_{\kw{L}} m_1(\Sigma', \kw{refl}, \sigma');\\
      &\mstepAlert{1}{\semcmdMA{m}(\Sigma', p, (a, \gamma'), \sigma_1)}\\ 
    \semcmdMA{\code{while}[n](m)}(\Sigma', p, \gamma', \sigma') &= \kw{iter}(g)(\sigma') \; \textbf{where}\\ 
    g &: \semsigMA{\Sigma'} \to \Lift{(1 \times \semsigMA{\Sigma'}) + \semsigMA{\Sigma'}}\\ 
    g(\sigma) \textbf{ with } \sigma[n] &\\
    \dots \mid \kw{ff} &= \mstepAlert{1}{\retl{\inl{\triv, \sigma}}}\\ 
    \dots \mid \kw{tt} &= (-, \sigma') \leftarrow_{\kw{L}} \semcmdMA{m}(\Sigma', p, \gamma', \sigma); \mstepAlert{2}{\ret{\inr{\sigma'}}}\\ 
    \semcmdMA{\code{dcl}(e, m)} \impl{A = A} (\Sigma', p, \gamma', \sigma') &= 
      a \leftarrow_{\kw{L}} \kw{lift}(\semexpMA{e}(\Sigma', p, \gamma'));\\
      (b, (-, \sigma_1)) &\leftarrow_{\kw{L}} \semcmdMA{m}(A::\Sigma', \kw{mono}(p), \kw{up}(\kw{extend}(p), \gamma'), (a, \sigma')); 
      \mstepAlert{1}{\retl{b, \sigma_1}}\\
    \semcmdMA{\code{get}[n]}(\Sigma', p, \gamma', \sigma') &= \mstepAlert{1}{\retl{\sigma'[n], \sigma'}}\\ 
    \semcmdMA{\code{set}[n](e)}(\gamma, \sigma') &= 
      a \leftarrow_{\kw{lift}} \kw{lift}(\semexpMA{e}(\Sigma', p, \gamma', \sigma')); 
      \mstepAlert{1}{\retl{\sigma'[n], \sigma'[n \mapsto a]}}
  \end{align*}  
  \caption{The interpretation of commands of \MA{}.}
  \label{fig:model-cmd}
\end{figure}
\end{small}

\subsection{Computational adequacy}\label{sec:adequacy-MA}

In this section we prove that the denotational semantics of \MA{} defined in 
\cref{sec:model-MA} satisfies the cost-aware computational adequacy theorem from 
\cref{sec:cost-aware-adequacy} with respect to its phase-separated operational semantics. 
As in the case of the \STLC{}, we employ the method of logical relations to prove this result. 
As foreshadowed by the construction of the model in \cref{sec:model-MA}, 
we have to stage the definition of the logical relation by first defining 
the relation for strictly positive types in \cref{sec:logical-pos}, 
using this relation to define the \emph{pre}logical relation for commands in \cref{sec:prelogical-commands},
using this to define the logical relation for expressions in \cref{sec:logical-exp}, and 
finally closing the loop by defining the logical relation for commands in \cref{sec:logical-cmd}. 
We prove the fundamental theorem of the logical relation in \cref{sec:ftlr-MA}. 

\subsubsection{Logical relation for strictly positive types}\label{sec:logical-pos}

For strictly positive types, we relate the values in the semantic domain with 
their numerals in \MA{}: 
\begin{align*} 
  \textbf{Inductive}\; &{\approx} : \impl{A : \kw{Pos}} \kw{Pg}(\cdot, A) \to \semtyPosMA{A} \to \tpv \;\textbf{where}\\ 
    \code{unit}/\kw{base} &: \code{\triv} \approx_{\code{unit}} \triv\\
    \code{bool}/\kw{base} &: (b : \kw{bool}) \to \overline{b} \approx_{\code{bool}} b\\
    \code{nat}/\kw{base} &: (n : \Nat) \to \overline{n} \approx_{\code{nat}} n
\end{align*}

\subsubsection{Prelogical relation for commands}\label{sec:prelogical-commands}

Using the logical relation for strictly positive types, we may 
define the logical relation between syntatic stores and semantic stores: 
\begin{align*}
  \textbf{Inductive}\; {\sim} &: \impl{\Sigma} \kw{Store}(\Sigma) \to \semsigMA{\Sigma} \to \tpv\; \textbf{where}\\ 
    \kw{base} &: \kw{emp} \sim_{\kw{nil}} \triv\\ 
    \kw{extend} &: \impl{\Sigma, A, \code{a}, a, \mu, \sigma} \to (h : \kw{val}(\code{a})) \to 
      \code{a} \approx_{A} a \to \mu \sim_{\Sigma} \sigma \to 
       ((\code{a}, h) :: \mu) \sim_{A::\Sigma} (a, \sigma)
\end{align*}
The prelogical relation for commands is defined relative to a given relation for expressions: 
\begin{align*}  
  \kw{cmd} &: \impl{\Sigma, A}  
  (\kw{Pg}(\Sigma, A) \to \semtyMA{A}(\Sigma) \to \tpv) \to\\
  &\kw{Cmd}(\Sigma, A) \to (\semsigMA{\Sigma} \to \F{\semtyMA{A}(\Sigma) \times \semsigMA{\Sigma}}) \to \tpv\\ 
  \code{m} \mathrel{\kw{cmd}_{\Sigma, A}(R)} m &= 
  \kw{U}(\Pi \sigma, \sigma' : \semsigMA{\Sigma}, c : \Nat, a : \semtyMA{A}(\Sigma).\\ 
  &\quad m(\sigma) = \mstep{c}{\retl{a, \sigma'}} \to\\
  &\quad \Pi \mu : \kw{Store}(\Sigma) \to \mu \sim_{\Sigma} \sigma \to\\
  &\quad \Sigma \code{a} : \kw{Pg}(\Sigma, A), \mu' : \kw{Store}(\Sigma).
  (\mu, \code{m}) \evalcmd^{\eta_{\Cl} c} (\mu', \code{ret}(\code{a})) \times
  \code{a} \mathrel{R} a \times \mu' \sim_{\Sigma} \sigma')
\end{align*}

Roughly, given a relation $R$ between syntatic and semantic values, 
a syntactic command $\code{m}$ is related to a semantic command $m$ 
when given logically related syntactic and semantic stores $\mu$ and $\sigma$, 
we have that if semantically $m(\sigma)$ computes to a semantic value $v$ and store $\sigma'$ incurring some cost,
then executing $(\mu, \code{m})$ will evaluate with the same cost 
to a syntactic value $\code{v}$ such that $R(\code{v}, v)$ 
and a syntactic store $\mu'$ related to the semantic store $\sigma'$. 

\subsubsection{Logical relation for expressions}\label{sec:logical-exp}

The logical relation for expressions may be defined as in the case of \STLC{}, using the 
prelogical relation of commands defined in \cref{sec:prelogical-commands} in the case of reified commands: 
\begin{align*}
  {\approx} &: \impl{\Sigma, A} \kw{Pg}(\Sigma, A) \to \semtyMA{A}(\Sigma) \to \tpv\\ 
    \code{u} \approx_{\Sigma, \code{unit}} u &= (\code{u} = \code{\triv})\\ 
    \code{b} \approx_{\Sigma, \code{bool}} b &= (\code{b} = \overline{b})\\ 
    \code{n} \approx_{\Sigma, \code{nat}} n &= (\code{n} = \overline{n})\\ 
    \code{e} \approx_{\Sigma, A_1 \Rightarrow A_2} e &= 
    \Sigma \code{e_2} : \TmMA(\Sigma, A_1, A_2).\; \code{e} = \code{lam}(\code{e_2})\\ 
    &\quad \times \kw{U}(\Pi \Sigma'.\; \Pi p : \Sigma' \ge \Sigma. \; 
      \Pi (\code{e_1} : \kw{Pg}(\Sigma', A_1), e_1 : \semtyMA{A_1}(\Sigma')).\;\\ 
      &\quad\quad\code{e_1} \approx_{\Sigma', A_1} e_1 \to
       (\code{\uparrow} (p, \code{e}))[\code{e_1}] \approx^\Downarrow_{\Sigma', A_2} e(\Sigma', p, e_1))\\
    \code{e} \approx_{\Sigma, \code{cmd}(A)} m &= \Sigma \code{m} : \Cmd(\Sigma, A).\; 
    \code{e} = \code{cmd}(\code{m})\\
    &\times \kw{U}(\Pi \Sigma'.\; \Pi p : \Sigma' \ge \Sigma. \;  
    (\Uparrow^p \code{m}) \mathrel{\kw{cmd}_{\Sigma', A}(\approx_{\Sigma', A})} m(\Sigma', p))
\end{align*} 
Following \cref{sec:STLC}, the operator $-^\Downarrow$ lifts a relation on values to computations: 
\begin{align*}
  -^\Downarrow &: \impl{\Sigma, A} (\kw{Pg}(\Sigma, A) \to \semtyMA{A}(\Sigma) \to \tpv) \to 
                  (\kw{Pg}(\Sigma, A) \to \F{\semtyMA{A}(\Sigma)} \to \tpv)\\ 
  \code{e} \mathrel{R^\Downarrow} e &= 
  \UU{\Pi\; v : \semtyMA{A}(\Sigma), c : \Nat.\; 
    e = \mstep{c}{\ret{v}} \to 
    (\Sigma \code{v} : \kw{Pg}(\Sigma, A).\; (\code{e} \evalexp^{\eta_{\Cl} c} \code{v}) \times \code{v} \mathrel{R} v)}
\end{align*}
We lift the relation to closing instantiations of contexts $\kw{Inst}(\Sigma, \Gamma) \coloneqq \kw{Sub}_\Sigma(\Gamma, \kw{nil})$: 
\begin{align*}
  \textbf{Inductive}\; &{\approx} : \impl{\Sigma, \Gamma} \kw{Inst}(\Sigma, \Gamma) \to \semconMA{\Gamma}(\Sigma) \to \tpv\; \textbf{where}\\ 
    \kw{emp} &: \kw{emp} \approx_{\Sigma, \kw{nil}} \kw{\triv}\\ 
    \kw{cons} &: \impl{\Sigma, \Gamma, \code{\gamma}, \gamma, A, \code{a}, a} 
    \code{a} \approx_{\Sigma, A} a \to \code{\gamma} \approx_{\Sigma, \Gamma} \gamma \to 
    \kw{cons}(\code{a}, \code{\gamma}) \approx_{\Sigma, A::\Gamma} (a, \gamma)
\end{align*}

\subsubsection{Logical relation for commands}\label{sec:logical-cmd}

The logical relation for commands is obtained by instantiating the prelogical relation 
with the logical relation for expressions: 
\begin{align*}  
  {\sim} &: \impl{\Sigma, A}  
  \kw{Cmd}(\Sigma, A) \to (\semsigMA{\Sigma} \to \F{\semtyMA{A}(\Sigma) \times \semsigMA{\Sigma}}) \to \tpv\\ 
  m \sim_{\Sigma, A} c &= m \mathrel{\kw{cmd}_{\Sigma, A}(\approx_{\Sigma, A})} c
\end{align*}

\subsubsection{Fundamental theorem of logical relations for adequacy}\label{sec:ftlr-MA}

Using the axioms governing iteration and cost bounds introduced in \cref{sec:iteration}, w
we may prove the fundamental theorem of the logical relation by 
mutal induction on the derivation of expressions and commands. 
The details of the proof can be found in \cref{appendix:MA}.

\begin{theorem}[FTLR]\label{thm:ftlr-MA}
  Given an expression $e : \TmMA(\Sigma, \Gamma, A)$, if $p : \Sigma' \ge \Sigma$ 
  and $\code{\gamma'} \approx_{\Sigma', \Gamma} \gamma'$, then 
  $(\Uparrow^p e)[\code{\gamma'}] \approx^\Downarrow_{\Sigma', A} \semexpMA{e}(\Sigma', p, \gamma')$. 
  Moreover, given a command $m : \Cmd(\Sigma, \Gamma, A)$, if $p : \Sigma' \ge \Sigma$ 
  and $\code{\gamma'} \approx_{\Sigma', \Gamma} \gamma'$, then 
  $(\Uparrow^p m)[\code{\gamma'}] \sim_{\Sigma', A} \semcmdMA{m}(\Sigma', p, \gamma')$. 
\end{theorem}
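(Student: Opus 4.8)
The plan is to proceed by mutual induction on the derivations of the well-typed expression $e$ and command $m$, unfolding the relevant logical relation in each case and producing the required witness. The governing observation is that both $\approx^\Downarrow$ and the prelogical relation $\kw{cmd}(R)$ are stated \emph{hypothetically}: to establish $\code{e} \approx^\Downarrow_{\Sigma',A} \semexpMA{e}(\Sigma',p,\gamma')$ one assumes a cost decomposition $\semexpMA{e}(\Sigma',p,\gamma') = \mstep{c}{\ret{v}}$ and must exhibit a syntactic value $\code{v}$ together with $(\Uparrow^p e)[\code{\gamma'}] \evalexp^{\eta_{\Cl} c} \code{v}$ and $\code{v} \approx_{\Sigma',A} v$, and dually for commands against $\mstep{c}{\retl{(a,\sigma')}}$. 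For the base expressions $\code{tt}$, $\code{ff}$, $\code{zero}$, $\code{\triv}$, and for $\code{var}$, the denotation is a pure $\kw{ret}$ incurring no cost, so $\kw{step}/\kw{inj}$ forces $\Cl(c = 0)$ and fixes the value, and $\evalexp$ holds reflexively; the related value in the $\code{var}$ case is read off from the context relation $\code{\gamma'} \approx_{\Sigma',\Gamma} \gamma'$.

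For the remaining expression formers I would combine the inductive hypotheses with the cost-decomposition axioms. For $\code{suc}$, $\code{ifz}$, and $\code{ap}$ the interpretation is a $\kw{bind}$, possibly guarded by a $\mstep{1}{-}$ recording a $\beta$-step, so I apply $\kw{bind}^{-1}$ and $\kw{step}^{-1}$ to split the assumed cost $c$ into the costs of the subcomputations and the administrative step, invoke the IH on each subterm to obtain the corresponding operational evaluations, and concatenate them into a single $\evalexp$ derivation whose sealed cost is $\eta_{\Cl} c$ by additivity and the fact that $\Cl$ is a lex monad. The $\code{lam}$ case follows directly from the arrow clause of $\approx$: the witness is the body, and its Kripke function component is obtained by applying the IH to the body at an arbitrary future world $p' : \Sigma'' \ge \Sigma'$, using the \MA{} analogue of \cref{prop:subst} to identify the substituted term with the required instantiation. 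The encapsulation case $\code{cmd}(m)$ appeals to the command IH, and the command formers $\code{ret}$, $\code{bnd}$, $\code{get}$, $\code{set}$, and $\code{dcl}$ proceed in the same spirit: unfold the interpretation, apply $\kw{bind}_{\kw{L}}^{-1}$ and $\kw{step}_{\kw{L}}^{-1}$ to the lifted binds, thread the store relation $\sim$ through the computation, and assemble the operational derivation from the single-step rules. The $\code{dcl}$ case additionally extends $\sim$ via its $\kw{extend}$ constructor and transports the related context and store along $\kw{mono}(p)$ using the monotonicity action $\kw{up}$.

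The crux is the $\code{while}[n](m)$ command, whose denotation is $\kw{iter}(g)(\sigma')$. Here I would first note that the goal is a proposition: the loop returns at type $\code{unit}$, so the relevant value relation is a mere equality, and the phase-separated evaluation relation is deterministic. From the assumption $\kw{iter}(g)(\sigma') = \mstep{c}{\retl{(a,\sigma_f)}}$, the compactness axiom $\kw{iter}/\kw{trunc}$ supplies a truncated $k : \Nat$ with $\kw{seq}(g,k)(\sigma') = \mstep{c}{\retl{\inl{(a,\sigma_f)}}}$; as the goal is a proposition, I eliminate the truncation and perform a secondary induction on $k$. The case $k = 0$ is vacuous, since $\kw{seq}(g,0)(\sigma') = \retl{\inr{\sigma'}}$ cannot equal a left injection by \cref{prop:step-ret-lift} and disjointness of the coproduct. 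In the step case I unfold $\kw{seq}(g,k+1)$ through one application of $g$ and split on the stored guard $\sigma[n]$, which is pinned down by $\mu \sim_{\Sigma'} \sigma$: when the guard is $\code{ff}$ the loop terminates, matching the first $\code{while}$ rule of \cref{fig:dynamics-MA}; when it is $\code{tt}$ I peel off one run of the body with the IH for $m$, use $\kw{bind}_{\kw{L}}^{-1}$ to split $c$ into the body's cost and the cost of the remaining iterations, apply the inner IH at $k$, and splice the two operational segments together with the second $\code{while}$ rule, whose unfolding into $\code{bnd}(\code{cmd}(m), \kw{wk}(\code{while}[n](m)))$ mirrors exactly one $\kw{iter}/\kw{unfold}$ step.

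I expect this $\code{while}$ case to be the main obstacle, and specifically the \emph{cost bookkeeping}: one must verify that the administrative $\mstep{1}{-}$ in the $\code{ff}$ branch of $g$ and the $\mstep{2}{-}$ plus body cost in the $\code{tt}$ branch line up step-for-step with the number of $\Mapsto$ transitions taken by the operational while rules, so that the two sides agree not merely in behavior but on the sealed cost $\eta_{\Cl} c$. The second delicate point is the interaction of truncation elimination with the secondary induction: one must confirm that the conclusion really is a proposition before invoking $\kw{iter}/\kw{trunc}$, and that the per-iteration cost splits delivered by $\kw{bind}_{\kw{L}}^{-1}$ are themselves $\Cl$-sealed and hence recombine correctly under $\eta_{\Cl}$. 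Both are handled uniformly by the uniqueness axiom $\kw{step}_{\kw{L}}/\kw{inj}$ and the lex structure of $\Cl$, exactly as in \cref{coro:cost-adequacy}.
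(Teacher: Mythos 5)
Your proposal follows essentially the same route as the paper's proof: mutual induction on expressions and commands, unfolding the hypothetical formulations of $\approx^\Downarrow$ and the command relation, discharging each case with the cost-decomposition axioms ($\kw{bind}_{\kw{L}}^{-1}$, $\kw{step}_{\kw{L}}^{-1}$, $\kw{step}/\kw{inj}$), and handling $\code{while}$ exactly as the paper does --- via $\kw{iter}/\kw{trunc}$, a secondary induction on the prefix length $k$ with the $k=0$ case refuted by \cref{prop:step-ret-lift}, a case split on the guard pinned down by the store relation, and splicing the body's evaluation with the inner induction hypothesis through \cref{prop:eval-bnd} and the while transition rule. Your explicit justification for eliminating the propositional truncation and your attention to the step-for-step cost accounting are points the paper treats more tersely, but the argument is the same.
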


As a corollary, we obtain cost-aware computational adequacy for both 
expressions and commands: 
\begin{corollary}[Cost-aware adequacy for \MA{}]
  Let $e : \TmMA(\cdot, \cdot, \code{bool})$ be a closed boolean with no free assignables. 
  If $\semexpMA{e} = \mstep{c}{\ret{b}}$, then we have $e \evalexp^{\eta_{\Cl} c} \overline{b}$.  
  Moreover, let $m : \Cmd(\cdot, \cdot, \code{bool})$ be a closed boolean command with no free assignables. 
  If $\semcmdMA{m} = \mstep{c}{\ret{(b, \triv)}}$, then we have $(\kw{emp}, m) \evalcmd^{\eta_{\Cl} c} (\kw{emp}, \overline{b})$.   
\end{corollary}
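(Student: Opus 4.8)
The plan is to read off both halves of the corollary as direct specializations of the fundamental theorem (\cref{thm:ftlr-MA}) to the empty signature and empty context. For the expression statement, I would instantiate the expression half of \cref{thm:ftlr-MA} at $\Sigma = \Sigma' = \cdot$, $\Gamma = \cdot$, with $p = \kw{refl}$ and the empty instantiation $\kw{emp} \approx_{\cdot, \cdot} \triv$. Since $e$ has no free variables and no free assignables, the weakening $\Uparrow^{\kw{refl}}$ and the empty closing substitution both act as the identity, so the conclusion becomes $e \approx^\Downarrow_{\cdot, \code{bool}} \semexpMA{e}(\cdot, \kw{refl}, \triv)$, i.e. $e \approx^\Downarrow_{\cdot,\code{bool}} \semexpMA{e}$. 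Unfolding $-^\Downarrow$ and forcing the enveloping $\kw{U}$, this is a map that takes a semantic value $v$, a cost $c$, and a proof $\semexpMA{e} = \mstep{c}{\ret{v}}$ and returns a syntactic value related to $v$. Feeding it $v \coloneqq b$, $c \coloneqq c$, and the hypothesis $\semexpMA{e} = \mstep{c}{\ret{b}}$ yields some $\code{v}$ with $e \evalexp^{\eta_{\Cl} c} \code{v}$ and $\code{v} \approx_{\cdot, \code{bool}} b$; the base clause of the value relation at $\code{bool}$ forces $\code{v} = \overline{b}$, which is exactly $e \evalexp^{\eta_{\Cl} c} \overline{b}$.

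The command statement follows the same pattern using the command half of \cref{thm:ftlr-MA} and the logical relation $\sim_{\cdot, \code{bool}} = \kw{cmd}_{\cdot, \code{bool}}(\approx_{\cdot, \code{bool}})$. Specializing at $\Sigma = \Sigma' = \cdot$ and discharging the weakening and substitution as before gives $m \sim_{\cdot, \code{bool}} \semcmdMA{m}(\cdot, \kw{refl}, \triv)$. Here $\semsigMA{\cdot} = 1$, so the universally quantified stores $\sigma, \sigma'$ in the definition of $\kw{cmd}$ are both forced to be $\triv$, and the only inhabitant of $\kw{Store}(\cdot)$ is $\kw{emp}$, which is related to $\triv$ by the base case $\kw{emp} \sim_{\cdot} \triv$. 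I would therefore force the $\kw{U}$ and instantiate with $\sigma = \sigma' = \triv$, $a \coloneqq b$, $c \coloneqq c$, the hypothesis read as a lifted return $\semcmdMA{m}(\triv) = \mstep{c}{\retl{(b, \triv)}}$, $\mu \coloneqq \kw{emp}$, and the witness $\kw{emp} \sim_{\cdot} \triv$. This produces $\code{a}$ and $\mu'$ with $(\kw{emp}, m) \evalcmd^{\eta_{\Cl} c} (\mu', \code{ret}(\code{a}))$, $\code{a} \approx_{\cdot, \code{bool}} b$, and $\mu' \sim_{\cdot} \triv$. Uniqueness of the empty store gives $\mu' = \kw{emp}$, and the base clause again gives $\code{a} = \overline{b}$, yielding $(\kw{emp}, m) \evalcmd^{\eta_{\Cl} c} (\kw{emp}, \code{ret}(\overline{b}))$ as required (the final command $\code{ret}(\overline{b})$ being abbreviated $\overline{b}$ in the statement).

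The notable feature is that, in contrast to the \STLC{} corollary (\cref{coro:cost-adequacy}), no appeal to $\kw{step}/\kw{inj}$ is needed: the \MA{} logical relations $-^\Downarrow$ and $\kw{cmd}$ are already indexed by the semantic cost $c$ in the ``forward'' direction, so the cost of the operational evaluation is produced to match the given denotational cost by construction. Consequently there is no real obstacle in the corollary itself --- the work has been front-loaded into \cref{thm:ftlr-MA}. The only points demanding care are bookkeeping: checking that $\Uparrow^{\kw{refl}}$ and the empty substitution reduce to the identity on closed, assignable-free phrases; confirming that the unique store over the empty signature discharges both the store-relatedness premise and the final $\mu' = \kw{emp}$ step; and matching the codomain of $\semcmdMA{m}$, a lifted computation, against the lifted-return premise of $\kw{cmd}$ so that the hypothesis $\semcmdMA{m} = \mstep{c}{\ret{(b,\triv)}}$ is applied at the right type.
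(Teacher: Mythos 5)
Your proof is correct and matches the paper's (implicit) argument: the paper states this corollary without proof precisely because it is the direct specialization of \cref{thm:ftlr-MA} to the empty signature, empty context, and $p = \kw{refl}$ that you carry out, including the bookkeeping that $\Uparrow^{\kw{refl}}$ and the empty instantiation act trivially, that $\semsigMA{\cdot} = 1$ and $\kw{Store}(\kw{nil})$ force the store data, and that the base clause of $\approx_{\code{bool}}$ yields $\overline{b}$. Your observation that no appeal to $\kw{step}/\kw{inj}$ is needed --- because the \MA{} relations $-^\Downarrow$ and $\kw{cmd}$ quantify universally over the semantic value and cost, in contrast to the existential formulation used for \STLC{} in \cref{coro:cost-adequacy} --- is also accurate, and correctly identifies why all the work here is front-loaded into the fundamental theorem.
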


Extensional adequacy follows immediately: 
\begin{corollary}[Extensional adequacy for \MA{}]
  Suppose $u : \ExtOpn$. Let $e : \TmMA(\cdot, \cdot, \code{bool})$ be a closed boolean with no free assignables. 
  If $\semexpMA{e} = \ret{b}$, then we have $e \Downarrow \overline{b}$.  
  Moreover, let $m : \Cmd(\cdot, \cdot, \code{bool})$ be a closed boolean command with no free assignables. 
  If $\semcmdMA{m} = \ret{(b, \triv)}$, then we have $(\kw{emp}, m) \Downarrow_{\kw{cmd}} (\kw{emp}, \overline{b})$.   
\end{corollary}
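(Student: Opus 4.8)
The plan is to follow verbatim the strategy used for the extensional adequacy corollary of the \STLC{}: we reduce the extensional statement to the cost-aware adequacy theorem for \MA{} at the trivial cost $c = 0$, and then transport the resulting phase-separated evaluation back to ordinary evaluation using the assumption $u : \ExtOpn$. No new logical-relations argument is needed, since all of the genuine work has already been discharged in \cref{thm:ftlr-MA} and packaged into the cost-aware adequacy theorem that immediately precedes this corollary.

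Concretely, for the expression case I would begin from the hypothesis $\semexpMA{e} = \ret{b}$ and rewrite the right-hand side as $\mstep{0}{\ret{b}}$ using the law $\mathsf{step}_0$. This exhibits $\semexpMA{e}$ in the form $\mstep{c}{\ret{b}}$ required by the cost-aware adequacy theorem for \MA{}, instantiated at $c = 0$, which yields $e \evalexp^{\eta_{\Cl} 0} \overline{b}$. Since we have $u : \ExtOpn$, the restriction proposition for phase-separated evaluation (which states that $e \evalexp^{\eta_{\Cl} c} v$ is equivalent to $e \Downarrow v$ in the extensional phase) applies at $c = 0$ to give $e \Downarrow \overline{b}$. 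The command case is entirely parallel: rewriting $\semcmdMA{m} = \ret{(b, \triv)}$ as $\mstep{0}{\ret{(b, \triv)}}$ and invoking the command half of the cost-aware adequacy theorem gives $(\kw{emp}, m) \evalcmd^{\eta_{\Cl} 0} (\kw{emp}, \overline{b})$, and the command half of the restriction proposition then yields $(\kw{emp}, m) \Downarrow_{\kw{cmd}} (\kw{emp}, \overline{b})$.

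There is essentially no obstacle to overcome here; the only point requiring care is that the cost-aware adequacy theorem and the restriction proposition must both be available simultaneously for expressions and for commands, which holds by construction. Indeed, the smoothness of this derivation is precisely the payoff of working in a phase-separated metalanguage emphasized in \cref{sec:synth-cost-den-sem}: because the cost component of $\evalexp$ and $\evalcmd$ is sealed by the intensional modality $\Cl$, it is automatically trivial in the extensional phase, so the extensional theorem falls out of its cost-aware refinement as an immediate corollary rather than demanding a separate logical relation, as would be required in an analytic setting.
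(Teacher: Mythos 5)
Your proposal is correct and follows exactly the argument the paper intends: the paper leaves this corollary as ``follows immediately,'' and the proof it spells out for the analogous \STLC{} case is precisely your argument --- rewrite $\ret{b}$ as $\mstep{0}{\ret{b}}$ via $\mathsf{step}_0$, invoke cost-aware adequacy at $c = 0$, and conclude by the restriction proposition under $u : \ExtOpn$, done in parallel for expressions and commands.
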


  \section{Models of \calf{}}

In this section we briefly recall from \citet{niu-sterling-grodin-harper:2022}
the notion of models of \calf{}, and in \cref{sec:model-calfpp,sec:metatheory-iter} 
we instantiate the parameters of this section with concrete constructions.  

\citet{niu-sterling-grodin-harper:2022} defines \calf{} as the free locally cartesian closed category 
generated by the associated signature (which we have present a fragment of in \cref{fig:calf}). 
A model of \calf{} in the sense of \opcit{} consists of any 
locally cartesian closed category $\CatIdent{E}$ and an implementation of the constants 
declared in the signature in $\CatIdent{E}$. 
\opcit{} constructs an Eilenberg-Moore-type model of 
\calf{} called the \emph{counting model} based on the writer 
monad associated to a cost monoid $\mathbb{C}$. 
In the next couple sections we extend the counting model to account for universes and partiality. 
  \section{A model of \calfpp{}}\label{sec:model-calfpp}

Fix a presheaf topos $\TopIdent{X}$.
We will construct a model \calfpp{} using the internal language of \TopIdent{X} ---
an extensional type theory equipped with (quotient) inductive types and a strict cumulative hierarchy of universes. 
As part of the input of the model construction, we are given a 
distinguished proposition $\kw{E} : \Omega$ 
in \TopIdent{X} representing the extensional phase. 
Following the notation of \citet{niu-sterling-grodin-harper:2022}, we write 
$\Op$ and $\Cl$ for the open and closed modalities associated with $\kw{E}$. 
In the following, let $\alpha < \beta < \gamma$ be universe levels 
and $\mathbb{C}$ be a cost monoid in the sense of \opcit{}. 
Recall that the counting model of \calf{} is based on the Eilenberg-Moore adjunction arising from 
the monad for cost effect $T \coloneqq \Cl \mathbb{C} \times -$, which we adopt for \calfpp{}:  

  \begin{minipage}{0.3\textwidth}  
  \begin{align*}
   \tpv &: \mathcal{U}_\gamma\\ 
   \tpv &= \mathcal{U}_\beta\\ 
   \tmv{A} & = A\\
  \end{align*}
  \end{minipage}
  \begin{minipage}{0.7\textwidth}  
    \begin{align*}
     \tpc &: \mathcal{U}_\gamma\\ 
     \tpc &= \kw{Alg}_{\mathcal{U}_\beta}(T)\\
     \tmc{X} &= |X|\\
    \end{align*}
    \end{minipage}
  Following the notation of \opcit{}, we write $\kw{Alg}_{\mathcal{U}}(T)$ for the type of 
  algebras for the monad $T$ whose carrier is valued in $\mathcal{U}$. 
  Given an algebra $\alpha : \kw{Alg}_{\mathcal{U}}(T)$, we write 
  $|\alpha|$ for the carrier and $\alpha.\kw{map}$ for the structure map.  
  The value universe can then be modeled as the universe $\mathcal{U}_\alpha : \mathcal{U}_\beta$, and the 
  computation universe is the \emph{trivial algebra}
  \footnote{The trivial algebra for the a writer monad $X \times -$ valued in $A$ is given 
  by the projection map $X \times A \to A$}  
  for $T$ valued in $\mathcal{U}_\alpha$-small 
  $T$-algebras: 

  \begin{minipage}{0.5\textwidth}  
    \begin{align*}
     \kw{Univ}^+ &: \mathcal{U}_\beta\\ 
     \kw{Univ}^+ &= \mathcal{U}_\alpha\\
    \end{align*}
    \end{minipage}
    \begin{minipage}{0.5\textwidth}  
      \begin{align*}
       \kw{Univ}^\ominus &: \kw{Alg}_{\mathcal{U}_\beta}(T)\\ 
       \kw{Univ}^\ominus &= \kw{trivAlg}(T, \kw{Alg}_{\mathcal{U}_\alpha}(T))\\
    \end{align*}
    \end{minipage}
  The decoding map for the value universe is the identity: $\kw{El}^+(A) = A$. 
  For the computation universe, we can unfold definitions and see that it suffices to 
  define a map $\kw{Alg}_{\mathcal{U}_\alpha}(T) \to \kw{Alg}_{\mathcal{U}_\beta}(T)$, 
  which is given by the inclusion of $\mathcal{U}_\alpha$-small algebras in 
  $\mathcal{U}_\beta$-small algebras. 

  \paragraph{Closure under type connectives}

  One may check that the pair of universes are closed under the connectives of \calfpp{}. 
  For instance, closure under dependent products requires implementations for the following: 
  \begin{align*}
    \widehat{\Pi} &: (A : \mathcal{U}_\alpha, B : A \to \kw{Alg}_{\mathcal{U}_\alpha}(T)) \to \kw{Alg}_{\mathcal{U}_\alpha}(T)\\ 
    \widehat{\Pi}/\kw{decode} &: (\widehat{\Pi}(A, B)) = \Pi(A, \lambda a : A.\; (B(a))) 
  \end{align*}
  We may implement $\widehat{\Pi}$ in the same way as $\Pi$ (except one universe level lower); 
  the decoding equation holds because \TopIdent{X} supports a strict cumulative universe hierarchy. 

  \paragraph{Inductive types}

  $W$-types exist in any topos with a natural numbers object. In particular, 
  this means we may interpret the internal $W$-type of \calfpp{} in any presheaf topos.

\section{A Model of \protect\calfiter{}}\label{sec:metatheory-iter}

In this section we extend the construction from \cref{sec:model-calfpp} to a model for \calfiter{}. 
For brevity, we have suppressed the verification of the axioms; the details may be 
found in \cref{appendix:model}. 

\paragraph{Notation}

Given a monad $M$, we write $\eta_M$, $\mu_M$ for the unit and multiplication of $M$, and 
we write $\kw{bind}_M$ or $\leftarrow_M$ for the derived bind operation. 

\subsection{Lifted computations}

The type of lifted computations $\Lift{A}$ are interpreted using the 
the \emph{quotient inductive-inductive partiality monad} of \citeauthor{altenkirch-danielsson-kraus:2017}
(written as $A_\bot$): 
\begin{align*}
  \kw{L} &: \mathcal{U}_\beta \to \kw{Alg}_{\mathcal{U}_\beta}(T)\\ 
  \Lift{A} &= \alpha_{\kw{L}(A)}
\end{align*}
The algebra for lifted computations is defined as follows:
\begin{align*}
  |\alpha_{\kw{L}(A)}| &= (T A)_\bot = (\Cl \mathbb{C} \times A)_\bot\\ 
  \alpha_{\kw{L}(A)} &: T|\alpha_{\kw{L}(A)}| \to |\alpha_{\kw{L}(A)}|\\
  \alpha_{\kw{L}(A)}(c, e) &= (c', a) \leftarrow_\bot e; \eta_\bot(c + c', a)
\end{align*}
It is straightforward to verify that the algebra laws are satisfied.
The inclusion of free computations in lifted computations is given by $\eta_\bot$: 
\begin{align*}
  \kw{ret}_{\kw{L}} &: A \to (\Cl \mathbb{C} \times A)_\bot\\ 
  \retl{a} &= \eta_\bot(\eta_{\Cl}0, a) 
\end{align*}
Sequencing of lifted computations is implemented by threading through the cost 
of computations: 
\begin{align*}
  \kw{bind}_{\kw{L}} &: (\Cl \mathbb{C} \times A)_\bot \to (A \to (\Cl \mathbb{C} \times B)_\bot) \to (\Cl \mathbb{C} \times B)_\bot\\
  \kw{bind}_{\kw{L}}(e, f) &= (c_1, a) \leftarrow_\bot e; (c_2, b) \leftarrow_\bot f a; \eta_\bot(c_1 + c_2, b) 
\end{align*}
  \section{Conclusion}\label{sec:conclusion}

Denotational semantics is a well-established method for 
studying the \emph{extensional} property of programs. 
In this paper we contribute a family of  
\emph{cost-aware} metalanguages for studying \emph{intensional} properties via 
\emph{synthetic} denotational semantics. 
The metalanguage we present supports synthetic reasoning in two orthogonal directions. 
First, by basing our work on \calf{}, a dependent type theory with an axiomatic theory of 
the interaction of intension and extension \citep{niu-sterling-grodin-harper:2022}, 
we obtain a rich language for phase-separated constructions. 
As we show in \cref{sec:adequacy-STLC,sec:adequacy-MA}, this enables us to 
formulate and prove cost-aware generalizations of classic Plotkin-type adequacy theorems 
that restrict immediately to their original extensional counterparts, which improves upon prior work on 
synthetic denotational semantics in type theory (see \cref{sec:den-sem-guarded}). 
Second, our metalanguage is also synthetic in a more traditional sense by allowing the user to 
construct conceptually simple denotational semantics of programming languages using only elementary 
type-theoretic constructions. 

We illustrate our approach by proving a cost-aware computational adequacy theorem in the style of Plotkin for 
the simply-typed lambda calculus and Modernized Algol. 
These results establish criterions by which cost models for algorithm analysis in \calf{} may be deemed to be 
cost adequate with respect to a given operational semantics, thereby  
giving a positive answer to the conjecture of \citet{niu-sterling-grodin-harper:2022}. 
In view of \opcit{}'s work on algorithm analysis, the metalanguage we have developed 
constitutes an expressive framework for not only cost-aware programming and verification 
but also cost-aware metatheory of programming languages. 

\paragraph{Future work}

In \cref{sec:SDT} we mentioned the possibility of extending our work to account for PCF and 
truly generalizing Plotkin's original adequacy theorem. The main challenge would be to 
construct a suitable topoi with an SDT theory to obtain a 
domain that supports \emph{arbitrary} fixed-points on endofunctions. 
We believe recent work on the topos-theoretic development of programming language metatheory using SDT 
\citep{sterling-harper:2022} will be germane.  

More generally, as a burgeoning area of research, cost-aware programming and synthetic metatheory is ripe 
with questions and challenges. In \cref{sec:compiler} we relate our work to the field of 
compiler correctness, but there are many other opportunities. 
By developing a framework for synthetic cost-aware denotational semantics, we hope to 
build the groundworks for more investigations of classic ideas from a fresh, synthetic perspective 
that may shed light on old and new problems alike.

  \section*{Acknowledgement}
    We are grateful to Jonathan Sterling for productive discussions on the topic of
    this research, and to Tristan Nguyen at AFOSR for his support.
  
    This work was supported in part by AFOSR under grants MURI FA9550-15-1-0053, 
    FA9550-19-1-0216, and FA9550-21-0009, in part by the National Science Foundation 
    under award number CCF-1901381, and by AFRL through the NDSEG fellowship.
    Any opinions, findings and conclusions or recommendations
    expressed in this material are those of the authors and do not necessarily
    reflect the views of the AFOSR, NSF, or AFRL.
  
  \bibliography{bibtex-references/refs-bibtex,bib,compiler-correctness}
  \appendix
  \section{Cost-aware adequacy proof for \STLC{}}\label{appendix:STLC}

\begin{lemma}[Hypothesis]\label{lemma:hyp}
  If $v : \kw{Var}(\Gamma, A)$ and $\code{\gamma} \approx_{\Gamma} \gamma$, then 
  $v[\code{\gamma}] \approx_A \semvar{v}(\gamma)$. 
\end{lemma}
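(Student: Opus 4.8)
The plan is to proceed by a straightforward induction on the variable proof $v : \kw{Var}(\Gamma, A)$. Both constructors of $\kw{Var}$ force the context to have the shape $A' :: \Gamma'$, so in either case I can invert the hypothesis $\code{\gamma} \approx_{\Gamma} \gamma$: since the only constructor of the context relation at a nonempty context is $\kw{cons}$, there must exist $\code{a}$, $a$, $\code{\gamma'}$, and $\gamma'$ with $\code{\gamma} = \kw{cons}(\code{a}, \code{\gamma'})$, $\gamma = (a, \gamma')$, together with the premises $\code{a} \approx_{A'} a$ and $\code{\gamma'} \approx_{\Gamma'} \gamma'$. These two premises are exactly the data I will consume in the two cases.

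In the base case $v = \code{here}$ (so $A' = A$), I would compute both sides definitionally. By the defining equation of the closing substitution, $\code{here}[\kw{cons}(\code{a}, \code{\gamma'})] = \code{a}$, while on the semantic side $\semvar{\code{here}}(\gamma) = \pi_1(a, \gamma') = a$. The goal $\code{a} \approx_A a$ is then precisely the first premise $\code{a} \approx_{A'} a$ extracted from the inversion.

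In the inductive case $v = \code{next}(v')$ with $v' : \kw{Var}(\Gamma', A)$, I again unfold definitions: $\code{next}(v')[\kw{cons}(\code{a}, \code{\gamma'})] = v'[\code{\gamma'}]$ by the second defining equation of $\kw{cons}$, and $\semvar{\code{next}(v')}(\gamma) = (\semvar{v'} \circ \pi_2)(a, \gamma') = \semvar{v'}(\gamma')$. The goal therefore reduces to $v'[\code{\gamma'}] \approx_A \semvar{v'}(\gamma')$, which is exactly the conclusion of the induction hypothesis applied to $v'$ and the second premise $\code{\gamma'} \approx_{\Gamma'} \gamma'$.

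I do not anticipate any genuine obstacle: once the context relation is inverted, the argument is entirely definitional. The only point requiring mild care is lining up the reduction of the closing substitution on $\code{here}$ and $\code{next}(v')$ with the matching projections $\pi_1$ and $\semvar{v'} \circ \pi_2$ in the semantic variable interpretation. Because the shape of the context is pinned down by the variable constructor, the inversion of $\approx_\Gamma$ through $\kw{cons}$ is forced and unambiguous, so no side conditions or nontrivial equational reasoning arise.
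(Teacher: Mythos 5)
Your proposal is correct and takes essentially the same route as the paper's own proof: induction on the derivation of $v$, inversion of $\code{\gamma} \approx_{\Gamma} \gamma$ through the $\kw{cons}$ constructor in the base case, and an appeal to the inductive hypothesis in the $\code{next}(v')$ case. If anything, you are more explicit than the paper, which leaves the definitional unfolding of the closing substitution and of $\semvar{-}$ implicit.
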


\begin{proof}
  By induction on derivation of $v : \kw{Var}(\Delta, A)$. 
  \paragraph{\textbf{Case}: $v = \kw{now}$}
  By assumption, we know that there exist 
  $\code{a}$ and $a$ such that $\code{\gamma} = \kw{cons}(\code{a}, \code{\gamma'})$,
  $\gamma = (a, \gamma')$, $\code{a} \approx_A a$, and $\code{\gamma'} \approx_{\Gamma} \gamma'$. 
  We want to show $\code{a} \approx_A a$, which is immediate by assumption. 
  \paragraph{\textbf{Case}: $v = \kw{next}(v')$ for some $v' : \kw{Var}(\Gamma, A)$}
  By inductive hypothesis. 
\end{proof}

\begin{theorem}[FTLR]
  Given a STLC term $e : \TmSTLC(\Gamma, A)$, if $\code{\gamma} \approx_{\Gamma} \gamma$, then 
  $e[\code{\gamma}] \approx^\Downarrow_A \semtm{e}(\gamma)$. 
\end{theorem}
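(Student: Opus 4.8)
The plan is to induct on the derivation of $e : \TmSTLC(\Gamma, A)$, unfolding $\approx^\Downarrow_A$ in each case: to establish $e[\code{\gamma}] \approx^\Downarrow_A \semtm{e}(\gamma)$ I must produce a cost $c : \Nat$, a syntactic value $\code{v}$, and a semantic value $v$ witnessing $e[\code{\gamma}] \eval^{\eta_{\Cl} c} \code{v}$, $\semtm{e}(\gamma) = \mstep{c}{\ret{v}}$, and $\code{v} \approx_A v$. The variable and boolean cases are discharged with $c = 0$. Since the value relation only relates syntactic \emph{values}, the instantiation $\code{\gamma}$ maps variables to values, so $\code{var}(v)[\code{\gamma}]$ is final and evaluates to itself in zero steps via $\kw{refl}$; \cref{lemma:hyp} supplies $\code{var}(v)[\code{\gamma}] \approx_A \semvar{v}(\gamma)$, and $\semtm{\code{var}(v)}(\gamma) = \ret{\semvar{v}(\gamma)} = \mstep{0}{\ret{\semvar{v}(\gamma)}}$. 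The cases $\code{tt}$ and $\code{ff}$ are identical, taking $\code{v} = \overline{b}$ and using $\overline{b} \approx_{\code{bool}} b$ by definition.

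For $\code{lam}(e)$ the term $\code{lam}(e)[\code{\gamma}] = \code{lam}(e[\uparrow^{A_1}\code{\gamma}])$ is again final, so $c = 0$ and I need only verify the arrow relation against $v = \lambda a.\; \semtm{e}(a, \gamma)$. Unfolding $\approx_{A_1 \Rightarrow A_2}$, I take $\code{e_2} = e[\uparrow^{A_1}\code{\gamma}]$ and must show that any related argument $\code{e_1} \approx_{A_1} e_1$ yields $e[\uparrow^{A_1}\code{\gamma}][\code{e_1}] \approx^\Downarrow_{A_2} \semtm{e}(e_1, \gamma)$. Here \cref{prop:subst} rewrites $e[\uparrow^{A_1}\code{\gamma}][\code{e_1}]$ as $e[\kw{cons}(\code{e_1}, \code{\gamma})]$, while the $\kw{cons}$ constructor of the context relation gives $\kw{cons}(\code{e_1}, \code{\gamma}) \approx_{A_1 :: \Gamma} (e_1, \gamma)$; the inductive hypothesis for $e$ in the extended context then closes the case.

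The application case $\code{ap}(e, e_1)$ is the crux. The inductive hypothesis for $e$ produces $c_f, \code{v_f}, v_f$ with $e[\code{\gamma}] \eval^{\eta_{\Cl} c_f} \code{v_f}$, $\semtm{e}(\gamma) = \mstep{c_f}{\ret{v_f}}$, and $\code{v_f} \approx_{A_1 \Rightarrow A_2} v_f$, whence the arrow relation forces $\code{v_f} = \code{lam}(\code{e_2})$. The hypothesis for $e_1$ produces $c_a, \code{v_a}, v_a$ analogously with $\code{v_a} \approx_{A_1} v_a$, and feeding $\code{v_a}$ into the (forced) function relation yields $c_r, \code{v_r}, v_r$ with $\code{e_2}[\code{v_a}] \eval^{\eta_{\Cl} c_r} \code{v_r}$, $v_f(v_a) = \mstep{c_r}{\ret{v_r}}$, and $\code{v_r} \approx_{A_2} v_r$. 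Computing the denotation by repeated use of $\kw{bind}_{\kw{step}}$, the monad unit law, and $\kw{step}_+$ gives
\begin{align*}
  \semtm{\code{ap}(e, e_1)}(\gamma) = \mstep{c_f + c_a + 1 + c_r}{\ret{v_r}},
\end{align*}
where the $\mstep{1}$ in the interpretation of $\code{ap}$ contributes the summand $1$. Operationally I assemble $\code{ap}(e[\code{\gamma}], e_1[\code{\gamma}]) \eval^{\eta_{\Cl}(c_f + c_a + 1 + c_r)} \code{v_r}$ from the subcomputations: the congruence rule $\kw{ap}/\kw{l}$ lifts $e[\code{\gamma}] \eval^{\eta_{\Cl}c_f}\code{lam}(\code{e_2})$, then $\kw{ap}/\kw{r}$ (applicable once the function is a value) lifts $e_1[\code{\gamma}] \eval^{\eta_{\Cl}c_a}\code{v_a}$, a single $\beta$-step matching the $\mstep{1}$ rewrites $\code{ap}(\code{lam}(\code{e_2}), \code{v_a})$ to $\code{e_2}[\code{v_a}]$, and finally $\code{e_2}[\code{v_a}] \eval^{\eta_{\Cl}c_r}\code{v_r}$. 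Taking $\code{v} = \code{v_r}$, $v = v_r$, and $c = c_f + c_a + 1 + c_r$ discharges the three obligations.

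The main obstacle is the cost bookkeeping in this last case. Two points demand care. First, I must state and prove congruence lemmas for the phase-separated closure $\mapsto_{\ExtOpn}$, lifting a reduction sequence of a subterm to the enclosing application while adding the corresponding cost; these follow by induction on the closure but are not automatic. Second, the costs must compose correctly in $\Cl\Nat$, which rests on $\eta_{\Cl}$ being a monoid homomorphism, so that $\eta_{\Cl}(c_1 + c_2) = \eta_{\Cl}c_1 \mathrel{(\Cl +)} \eta_{\Cl}c_2$ and the concatenated evaluation indeed carries cost $\eta_{\Cl}(c_f + c_a + 1 + c_r)$ in agreement with the denotational side; this is where the interaction between the operational $+1$ and the denotational $\mstep{1}$ is reconciled.
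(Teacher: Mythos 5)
Your proposal is correct and follows essentially the same route as the paper's own proof: induction on the term, discharging variables via \cref{lemma:hyp}, booleans and lambdas with cost $0$ (using \cref{prop:subst} and the $\kw{cons}$ case of the context relation), and the application case by combining both inductive hypotheses with the forced $\code{lam}$ shape, assembling the evaluation with cost $c_f + c_a + 1 + c_r$, and computing the denotation via $\kw{bind}_{\kw{step}}$, the unit law, and $\kw{step}_+$. Your explicit flagging of the congruence lemmas for $\mapsto_{\ExtOpn}$ and of $\eta_{\Cl}$ being a monoid homomorphism makes precise two steps the paper's proof uses silently, but this is elaboration rather than a different argument.
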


\begin{proof}
  Induction on the derivation of the term $e : \TmSTLC(\Gamma, A)$. 
  \paragraph{Variable} 
  By \cref{lemma:hyp}. 
  \paragraph{\textbf{Case}: observables} 
  WLOG, suppose that $e = \code{tt}$. We want to show that 
  $\code{tt} \approx^{\downarrow}_{\code{bool}} \ret{\kw{tt}}$. 
  First, note that we have $\code{tt} \eval^{\eta_{\Cl} 0} \code{tt}$ and 
  $\ret{\kw{tt}} = \mstep{0}{\ret{\kw{tt}}}$, so it suffices to show 
  $\code{tt} \approx_{\code{bool}} \kw{tt}$, which holds
  since $\overline{\kw{tt}} = \code{tt}$. 
  \paragraph{\textbf{Case}: functions} 
  We have to show $(\code{lam}(\code{e})[\code{\gamma}]) \approx^{\Downarrow}_{A_1 \Rightarrow A_2} \semtm{\code{lam}(\code{e})}(\gamma)$. 
  Unfolding definitions, it suffices to show that 
  $\code{lam}(\code{e}[\code{\gamma}\uparrow^{A_1}]) \approx_{A_1 \Rightarrow A_2} \lambda a : \semty{A_1}.\; \semtm{\code{e}}(a, \gamma)$. 
  Suppose that $\code{e_1} \approx_{A_1} e_1$. We have to show 
  $\code{e}[\code{\gamma} \uparrow^{A_1}][\code{e_1}] \approx^{\Downarrow}_{A_2} \semtm{\code{e}}(e_1, \gamma)$. 
  By \cref{prop:subst}, we have $\code{e}[\code{\gamma} \uparrow^{A_1}][\code{e_1}] = \code{e}[\kw{cons}(\code{e_1}, \code{\gamma})]$, 
  so the result would follow if we can show that $\kw{cons}(\code{e_1}, \code{\gamma}) \approx_{A_1::\Gamma} (e_1, \gamma)$. 
  This follows from $\approx.\kw{cons}$ and assumptions.   
  \paragraph{\textbf{Case}: application}  
  We have to show that $(\code{ap}(e, e_1))[\code{\gamma}] \approx^{\Downarrow}_{A_2} \semtmSTLC{\code{ap}(e, e_1)}(\gamma)$. 
  By induction on $e$, we have that 
  $e \eval^{\eta_{\Cl} c} \code{f}$ for some $\code{f} : \kw{Pg}(A_1 \Rightarrow A_2)$ 
  and $c : \Nat$, $\semtmSTLC{e} = \mstep{c}{\ret{f}}$ for some $f : \semtySTLC{A_1} \to \F{\semtySTLC{A_2}}$, and 
  $\code{f} \approx_{A_1 \Rightarrow A_2} f$. 
  Computing the definition of the logical relation at $A_1 \Rightarrow A_2$, we have that 
  $\code{f} = \code{lam}(\code{e_2})$ for some $\code{e_2}$ and a term $h$ of the following type: 
  \[
    ((\code{e_1} : \kw{Pg}(A_1), e_1 : \semty{A_1}) \to \code{e_1} \approx_{A_1} e_1
    \to \code{e_2}[\code{e_1}] \approx^\Downarrow_{A_2} e(e_1)) 
  \]
  By induction on $e_1$, we have that 
  $e_1 \eval^{\eta_{\Cl} c_1} \code{v_1}$ for some $\code{v_1} : \kw{Pg}(A_1)$ 
  and $c_1 : \Nat$, $\semtmSTLC{e_1} = \mstep{c_1}{\ret{v_1}}$ for some $v_1 : \semtySTLC{A_1}$, and 
  $\code{v_1} \approx_{A_1} v_1$. 
  Instantiating $h$, we have that $\code{e_2}[\code{v_1}] \approx^{\Downarrow}_{A_2} f(v_1)$, 
  which means that there exists $c_2$, $\code{v}$, and $v$ such $\code{e_2}[\code{v_1}] \eval^{\eta_{\Cl} c_2} \code{v}$, 
  $f(v_1) = \mstep{c_2}{\ret{v}}$, and $\code{v} \approx_{A_2} v$.
  Combined with the fact that $e \eval^{\eta_{\Cl} c} \code{f}$ and $e_1 \eval^{\eta_{\Cl} c_1} \code{v_1}$,  
  we have that $\code{ap}(e, e_1) \eval^{\eta_{\Cl}(c + c_1 + 1 + c_2)} \code{v}$. 
  Moreover, we can compute the meaning of $\code{ap}(e, e_1)$: 
  \begin{align*}
    \semtmSTLC{\code{ap}(e, e_1)}(\gamma) &=  
    \bind{\semtm{e}(\gamma)}{\lambda f.\; \bind{\semtm{e_1}(\gamma)}{\lambda a.\; \mstep{1}{f(a)}}}\\
    &= \bind{\mstep{c}{\ret{f}}}{\lambda f.\; \bind{\mstep{c_1}{\ret{v_1}}}{\lambda a.\; \mstep{1}{f(a)}}}\\
    &= \mstep{c}{\bind{\mstep{c_1}{\ret{v_1}}}{\lambda a.\; \mstep{1}{f(a)}}}\\
    &= \mstep{c}{\mstep{c_1}{\mstep{1}{f(v_1)}}}\\
    &= \mstep{c}{\mstep{c_1}{\mstep{1}{\mstep{c_2}{\ret{v}}}}}\\
    &= \mstep{c+c_1+1+c_2}{\ret{v}}
  \end{align*}
  And this is what we needed to show.
\end{proof}

\section{Cost-aware adequacy proof for \MA{}}\label{appendix:MA}

\subsection{Properties of substitution}

\begin{proposition}
  There is a map $\kw{sh} : \impl{\Sigma, \Sigma'} \Sigma' \ge \Sigma \to \Nat \to \Nat$  
\end{proposition}

\begin{proof}
  Define $\kw{sh}$ as follows:
  \begin{align*}
    \kw{sh} &: \impl{\Sigma, \Sigma'} \Sigma' \ge \Sigma \to \Nat \to \Nat\\
    \kw{sh}(\kw{refl}, n) &= n\\
    \kw{sh}(\kw{mono}(p), 0) &= 0\\
    \kw{sh}(\kw{mono}(p), n+1) &= \kw{sh}(p, n) + 1\\
    \kw{sh}(\kw{extend}(p)) &= \kw{sh}(p, n) + 1
  \end{align*}
\end{proof}

\begin{proposition} 
  There is a map $\Uparrow : \impl{\Sigma, \Sigma', \Gamma, A} \Sigma' \ge \Sigma \to \Cmd(\Sigma, \Gamma, A)\to \Cmd(\Sigma', \Gamma, A)$.  
\end{proposition}

\begin{proof}
  We need to define the map mutual recursively:  

  \begin{minipage}{0.45\textwidth}
    \begin{align*}
      \Uparrow^p (\code{ret}(a)) &= \code{ret}(\Uparrow^p a)\\
      \Uparrow^p (\code{bnd}(e, m)) &= \code{bnd}(\Uparrow^p e, \Uparrow^p m)\\ 
      \Uparrow^p (\code{while}[n](m)) &= \code{while}[\kw{sh}(p, n)](\Uparrow^p m)\\ 
      \Uparrow^p (\code{get}[n]) &= \code{get}[\kw{sh}(p, n)]\\ 
      \Uparrow^p (\code{set}[n](e)) &= \code{set}[\kw{sh}(p, n)](\Uparrow^p m)\\ 
      \Uparrow^p (\code{dcl}(e, m)) &= \code{dcl}(\Uparrow^p e, \Uparrow^{\kw{mono}(p)} m)
    \end{align*}
  \end{minipage} 
  \begin{minipage}{0.45\textwidth}
    \begin{align*} 
      \Uparrow^p (\code{lam}(e)) &= \code{lam}(\Uparrow^p e)\\ 
      \Uparrow^p (\code{ap}(e, e_1)) &= \code{ap}(\Uparrow^p e, \Uparrow^p e_1)\\ 
      \Uparrow^p (\code{suc}(e)) &= \code{suc}(\Uparrow^p e)\\ 
      \Uparrow^p (\code{ifz}(e, e_1, e_2)) &= \code{ifz}(\Uparrow^p e, \Uparrow^p e_1, \Uparrow^p e_2)\\ 
      \Uparrow^p (\code{cmd}(m)) &= \code{cmd}(\Uparrow^p m)\\
      \Uparrow^p e &= e 
    \end{align*} 
  \end{minipage}
\end{proof}

One may weaken a substitution, which we write as
${\uparrow} : \impl{\Sigma, \Gamma, \Gamma'} (A : \TyMA) \to \kw{Sub}_\Sigma(\Gamma, \Gamma') \to \kw{Sub}_\Sigma(A::\Gamma, A::\Gamma')$.
The weakening of signatures may be commuted past a substitution:  
\begin{proposition}
  Given $e : \TmMA(\Sigma, \Gamma, A)$, $p : \Sigma' \ge \Sigma$, 
  and $\gamma : \kw{Sub}_\Sigma(\Gamma, \Gamma')$, 
  we have that $\Uparrow^p (e[\gamma]) = (\Uparrow^p e)[\Uparrow^p \gamma]$. 
\end{proposition}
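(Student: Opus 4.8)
The plan is to prove the statement by mutual induction on the derivations of the expression $e$ and, simultaneously, of commands $m$, since \TmMA{} and \Cmd{} are defined by a mutual induction. Before starting, I would record two auxiliary facts about the action $\Uparrow$. The first is \emph{functoriality}: $\Uparrow^p(\Uparrow^q t) = \Uparrow^{\kw{tr}(p, q)} t$ for every term or command $t$, and likewise for the induced action on substitutions. The second is a \emph{commutation with context weakening}: $\Uparrow^p({\uparrow^A} \gamma) = {\uparrow^A}(\Uparrow^p \gamma)$, which holds because $\uparrow^A$ only manipulates the context $\Gamma$ and the freshly bound variable that $\Uparrow^p$ leaves untouched, whereas $\Uparrow^p$ only reindexes store locations via $\kw{sh}(p, -)$. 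Both are routine inductions, on $t$ and on the structure of $\gamma$ respectively.

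With these in hand, the induction proceeds case by case. The variable case $e = \code{var}(v)$ is immediate: $\Uparrow^p$ fixes $\code{var}(v)$ (it falls into the catch-all clause of $\Uparrow$), so both sides reduce to $\Uparrow^p(\gamma(v))$, which is precisely how $\Uparrow^p \gamma$ is defined to act on $v$. The purely structural cases without binders --- $\code{suc}$, $\code{ap}$, $\code{ifz}$, $\code{cmd}$, $\code{ret}$, $\code{bnd}$, $\code{get}$, $\code{set}$, and the observables --- follow by pushing $\Uparrow^p$ and the substitution through the constructor and invoking the induction hypotheses on the immediate subterms; for $\code{while}$, $\code{get}$, and $\code{set}$ one additionally notes that the shifted location index $\kw{sh}(p, n)$ is inert under substitution, so no extra argument is needed. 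The term-binder case $\code{lam}(e')$ uses the context-weakening commutation lemma to identify ${\uparrow^{A_1}}(\Uparrow^p \gamma)$ with $\Uparrow^p({\uparrow^{A_1}}\gamma)$ before applying the induction hypothesis to $e'$.

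The crux is the assignable-binder case $\code{dcl}(e', m)$, where the signature, rather than the context, is extended. Here substitution into the inner command $m$ (which lives at signature $(A,-)::\Sigma$) must first lift $\gamma$ along $\kw{extend}(\kw{refl}_\Sigma)$, while $\Uparrow^p$ reindexes $m$ along $\kw{mono}(p)$. Unfolding both sides and applying the induction hypothesis to $e'$ and to $m$ (the latter instantiated with the preorder proof $\kw{mono}(p)$), the two command components agree provided $\Uparrow^{\kw{mono}(p)}(\Uparrow^{\kw{extend}(\kw{refl}_\Sigma)}\gamma) = \Uparrow^{\kw{extend}(\kw{refl}_{\Sigma'})}(\Uparrow^p \gamma)$. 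By the functoriality lemma both sides collapse to $\Uparrow^{q}\gamma$ for the two composite proofs $q = \kw{tr}(\kw{mono}(p), \kw{extend}(\kw{refl}_\Sigma))$ and $q' = \kw{tr}(\kw{extend}(\kw{refl}_{\Sigma'}), p)$, so the obligation reduces to showing that these two witnesses of $(A,-)::\Sigma' \ge \Sigma$ induce the same action --- a coherence I expect to discharge by a short induction on $p$. I anticipate this coherence between the two reassociations of signature extension and weakening to be the main obstacle, since $\ge$ is not a priori proof-irrelevant; the remainder of the argument is bookkeeping driven by the two commutation lemmas.
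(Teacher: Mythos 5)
The paper records this proposition in its appendix \emph{without proof} (it is one of several substitution lemmas treated as routine), so there is no official argument to compare against; judged on its own, your proposal is the expected argument and is essentially correct. The mutual induction on expressions and commands is forced, since $\TmMA$ and $\Cmd$ are mutually defined and $\Uparrow$ is itself defined by mutual recursion, and your two auxiliary facts are precisely the two propositions the paper states alongside this one: $\uparrow^A(\Uparrow^p \gamma) = \Uparrow^p(\uparrow^A \gamma)$ and $\Uparrow^p(\Uparrow^q e) = \Uparrow^{\kw{tr}(p,q)}e$. Your treatment of $\code{dcl}$ also isolates the one real subtlety: because $\ge$ is an inductive, hence proof-relevant, relation, the two composites $\kw{tr}(\kw{mono}(p), \kw{extend}(\kw{refl}_\Sigma))$ and $\kw{tr}(\kw{extend}(\kw{refl}_{\Sigma'}), p)$ of type $(A,-)::\Sigma' \ge \Sigma$ must be shown to induce the same action. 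That coherence does hold: unfolding $\kw{sh}$, both composites send an assignable index $n$ to $\kw{sh}(p,n)+1$, exactly as $\kw{extend}(p)$ does (the witness the paper itself uses when it unfolds substitution under $\code{dcl}$ in the allocation case of the FTLR), and the induced actions on terms and commands then agree by a further structural induction.

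One correction: $\code{bnd}$ and $\code{ifz}$ are not binder-free, so they do not ``follow by pushing through the constructor'' as you claim. By the typing rules, $\code{bnd}(e, m)$ has $m$ in context $A::\Gamma$, and $\code{ifz}(e, e_1, e_2)$ has $e_2$ in context $\code{nat}::\Gamma$; substitution therefore enters these subterms through $\uparrow^A\gamma$ and $\uparrow^{\code{nat}}\gamma$ respectively, and both cases require the same context-weakening commutation lemma you invoke for $\code{lam}$. Since that lemma is already in your toolkit and its application is identical, this is a misclassification rather than a missing idea, but as literally written those two cases do not go through.
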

Conversely, Weakening of substitutions may be commuted past weakening of signatures: 
\begin{proposition}
  Given $\gamma : \kw{Sub}_\Sigma(\Gamma, \Gamma')$ and $p : \Sigma' \ge \Sigma$, we have that 
  $\uparrow^A (\Uparrow^p \gamma)  = \Uparrow^p (\uparrow^A\gamma)$. 
\end{proposition}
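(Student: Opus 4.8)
The plan is to prove the equality of the two substitutions pointwise, using function extensionality. Since an inhabitant of $\kw{Sub}_{\Sigma'}(A::\Gamma, A::\Gamma')$ is a function sending each $v : \kw{Var}(A::\Gamma, B)$ to a term of $\TmMA(\Sigma', A::\Gamma', B)$, it suffices to fix $B$ and $v$ and show $(\uparrow^A(\Uparrow^p \gamma))(v) = (\Uparrow^p(\uparrow^A\gamma))(v)$. I would then case on $v$, which is either $\code{here}$ or $\code{next}(v')$ for some $v' : \kw{Var}(\Gamma, B)$.

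For $v = \code{here}$: by the definition of substitution weakening, $\uparrow^A$ sends the freshly bound variable to $\code{var}(\code{here})$, so the left-hand side computes directly to $\code{var}(\code{here})$. On the right, $(\uparrow^A\gamma)(\code{here}) = \code{var}(\code{here})$ as well, and applying $\Uparrow^p$ leaves it unchanged because the signature-weakening action fixes variables: $\code{var}$ is not among the explicit clauses of $\Uparrow^p$ and so falls through to the catch-all $\Uparrow^p e = e$. Hence both sides equal $\code{var}(\code{here})$. For $v = \code{next}(v')$: the left-hand side computes to $\kw{wk}^A((\Uparrow^p\gamma)(v')) = \kw{wk}^A(\Uparrow^p(\gamma(v')))$, while the right-hand side computes to $\Uparrow^p((\uparrow^A\gamma)(\code{next}(v'))) = \Uparrow^p(\kw{wk}^A(\gamma(v')))$, where $\kw{wk}^A$ denotes weakening a term by $A$ in its term context. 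Thus this case reduces exactly to the commutation lemma $\kw{wk}^A(\Uparrow^p e) = \Uparrow^p(\kw{wk}^A e)$ for every $e$.

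The main work, and the step I expect to be the principal obstacle, is this commutation lemma, which I would prove by mutual induction on the expression/command $e$. Conceptually it holds because $\Uparrow^p$ only reindexes assignables via $\kw{sh}$ (in $\code{get}[n]$, $\code{set}[n]$, and $\code{while}[n]$) and otherwise recurses through the syntax without touching term variables, whereas $\kw{wk}^A$ only adjusts term variables and leaves assignable indices alone; the two actions operate on orthogonal components and commute clause by clause. The cases requiring care are the signature-extending command $\code{dcl}(e, m)$, where $\Uparrow^p$ recurses with $\kw{mono}(p)$ on the body and one must invoke the induction hypothesis at $\kw{mono}(p)$, and the term binders such as $\code{lam}$, where the weakening renaming must be lifted past the bound variable before the induction hypothesis applies. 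To make the binder cases go through cleanly, I would state the lemma for an arbitrary variable renaming rather than only for $\kw{wk}^A$, so that the lifted renaming is again an instance of the lemma. In each case, the fact that the assignable indexing and the variable indexing are disjoint guarantees that the two reindexings do not interfere, so the clause-by-clause equalities close the induction.
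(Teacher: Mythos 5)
The paper states this proposition without proof --- it is one of several substitution lemmas in the appendix treated as routine structural facts --- so there is no official argument to compare against. Your proposal is correct and is the standard way to discharge it. Since a substitution is literally a function out of $\kw{Var}$, reducing the equation to the two cases $\code{here}$ and $\code{next}(v')$ is exactly right, and both of your case computations match the evident definitions of $\uparrow^A$ (fresh variable goes to $\code{var}(\code{here})$, shifted variables go through term weakening $\kw{wk}^A$) and of $\Uparrow^p$ on substitutions (postcomposition with the action on terms). You also correctly identify where the real content lives: the commutation $\kw{wk}^A(\Uparrow^p e) = \Uparrow^p(\kw{wk}^A e)$, proved by mutual induction on expressions and commands. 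Your two flagged subtleties are the right ones --- the induction hypothesis must be taken at $\kw{mono}(p)$ for the body of $\code{dcl}(e,m)$ (so $p$ must stay universally quantified in the induction), and the statement must be generalized from $\kw{wk}^A$ to arbitrary renamings so that the lifted renaming under $\code{lam}$, $\code{ifz}$, and $\code{bnd}$ is again an instance of the hypothesis. The underlying reason it all goes through is the one you give: $\Uparrow^p$ touches only assignable indices (via $\kw{sh}$ in $\code{get}$, $\code{set}$, $\code{while}$, and the signature index of $\code{dcl}$), while renamings touch only $\code{var}$ nodes, so the two actions commute clause by clause.
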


\begin{proposition}
  Given $e : \TmMA(\Sigma, \Gamma, A)$, 
  we have that $\Uparrow^p (\Uparrow^q e) = \Uparrow^{\kw{tr}(p, q)} e$; moreover, 
  given $m : \Cmd(\Sigma, \Gamma, A)$,   
  we have that $\Uparrow^p (\Uparrow^q m) = \Uparrow^{\kw{tr}(p, q)} m$. 
\end{proposition}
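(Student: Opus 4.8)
The plan is to proceed by mutual structural induction on the expression $e$ and the command $m$, mirroring the fact that $\Uparrow$ is itself defined by mutual structural recursion (as in the preceding proposition). The overwhelming majority of cases reduce to a routine congruence: for any constructor $\code{c}(e_1,\dots,e_k)$ whose weakening pushes through componentwise, say $\Uparrow^p(\code{c}(e_1,\dots,e_k)) = \code{c}(\Uparrow^p e_1,\dots,\Uparrow^p e_k)$, I rewrite each argument using its inductive hypothesis and observe that the right-hand side $\Uparrow^{\kw{tr}(p,q)}(\code{c}(e_1,\dots,e_k))$ unfolds to $\code{c}(\dots,\Uparrow^{\kw{tr}(p,q)} e_i,\dots)$ by definition. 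The leaf cases covered by the catch-all clause $\Uparrow^p e = e$ (namely $\code{var}(v)$, $\code{\triv}$, $\code{tt}$, $\code{ff}$, $\code{zero}$) hold immediately, since both sides are equal to $e$.

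The substantive cases are those in which weakening acts on the assignable indices, i.e.\ $\code{while}[n](m)$, $\code{get}[n]$, and $\code{set}[n](e)$, where $\Uparrow$ relocates the index through $\kw{sh}$. Here the goal reduces to the compatibility law
\[
  \kw{sh}(p, \kw{sh}(q, n)) = \kw{sh}(\kw{tr}(p, q), n),
\]
which I would first establish as a separate lemma by induction on the proofs $p$ and $q$, following the recursive clauses of $\kw{tr}$. The remaining delicate case is $\code{dcl}(e, m)$, whose command subterm is weakened under $\kw{mono}$, namely $\Uparrow^p(\code{dcl}(e,m)) = \code{dcl}(\Uparrow^p e, \Uparrow^{\kw{mono}(p)} m)$; to apply the inductive hypothesis on $m$ I therefore need the functoriality of $\kw{mono}$ with respect to $\kw{tr}$,
\[
  \kw{tr}(\kw{mono}(p), \kw{mono}(q)) = \kw{mono}(\kw{tr}(p, q)),
\]
which again follows by induction on the $\ge$-derivations from the definition of $\kw{tr}$.

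The main obstacle I anticipate is not the structural induction, which is mechanical, but rather pinning down these two auxiliary coherence lemmas for $\kw{sh}$ and $\kw{mono}$. Their validity is sensitive to the precise recursive definition of the transitivity witness $\kw{tr}$: one must arrange $\kw{tr}$ so that these equations hold, ideally definitionally. This consideration also guides the case analysis, since each $\kw{extend}$ step on either argument contributes an extra $+1$ in $\kw{sh}$ that must be accounted for symmetrically on both sides of the equation. Finally, because \calf{} is an extensional type theory, equality reflection frees us from any explicit transport along the signature-index equalities, so once the coherence lemmas are in hand the congruence steps close without bureaucratic coercions.
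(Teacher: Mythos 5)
Your proposal is correct, but note that the paper itself supplies no proof of this proposition at all: it is stated bare among the substitution properties in the appendix, as a routine syntactic lemma, and even the transitivity witness $\kw{tr}$ is never given an explicit definition. Your argument is exactly the proof that the authors leave implicit, and it correctly isolates the only genuinely non-trivial content. The congruence cases and the catch-all leaves ($\code{var}$, $\code{tt}$, $\code{ff}$, $\code{zero}$, $\code{\triv}$) are indeed immediate from the defining clauses of $\Uparrow$, the mutual induction matches the mutual recursion defining $\Uparrow$ on expressions and commands (needed for $\code{cmd}(m)$ and $\code{bnd}(e,m)$), and the two coherence lemmas you identify,
\[
  \kw{sh}(p, \kw{sh}(q, n)) = \kw{sh}(\kw{tr}(p, q), n)
  \qquad\text{and}\qquad
  \kw{tr}(\kw{mono}(p), \kw{mono}(q)) = \kw{mono}(\kw{tr}(p, q)),
\]
are precisely what the $\code{while}[n]$/$\code{get}[n]$/$\code{set}[n]$ and $\code{dcl}(e,m)$ cases require. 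Your caveat that these equations constrain how $\kw{tr}$ must be defined is well taken and is the one point a careless reader of the paper could miss: since $\ge$ is proof-relevant (valued in $\tpv$) and the paper never pins $\kw{tr}$ down, one must define it by recursion on the first derivation, casing on the second (so that $\kw{tr}(\kw{refl},q)=q$, $\kw{tr}(\kw{extend}(p),q)=\kw{extend}(\kw{tr}(p,q))$, $\kw{tr}(\kw{mono}(p),\kw{mono}(q))=\kw{mono}(\kw{tr}(p,q))$, etc.), after which both coherence lemmas go through by the straightforward induction you describe. In short, your proof fills a gap the paper leaves open, rather than diverging from an existing argument.
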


Moreover, the analog to \cref{prop:subst} holds for substitution as defined in \cref{def:subst-MA}: 
\begin{proposition} \label{prop:subst-MA}
  Given $e : \TmMA(\Sigma, A::\Gamma, A')$, $\sigma : \kw{Sub}_\Sigma(\Gamma, \kw{nil})$, 
  and $e' : \kw{Tm}(\Sigma, \kw{nil}, A)$, we have that 
  $e[\uparrow^A\sigma][e'] = e[\kw{cons}(e', \sigma)]$. 
  Moreover, given $m : \Cmd(\Sigma, A::\Gamma, A')$, $\sigma : \kw{Sub}_\Sigma(\Gamma, \kw{nil})$, 
  and $e' : \kw{Tm}(\Sigma, \kw{nil}, A)$, we have that 
  $m[\uparrow^A\sigma][e'] = m[\kw{cons}(e', \sigma)]$. 
\end{proposition}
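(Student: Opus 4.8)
The plan is to reduce the statement to a general substitution-fusion law and then verify an equality of substitutions. First I would establish, by mutual induction on expressions $e : \TmMA(\Sigma, \Gamma, A)$ and commands $m : \Cmd(\Sigma, \Gamma, A)$, the composition law $e[\rho][\tau] = e[\rho \mathbin{;} \tau]$ (and likewise for $m$), where $\rho : \kw{Sub}_\Sigma(\Gamma, \Gamma')$, $\tau : \kw{Sub}_\Sigma(\Gamma', \Gamma'')$, and the composite $\rho \mathbin{;} \tau : \kw{Sub}_\Sigma(\Gamma, \Gamma'')$ is defined pointwise by $(\rho \mathbin{;} \tau)(v) = (\rho(v))[\tau]$. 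The variable and observable cases are immediate; the interesting cases are the binders ($\code{lam}$, the second branch of $\code{ifz}$, $\code{bnd}$) and the signature-extending command $\code{dcl}$.

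For the binding cases I would use the fact that shifting distributes over composition, i.e. $\uparrow^{A}(\rho \mathbin{;} \tau) = (\uparrow^{A}\rho) \mathbin{;} (\uparrow^{A}\tau)$, which itself is checked on variables from the defining equations of $\uparrow^{A}$ and $\kw{cons}$ together with the weakening-drop identity $\kw{wk}(t)[\kw{cons}(e'', \delta)] = t[\delta]$. The $\code{dcl}$ (and $\code{while}$) cases additionally involve the signature action, since substitution into the subcommand of $\code{dcl}(e, m)$ proceeds at the extended signature; here I would appeal to the commutation lemmas proved above, namely $\Uparrow^{p}(e[\gamma]) = (\Uparrow^{p}e)[\Uparrow^{p}\gamma]$ and $\uparrow^{A}(\Uparrow^{p}\gamma) = \Uparrow^{p}(\uparrow^{A}\gamma)$, to commute $\Uparrow$ past the composition and realign the index shifts produced by $\kw{sh}$.

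With the fusion law in hand, the proposition reduces to an equality of substitutions. Observe that the single-variable substitution $[e']$ is the application of $\kw{cons}(e', \kw{id}_{\kw{nil}})$, so $e[\uparrow^{A}\sigma][e'] = e[(\uparrow^{A}\sigma) \mathbin{;} \kw{cons}(e', \kw{id}_{\kw{nil}})]$ by the fusion law, and it suffices to show $(\uparrow^{A}\sigma) \mathbin{;} \kw{cons}(e', \kw{id}_{\kw{nil}}) = \kw{cons}(e', \sigma)$. Since $\kw{Sub}_\Sigma(A::\Gamma, \kw{nil})$ is a function type, I would prove this by extensionality and case analysis on a variable $v : \kw{Var}(A::\Gamma, B)$: for $v = \kw{here}$ both sides compute to $e'$, and for $v = \kw{next}(v')$ both sides compute to $\sigma(v')$, the left-hand side using the weakening-drop identity to discharge the trivial $\kw{id}_{\kw{nil}}$ substitution. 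The command half of the proposition follows from the command half of the fusion law by the identical reduction, which is why the two statements are packaged together.

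The main obstacle I anticipate is the bookkeeping in the $\code{dcl}$ case of the fusion law, where a context substitution must be threaded through a signature extension: keeping the action $\Uparrow$, the index shift $\kw{sh}$, and the context shift $\uparrow^{A}$ all synchronized is delicate, and it is precisely here that the previously established commutation lemmas between $\Uparrow$ and substitution do the heavy lifting. Everything else is routine structural recursion mirroring the \STLC{} analog of \cref{prop:subst}.
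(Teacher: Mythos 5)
The paper gives no proof of this proposition: like its \STLC{} counterpart (\cref{prop:subst}), it is stated as standard substitution boilerplate for the intrinsic nameless encoding and left unproved. So there is nothing to compare against line by line; judging your proposal on its own merits, the overall architecture is the standard and correct one for the representation style the paper adopts from \citet{benton-hur-kennedy-mcbride:2012}: prove a general substitution-fusion law $e[\rho][\tau] = e[\rho \mathbin{;} \tau]$ mutually for expressions and commands (with the induction at $\code{dcl}$ passing to the extended signature via the $\Uparrow$-commutation lemmas the paper states), then reduce the proposition to the pointwise equality of substitutions $(\uparrow^{A}\sigma) \mathbin{;} \kw{cons}(e', \kw{id}_{\kw{nil}}) = \kw{cons}(e', \sigma)$, verified by cases on $\kw{here}$ and $\kw{next}(v')$. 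That reduction, and the packaging of the expression and command halves together, is exactly right.

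Two points need repair. First, your claim that $\uparrow^{A}(\rho\mathbin{;}\tau) = (\uparrow^{A}\rho)\mathbin{;}(\uparrow^{A}\tau)$ is ``checked on variables from the defining equations'' undersells what is required: on $\kw{next}(v)$ the left-hand side is $\kw{wk}\bigl((\rho(v))[\tau]\bigr)$ while the right-hand side is $\kw{wk}(\rho(v))[\uparrow^{A}\tau]$, so you need the commutation of weakening with substitution, $\kw{wk}(t)[\uparrow^{A}\tau] = \kw{wk}(t[\tau])$. This is not a variable-case computation but a separate structural induction on $t$ (and the weakening-drop identity you invoke likewise requires its own induction). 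In the usual bootstrap one proves the renaming fusion lemmas first, then renaming--substitution commutation, and only then the substitution--substitution fusion law; for an intrinsic de Bruijn encoding this layer is where most of the work actually lives, and your proposal should name it rather than fold it into the variable cases. Second, a small slip: the $\code{while}$ case does not involve the signature action at all --- by the typing rule, $\code{while}[n](m)$ keeps $m$ at the same signature $\Sigma$, so only $\code{dcl}$ extends the signature and needs the $\Uparrow$-commutation lemmas; your anticipated ``delicate synchronization'' of $\Uparrow$, $\kw{sh}$, and $\uparrow^{A}$ is confined to that single case.
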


\subsection{Properties of phase-separated evaluation}

\begin{proposition}\label{prop:eval-dcl}
  Let $(A, h_A) : \kw{Pos}$ and $(B, h_B) : \kw{Pos}$ be positive types, and let 
  $e : \TmMA(\Sigma, A)$ and $m : \Cmd(A::\Sigma, B)$. 
  If $e \Downarrow^{c_1} a$ for some $c_1 : \Nat$, $a : \kw{Val}(\cdot, A)$, and 
  $(a::\mu, m) \Downarrow^{c_2} (-::\mu', \code{ret}(b))$ for some $c_2 : \Nat$ and $\mu'$, then 
  $(\mu, \code{dcl}(e, m)) \Downarrow^{c_1 + c_2 + \eta_{\Cl} 1}_{\kw{cmd}} (\mu', \code{ret}(\kw{coer}(b)))$. 
\end{proposition}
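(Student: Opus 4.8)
The plan is to unfold both hypotheses and the goal into the phase-separated transition closures underlying $\evalcmd$ and to splice together an explicit evaluation sequence for $\code{dcl}(e, m)$ out of the sequences for $e$ and $m$, tracking cost additively in $\Cl\Nat$. Recall that $(\mu, m) \evalcmd^{c} (\mu', m')$ unfolds to $(\mu, m) \Mapsto_{\ExtOpn}^{c} (\mu', m')$ together with $\kw{final}(\code{ret}(\kw{coer}(b)))$, and that the inductive closures $\mapsto_{\ExtOpn}$ and $\Mapsto_{\ExtOpn}$ concatenate with their $\Cl\Nat$-valued cost indices summed (refl contributes $\eta_{\Cl}0$, each trans contributes $\eta_{\Cl}1$). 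The dynamics of $\code{dcl}$, suppressed here but taken from \citet{harper:2012:pfpl}, proceeds in three phases, and I would mirror them exactly: evaluate the initializer $e$, run the body $m$ against a store extended by the freshly declared assignable, and finally deallocate and coerce the returned value.

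For the initializer phase I would prove an \emph{initializer congruence}: a single expression step $e \mapsto e'$ lifts to the state step $(\mu, \code{dcl}(e, m)) \Mapsto (\mu, \code{dcl}(e', m))$, and iterating this over $e \evalexp^{\eta_{\Cl} c_1} a$ (by induction on that closure) gives $(\mu, \code{dcl}(e, m)) \Mapsto_{\ExtOpn}^{\eta_{\Cl} c_1} (\mu, \code{dcl}(a, m))$ with $a$ a value. For the body phase I would prove a \emph{body congruence}: if $(a_0 :: \nu, m_0) \Mapsto (a_0' :: \nu', m_0')$ with the declared assignable on top of the store, then $(\nu, \code{dcl}(a_0, m_0)) \Mapsto (\nu', \code{dcl}(a_0', m_0'))$. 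Iterating this over $(a :: \mu, m) \Mapsto_{\ExtOpn}^{\eta_{\Cl} c_2} (- :: \mu', \code{ret}(b))$ yields $(\mu, \code{dcl}(a, m)) \Mapsto_{\ExtOpn}^{\eta_{\Cl} c_2} (\mu', \code{dcl}(-, \code{ret}(b)))$. Finally, the deallocation rule supplies the single step $(\mu', \code{dcl}(-, \code{ret}(b))) \Mapsto^{\eta_{\Cl} 1}_{\ExtOpn} (\mu', \code{ret}(\kw{coer}(b)))$, where $\kw{coer}$ reinterprets the value $b$ of positive type $B$ at the shorter signature $\Sigma$; this is well-defined precisely because $b$ is a value of a strictly positive (hence first-order, assignable-free) type and so cannot mention the declared assignable. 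Concatenating the three segments and summing the cost indices in the $\Cl$-monoid gives $(\mu, \code{dcl}(e, m)) \Mapsto_{\ExtOpn}^{c_1 + c_2 + \eta_{\Cl} 1} (\mu', \code{ret}(\kw{coer}(b)))$, and since $\code{ret}(\kw{coer}(b))$ is final this is the required $\evalcmd$.

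The main obstacle is the body congruence together with the structural invariant that licenses applying it all along the second segment: to peel off the declared assignable at every step I must know that each intermediate store in $(a :: \mu, m) \Mapsto_{\ExtOpn}^{\eta_{\Cl} c_2} (- :: \mu', \code{ret}(b))$ genuinely has that assignable sitting on top and that it is never popped before the body returns. This is the stack discipline manifest as a type-soundness-style preservation lemma: a command well-typed at the extended signature $A :: \Sigma$ preserves the top-of-store structure under $\Mapsto$, since no rule available to $m$ can deallocate the assignable introduced by the enclosing $\code{dcl}$. I would isolate this as a separate preservation lemma, proved by induction on the transition derivation with a case analysis on each command rule, and then use it to justify the repeated application of the body congruence. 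The remaining bookkeeping — additivity of cost indices under concatenation and the $\Cl$-monoid identity $\eta_{\Cl} c_1 \mathrel{(\Cl +)} \eta_{\Cl} c_2 \mathrel{(\Cl +)} \eta_{\Cl} 1 = \eta_{\Cl}(c_1 + c_2 + 1)$ — is routine.
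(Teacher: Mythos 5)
Your proposal is correct, but there is a wrinkle in the comparison: the paper states \cref{prop:eval-dcl} \emph{without proof} --- it is one of the routine operational lemmas in the appendix (alongside \cref{prop:eval-bnd,prop:eval-set}) that the authors leave to the reader. So your argument fills an omission rather than diverging from an official proof, and the route you take is the expected one: the three segments you splice together (initializer congruence, body congruence with a possibly-updated assignable, then deallocation with $\kw{coer}$) mirror exactly the three $\code{dcl}$ rules of \citet{harper:2012:pfpl}, and the cost accounting ($\eta_{\Cl}0$ for $\kw{refl}$, $\eta_{\Cl}1$ per $\kw{trans}$, summed in the $\Cl$-monoid) matches both the definition of $\Mapsto_{\ExtOpn}$ and the way the proposition is invoked in the FTLR allocation case, where the conclusion is used in the form $\eta_{\Cl}c_1 + \eta_{\Cl}c_2 + \eta_{\Cl}1 = \eta_{\Cl}(c_1+c_2+1)$.

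The one place where you do more work than necessary is the ``main obstacle'': the preservation lemma guaranteeing that the declared assignable stays on top of the store throughout the body's execution. In the paper's intrinsic, signature-indexed encoding this is automatic, not something to prove by induction on transition derivations. The transition relation has type ${\Mapsto} : \impl{\Sigma, A} \kw{State}(\Sigma, A) \to \kw{State}(\Sigma, A) \to \tpv$, so a step taken by the body $m : \Cmd(A::\Sigma, B)$ relates states \emph{at the same signature} $A::\Sigma$; and $\kw{Store}((A,p)::\Sigma)$ has a single constructor $\kw{extend}$, so every store at the extended signature is of the form $v :: \nu$ by inversion on its inductive structure. No rule available to $m$ can change the signature --- the only signature-changing behavior lives in the outer $\code{dcl}$ rules themselves, which act on the state at $\Sigma$. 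Hence your body congruence can be applied along the whole second segment with only a store inversion at each step, and the separate preservation induction can be dropped. With that simplification, the remaining content of your proof (the two congruence lemmas, the final deallocation step justified by strict positivity of $B$, and the monoid bookkeeping) is exactly what is needed.
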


\begin{proposition}\label{prop:eval-bnd}
  Let $A, B : \TyMA$ be types, and let $e : \TmMA(\Sigma, \code{cmd}(A))$ and $m : \Cmd(\Sigma, A, B)$. 
  If $e \evalexp^{c_1} \code{cmd}(m_1)$, $(\mu, m_1) \Downarrow^{c_2}_{\kw{cmd}} (\mu_1, \code{ret}(a))$, 
  and $(\mu, m[a]) \Downarrow^{c_3} (\mu', \code{ret}(b))$, 
  then we have that $(\mu, \code{bnd}(e, m)) \evalcmd^{c_1 + c_2 + c_3 + \eta_{\Cl}(1)} (\mu', \code{ret}(b))$.
\end{proposition}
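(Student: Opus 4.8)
The plan is to mirror the operational behaviour of $\code{bnd}$, which proceeds in three stages: evaluate the encapsulated expression $e$ to a value $\code{cmd}(m_1)$ (the store is untouched, since expressions are pure), execute the inner command $m_1$ to a terminal $\code{ret}(a)$, then fire the bind step $(\mu_1, \code{bnd}(\code{cmd}(\code{ret}(a)), m)) \Mapsto (\mu_1, m[a])$ and continue with $m[a]$. The three hypotheses supply exactly the evaluations of these stages (reading the $\mu$ in the last hypothesis as $\mu_1$, the store reached after running $m_1$), so the proof reduces to lifting each stage into the $\code{bnd}(-,-)$ evaluation context while tracking the $\Cl$-sealed cost, and then composing.

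First I would establish two \emph{congruence} lemmas for the phase-separated transition relation. (i) If $e \mapsto_{\ExtOpn}^{c} e'$ as expressions, then $(\mu, \code{bnd}(e, m)) \Mapsto_{\ExtOpn}^{c} (\mu, \code{bnd}(e', m))$. (ii) If $(\mu, m_1) \Mapsto_{\ExtOpn}^{c} (\mu_1, m_1')$ as commands, then $(\mu, \code{bnd}(\code{cmd}(m_1), m)) \Mapsto_{\ExtOpn}^{c} (\mu_1, \code{bnd}(\code{cmd}(m_1'), m))$. Each is proved by induction on the derivation of the phase-separated multistep relation, applying at each step the structural transition rule of \MA{} (deferred to \citet{harper:2012:pfpl}) that permits reduction inside the relevant subterm of $\code{bnd}$. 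Crucially each single source step corresponds to exactly one target step, so both lemmas are cost-\emph{preserving}: no $\eta_{\Cl}1$ is gained or lost.

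With these in hand the composition runs as follows. Applying (i) to $e \evalexp^{c_1} \code{cmd}(m_1)$ gives $(\mu, \code{bnd}(e, m)) \Mapsto_{\ExtOpn}^{c_1} (\mu, \code{bnd}(\code{cmd}(m_1), m))$; applying (ii) to the multistep part of $(\mu, m_1) \evalcmd^{c_2} (\mu_1, \code{ret}(a))$ gives $(\mu, \code{bnd}(\code{cmd}(m_1), m)) \Mapsto_{\ExtOpn}^{c_2} (\mu_1, \code{bnd}(\code{cmd}(\code{ret}(a)), m))$; the bind step contributes one transition costing $\eta_{\Cl}1$ into $(\mu_1, m[a])$; and the final hypothesis $(\mu_1, m[a]) \evalcmd^{c_3} (\mu', \code{ret}(b))$ closes out to the terminal state. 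Splicing these four segments with the append lemma for $\Mapsto_{\ExtOpn}$ — if $s \Mapsto_{\ExtOpn}^{c} s'$ and $s' \Mapsto_{\ExtOpn}^{c'} s''$ then $s \Mapsto_{\ExtOpn}^{c \mathrel{(\Cl +)} c'} s''$, an easy induction on the first derivation using the unit law $\eta_{\Cl}0 \mathrel{(\Cl +)} c' = c'$ — accumulates total cost $c_1 \mathrel{(\Cl +)} c_2 \mathrel{(\Cl +)} \eta_{\Cl}1 \mathrel{(\Cl +)} c_3$, equal to $c_1 + c_2 + c_3 + \eta_{\Cl}1$ after reassociating and commuting in the cost monoid lifted through $\Cl$. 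Finality of $(\mu', \code{ret}(b))$ is inherited from the last hypothesis, yielding the desired $\evalcmd$.

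I expect the only genuine care to lie in the cost bookkeeping in $\Cl\Nat$: the congruences must be verified to preserve the per-step cost exactly, so that the bind step's $\eta_{\Cl}1$ is the \emph{sole} additional cost, and the final reassociation must invoke only the monoid laws transported through the functorial (lex) structure of $\Cl$. The congruence inductions and the append lemma are otherwise routine, and there is no deep obstacle beyond stating lemmas (i) and (ii) at the right level of generality over the phase-separated multistep relation.
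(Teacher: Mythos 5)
The paper never proves this proposition: it is one of the ``Properties of phase-separated evaluation'' asserted without proof in \cref{appendix:MA}, with the underlying transition rules themselves deferred to \citet{harper:2012:pfpl}. So there is no official argument to compare against; judged on its own terms, your proof is correct and is exactly the argument the paper implicitly relies on when it invokes \cref{prop:eval-bnd} in the FTLR cases for \code{bnd} and \code{while}. Your two congruence lemmas mirror the two structural rules for $\code{bnd}$ (reduction of the expression position, then execution of the encapsulated command), the append lemma is the right gluing device for $\Mapsto_{\ExtOpn}$, and your cost accounting is sound: each source step maps to exactly one target step, so the only extra $\eta_{\Cl}1$ is the bind step itself, and the final rearrangement uses only associativity/commutativity of $+$ on $\Nat$ transported through $\Cl$ (legitimate since $\Cl$ is a lex monad, as the paper itself uses in \cref{coro:cost-adequacy}). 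Your silent correction of the statement's $\mu$ to $\mu_1$ in the third hypothesis is also right --- that is visibly a typo, confirmed by the FTLR sequence case, where the third evaluation starts from the store $\mu_1$ produced by running $\code{m_1}$.

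Two small points you should make explicit. First, the bind step $(\mu_1, \code{bnd}(\code{cmd}(\code{ret}(a)), m)) \Mapsto (\mu_1, m[a])$ is only licensed when $a$ is a value; this is discharged by the $\kw{final}$ component of the second hypothesis (final states are precisely those of the form $(\mu, \code{ret}(v))$ with $\kw{val}(v)$), so it should be cited rather than assumed. Second, your append lemma's inductive case needs commutativity, not just the unit law: the trans constructor attaches its $\eta_{\Cl}1$ on the outside, so you must commute it past $c'$; you do acknowledge this in the final paragraph, but the statement ``using the unit law'' in the lemma itself undersells what the induction needs. Neither point threatens the proof.
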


\begin{proposition}\label{prop:eval-set}
  Let $(A, h_A) :\kw{Pos}$, $e : \TmMA(\Sigma, A)$, and $\Sigma[n] = (A, h_A)$. 
  If $e \Downarrow^{c} a$, then $(\mu, \code{set}[n](e)) \Downarrow^{c+\eta_{\Cl}1} (\mu'[n \mapsto a], \mu'[n])$. 
\end{proposition}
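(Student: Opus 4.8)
The plan is to unfold the definition of phase-separated command evaluation and then exhibit the required transition sequence together with a finality witness. Recall that $(\mu, \code{set}[n](e)) \evalcmd^{c'} (\mu_1, m_1)$ unfolds to a derivation of $(\mu, \code{set}[n](e)) \Mapsto_{\ExtOpn}^{c'} (\mu_1, m_1)$ paired with $\kw{final}(m_1)$. Since $\code{ret}(\mu[n])$ is final by definition, the whole task reduces to producing a phase-separated state transition of cost $\eta_{\Cl} c \mathrel{(\Cl +)} \eta_{\Cl} 1 = \eta_{\Cl}(c+1)$ from $(\mu, \code{set}[n](e))$ to the configuration consisting of the updated store $\mu[n \mapsto a]$ and the returned old contents $\code{ret}(\mu[n])$; this matches the denotation $\semcmdMA{\code{set}[n](e)}$, which binds the value of $e$, takes one $\kw{step}$, and returns $\sigma'[n]$ with store $\sigma'[n \mapsto a]$.

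The crux is a congruence lemma for $\code{set}$. I would first extract from the hypothesis $e \Downarrow^c a$ the underlying phase-separated expression evaluation $e \evalexp^{\eta_{\Cl} c} a$, i.e. a derivation of $e \mapsto_{\ExtOpn}^{\eta_{\Cl} c} a$ together with $\kw{val}(a)$. I would then prove, by induction on this derivation, that the search rule $(\mu, \code{set}[n](e)) \Mapsto (\mu, \code{set}[n](e'))$ (available whenever $e \mapsto e'$) lifts the entire expression sequence to a state sequence of the same length, yielding $(\mu, \code{set}[n](e)) \Mapsto_{\ExtOpn}^{\eta_{\Cl} c} (\mu, \code{set}[n](a))$. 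In the base case $\kw{refl}$ the cost is $\eta_{\Cl} 0$ and the configuration is unchanged; in the step case each application of the $\kw{trans}$ constructor for expressions is routed through the set search rule to an application of $\kw{trans}$ for states, so the per-step contribution $\eta_{\Cl} 1$ is preserved and the store is left untouched throughout.

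With the value $a$ now exposed at the head of the command, I would apply the reaction rule for assignment, $(\mu, \code{set}[n](a)) \Mapsto (\mu[n \mapsto a], \code{ret}(\mu[n]))$, which is licensed precisely because $\kw{val}(a)$ holds and $\Sigma[n] = (A, h_A)$ guarantees that assignable $n$ is present in $\mu$; this final reaction contributes the remaining $\eta_{\Cl} 1$. Appending it to the sequence of the previous paragraph via $\kw{trans}$ produces a transition of total cost $\eta_{\Cl} c \mathrel{(\Cl +)} \eta_{\Cl} 1 = \eta_{\Cl}(c+1)$, and pairing this with the finality of $\code{ret}(\mu[n])$ discharges the goal.

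I expect the main obstacle to be the congruence lemma of the second paragraph. While conceptually routine, one must thread the $\Cl\Nat$-valued cost annotation correctly: the induction runs over the phase-separated, closed-modality-indexed transition relation, and each step must reconcile the expression-level cost with the state-level cost, using that $\eta_{\Cl}$ commutes with $(\Cl +)$ so that the accumulated annotations agree. Everything else is a direct appeal to the operational rules for $\code{set}$ and the definition of $\kw{final}$.
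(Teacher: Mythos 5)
The paper never proves this proposition: it is stated, together with \cref{prop:eval-dcl} and \cref{prop:eval-bnd}, as an unproved ``property of phase-separated evaluation'' in \cref{appendix:MA}, and is then simply invoked in the \textbf{set} case of the FTLR. So there is no proof to compare against; what you have supplied is the argument the paper leaves implicit, and it is the right one. Your decomposition---unfold $\evalcmd$ into $\Mapsto_{\ExtOpn}$ plus $\kw{final}$, lift the derivation of $e \mapsto_{\ExtOpn}^{c} a$ through the search rule $(\mu, \code{set}[n](e)) \Mapsto (\mu, \code{set}[n](e'))$ by induction, then fire the reaction rule $(\mu, \code{set}[n](a)) \Mapsto (\mu[n \mapsto a], \code{ret}(\mu[n]))$ using $\kw{val}(a)$ and $\Sigma[n] = (A, h_A)$, with the cost bookkeeping $c \mathrel{(\Cl +)} \eta_{\Cl}1$ handled by the fact that $\eta_{\Cl}$ is a monoid-homomorphism-like unit of a lex monad---is exactly how this proposition is meant to be discharged, and it is consistent with how the FTLR applies it (note that the paper's statement has a typo, writing $\mu'$ for $\mu$ and omitting $\code{ret}$ in the final state; you silently read it in the corrected form $(\mu[n \mapsto a], \code{ret}(\mu[n]))$, which matches the usage in the \textbf{set} case).

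Three small caveats. First, the paper only displays the transition rules for \code{while} and suppresses the rest of the dynamics, so both the search rule and the reaction rule for \code{set} are themselves assumptions imported from \citet{harper:2012:pfpl}; your proof is conditional on those rules being part of the (phase-separated) dynamics, which is clearly intended. Second, in the statement the cost $c$ already lives in $\Cl\Nat$ (the conclusion reads $c + \eta_{\Cl}1$), so there is nothing to ``extract'' as $\eta_{\Cl}c$; the induction should simply run over the derivation of $e \mapsto_{\ExtOpn}^{c} a$ with $c : \Cl\Nat$ abstract, which your congruence lemma in fact does. Third, the constructor $\kw{trans}$ \emph{prepends} a step, so ``appending'' the final reaction step is not a single application of $\kw{trans}$; either prove a snoc/concatenation lemma for $\Mapsto_{\ExtOpn}$, or (cleaner) fold the reaction step into the induction itself, proving directly that $e \mapsto_{\ExtOpn}^{c} a$ and $\kw{val}(a)$ imply $(\mu, \code{set}[n](e)) \Mapsto_{\ExtOpn}^{c \mathrel{(\Cl +)} \eta_{\Cl}1} (\mu[n \mapsto a], \code{ret}(\mu[n]))$, so that the base case fires the reaction rule and the inductive case fires the search rule. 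None of these affects the substance of your argument.
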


\subsection{Properties of the denotational semantics}

For strictly positive types, one may transport an expression between 
arbitrary signatures:
\begin{proposition}
  For all $\Sigma, \Sigma' : \kw{Sig}$, if $A$ is a positive type and $a : \TmMA(\Sigma, \Gamma, A)$, 
  then there is a term $\kw{coer}(a) : \TmMA(\Sigma', \Gamma, A)$.  
\end{proposition}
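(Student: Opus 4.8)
The plan is to define $\kw{coer}$ as a \emph{total} operation on all terms by mutual structural recursion on expressions and commands, observing that the only genuine obstruction---the signature-indexed command leaves---can be discharged using canonical default inhabitants. First I would record the auxiliary fact that \emph{every} \MA{} type is inhabited by a canonical term independent of the signature and context: define $\kw{default}_A : \TmMA(\Sigma', \Gamma, A)$ by recursion on $A : \TyMA{}$, taking $\code{\triv}$, $\code{tt}$, $\code{zero}$ at $\code{unit}, \code{bool}, \code{nat}$, taking $\code{lam}(\kw{default}_{A_2})$ at $A_1 \Rightarrow A_2$, and taking $\code{cmd}(\code{ret}(\kw{default}_A))$ at $\code{cmd}(A)$. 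Since \MA{} has no empty type this recursion goes through for \emph{every} type, not merely the positive ones, and in particular supplies a default command $\code{ret}(\kw{default}_B) : \Cmd(\Sigma', \Gamma, B)$ at every result type $B$.

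With this in hand I would define $\kw{coer}$ by mutual recursion on the constructors of \cref{fig:ma-statics}, quantifying over all pairs $\Sigma, \Sigma'$ simultaneously so that the recursion is free to change the signature. On the constructors that do not inspect a signature index---$\code{var}$, $\code{\triv}$, $\code{zero}$, $\code{suc}$, $\code{tt}$, $\code{ff}$, $\code{ifz}$, $\code{lam}$, $\code{ap}$, $\code{cmd}$, $\code{ret}$, and $\code{bnd}$---$\kw{coer}$ simply recurses on the immediate subterms, leaving variable indices and the context $\Gamma$ untouched; the $\code{lam}$, $\code{ifz}$, and $\code{bnd}$ clauses recurse under an extended context, which is permitted because $\kw{coer}$ is parametric in $\Gamma$. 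The declaration command $\code{dcl}(e, m)$ is also handled structurally: since $\kw{coer}$ ranges over all signature pairs, I would recurse on the body $m$ from $(A,-) :: \Sigma$ to $(A,-) :: \Sigma'$ and reassemble $\code{dcl}(\kw{coer}(e), \kw{coer}(m))$.

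The main obstacle is the three genuinely signature-indexed leaves $\code{get}[n]$, $\code{set}[n](e)$, and $\code{while}[n](m)$: their well-typedness requires $\Sigma[n]$ to have a prescribed shape, and for an \emph{arbitrary} target $\Sigma'$ there is in general no index recovering this shape, so no structure-preserving image can exist. Here I would discharge each leaf by the canonical default command at its result type---$\code{ret}(\kw{default}_A)$ for $\code{get}[n] \div A$, $\code{ret}(\code{tt})$ for $\code{set}[n](e) \div \code{bool}$, and $\code{ret}(\code{\triv})$ for $\code{while}[n](m) \div \code{unit}$---each of which is well-typed over $\Sigma'$ precisely because every \MA{} type is inhabited. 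This makes $\kw{coer}$ a total map $\TmMA(\Sigma, \Gamma, A) \to \TmMA(\Sigma', \Gamma, A)$, defined (mutually with $\Cmd(\Sigma, \Gamma, B) \to \Cmd(\Sigma', \Gamma, B)$) on \emph{all} terms, which is exactly the existence claim; well-foundedness is immediate since every recursive call is on a strict subterm.

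Finally, I would isolate the one property that the downstream application in \cref{prop:eval-dcl} actually requires: on a value of positive type the defaulting clauses are never reached, so $\kw{coer}$ collapses to its structural clauses and acts as the identity, up to the purely nominal re-indexing of the signature. The positive-type values are exactly $\code{\triv}$, $\code{tt}$, $\code{ff}$, and the numerals $\code{suc}^{k}(\code{zero})$, and a routine induction on these shows $\kw{coer}(b) = b$. This confirms that the construction is not vacuous and justifies writing $\code{ret}(\kw{coer}(b))$ for the deallocated result in \cref{prop:eval-dcl} without altering its value.
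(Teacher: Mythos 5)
Your construction is correct, and it in fact proves strictly more than the statement asks: the positivity hypothesis on $A$ is never used, since your defaulting traversal produces a well-typed image of \emph{every} term over an arbitrary target signature. The paper states this proposition without a written proof, but its evident intent --- visible from the only two places $\kw{coer}$ is consumed, namely \cref{prop:pos-rel} and \cref{prop:eval-dcl} --- is much narrower: there $\kw{coer}$ is only ever applied to closed \emph{values} of positive type, and both the dynamics and the logical relation at positive types (\eg{} $\code{b} \approx_{\code{bool}} b = (\code{b} = \overline{b})$) force such values to be numerals $\code{\triv}$, $\code{tt}$, $\code{ff}$, $\code{suc}^k(\code{zero})$, which are built exclusively from signature-blind constructors and hence re-typecheck verbatim at any $\Sigma'$. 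So the minimal argument is an induction on closed values of positive type sending each numeral to itself; this is exactly the identity-on-numerals fact you establish in your final paragraph, and it is the only part of your construction carrying semantic content. What your heavier route buys is totality on arbitrary open terms; what it costs is that the defaulting clauses at $\code{get}[n]$, $\code{set}[n](e)$, and $\code{while}[n](m)$ destroy meaning, so your $\kw{coer}$ preserves neither evaluation nor the logical relation away from values --- harmless here, because the bare existence claim is anyway trivial (as your $\kw{default}$ construction itself shows, every positive type is inhabited by a closed numeral at every signature and context), and the adequacy proof consumes nothing beyond existence plus the numeral-identity property. In short: your correctly identified obstruction (the signature-indexed leaves) is real for a structure-preserving coercion on all terms, but the positivity hypothesis exists precisely so that one never has to confront it; your closing observation that $\kw{coer}(b) = b$ on positive values is not an afterthought but the load-bearing content that licenses \cref{prop:pos-rel}.
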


We have that the interpretaion of types is independent of signatures on strictly positive types: 
\begin{proposition}\label{prop:tm-pos}
  If $(A, h_A) : \kw{Pos}$ then $\semtyMA{A}(\Sigma) = \semexpMA{A}(\Sigma')$ for all 
  $\Sigma, \Sigma' : \kw{Sig}$. 
\end{proposition}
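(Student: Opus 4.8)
The plan is to proceed by case analysis on the proof of strict positivity $h_A : \kw{pos}(A)$. Since the predicate $\kw{pos}$ is generated by exactly three constructors, each witnessing that $A$ is one of $\code{unit}$, $\code{bool}$, or $\code{nat}$, there are only three cases to consider and, crucially, no inductive cases to discharge. I would fix arbitrary signatures $\Sigma, \Sigma' : \kw{Sig}$ at the outset and establish the equality separately in each case.

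In each case I would unfold the defining clause of $\semtyMA{-}$ at the relevant base type. For $A = \code{unit}$ we have $\semtyMA{\code{unit}}(\Sigma) = 1$; for $A = \code{bool}$ we have $\semtyMA{\code{bool}}(\Sigma) = \kw{bool}$; and for $A = \code{nat}$ we have $\semtyMA{\code{nat}}(\Sigma) = \Nat$. In all three clauses the signature argument appears only vacuously as a binder on the left and never on the right-hand side, so the interpretation computes to a fixed type that does not mention the signature at all. Consequently $\semtyMA{A}(\Sigma)$ and $\semtyMA{A}(\Sigma')$ both reduce to the same constant type ($1$, $\kw{bool}$, or $\Nat$ respectively), and the desired equality holds by reflexivity in the extensional equality of the ambient theory.

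There is essentially no obstacle here, and it is instructive to see precisely why: the two signature-dependent clauses of $\semtyMA{-}$ are exactly the arrow type $\semtyMA{A_1 \Rightarrow A_2}(\Sigma)$ and the command type $\semtyMA{\kw{cmd}(A)}(\Sigma)$, whose interpretations quantify over future worlds $\Sigma' \ge \Sigma$ and mention $\semsigMA{\Sigma'}$. The content of the proposition is simply that no constructor of $\kw{pos}$ ever produces a type of either of these two forms, which is immediate from the definition of $\kw{pos}$ in \cref{fig:types-MA}. This independence is precisely what justifies the earlier bootstrapping remark in \cref{sec:model-MA}, namely that the meaning of strictly positive types (and hence of signatures via $\semsigMA{-}$) can be defined prior to, and independently of, the possible-worlds interpretation of general types.
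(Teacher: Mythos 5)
Your proposal is correct and is precisely the argument the paper intends: the paper states \cref{prop:tm-pos} without proof, treating it as immediate from the fact that $\kw{pos}$ has only the three base-type constructors and that the clauses of $\semtyMA{-}$ at $\code{unit}$, $\code{bool}$, and $\code{nat}$ ignore the signature argument, which is exactly your case analysis. You also silently (and correctly) read the $\semexpMA{A}(\Sigma')$ in the statement as the typo it is for $\semtyMA{A}(\Sigma')$, since $A$ is a type rather than an expression.
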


\subsection{Proof of \cref{thm:ftlr-MA}}

\begin{proposition}\label{prop:pos-rel}
  If $(A, h_A) : \kw{Pos}$ and $\code{a} \approx_{\Sigma} a$, then $\kw{coer}(\code{a}) \approx_{\Sigma'} a$ for all 
  $\Sigma, \Sigma' : \kw{Sig}$.
\end{proposition}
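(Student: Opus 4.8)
The plan is to proceed by case analysis on the witness $h_A : \kw{pos}(A)$ of strict positivity, which forces $A$ to be one of $\code{unit}$, $\code{bool}$, or $\code{nat}$. Before the case split I would invoke \cref{prop:tm-pos} to record that $\semtyMA{A}(\Sigma) = \semtyMA{A}(\Sigma')$ for strictly positive $A$ (both agreeing with $\semtyPosMA{A}$), so that the single semantic value $a$ is simultaneously an element of the domains at $\Sigma$ and at $\Sigma'$ and the goal $\kw{coer}(\code{a}) \approx_{\Sigma'} a$ typechecks in the first place.

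In each case I would first invert the hypothesis $\code{a} \approx_{\Sigma} a$ using the defining clauses of the logical relation at strictly positive types: for $\code{unit}$, $\code{bool}$, and $\code{nat}$ the relation reduces to equality with the numeral of $a$, so from $\code{a} \approx_{\Sigma} a$ we learn that $\code{a}$ is respectively $\code{\triv}$, $\overline{a}$, or $\overline{a}$. Dually, by the same clauses the goal $\kw{coer}(\code{a}) \approx_{\Sigma'} a$ amounts to showing that $\kw{coer}(\code{a})$ is again the corresponding numeral, now at signature $\Sigma'$. Rewriting along $\code{a} = \overline{a}$ (resp. $\code{a} = \code{\triv}$), the whole proposition thus collapses to the single fact that $\kw{coer}$ fixes numerals.

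The key fact is therefore that $\kw{coer}(\overline{a}) = \overline{a}$ and $\kw{coer}(\code{\triv}) = \code{\triv}$, where the left-hand numeral lives at $\Sigma$ and the right-hand one at $\Sigma'$ (the equation being between terms of type $\TmMA(\Sigma', \kw{nil}, A)$). This holds by the definition of the coercion map $\kw{coer}$: numerals of strictly positive types are built solely from the constructors $\code{\triv}$, $\code{tt}$, $\code{ff}$, $\code{zero}$, and $\code{suc}$, none of which mentions an assignable, so the recursive clauses of $\kw{coer}$ leave their underlying syntax untouched. For $\code{unit}$ and $\code{bool}$ this is immediate, while for $\code{nat}$ I would run a short induction on $a : \Nat$, with $\kw{coer}(\code{zero}) = \code{zero}$ in the base case and $\kw{coer}(\code{suc}(\overline{n})) = \code{suc}(\kw{coer}(\overline{n})) = \code{suc}(\overline{n})$ in the step, using the inductive hypothesis.

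I expect the only real obstacle to be this last fact about $\kw{coer}$, and even then the difficulty is purely a matter of unfolding the definition of the coercion rather than any genuine mathematical content: once $\kw{coer}(\overline{a}) = \overline{a}$ is in hand, the three cases close immediately by rewriting and appealing to the defining clause of $\approx_{\Sigma'}$ at the relevant base type.
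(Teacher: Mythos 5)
The paper states \cref{prop:pos-rel} without proof (it is asserted as routine and used later in the allocation case), so there is no official argument to compare against; your proof is correct and is the natural one. The case split on $h_A : \kw{pos}(A)$, the appeal to \cref{prop:tm-pos} so that $a$ inhabits both $\semtyMA{A}(\Sigma)$ and $\semtyMA{A}(\Sigma')$ and the goal typechecks, and the inversion of $\code{a} \approx_{\Sigma, A} a$ into ``$\code{a}$ is the numeral of $a$'' all match the definitions in the paper, and the reduction of the whole statement to ``$\kw{coer}$ fixes numerals'' (with the small induction on $a : \Nat$ in the $\code{nat}$ case) is exactly the missing content. The one caveat worth flagging, which you already identify: the paper introduces $\kw{coer}$ only via an existence proposition (``there is a term $\kw{coer}(a) : \TmMA(\Sigma', \Gamma, A)$'') and never writes down its clauses, so your final step is not derivable from the paper's stated interface alone --- it requires committing to the evident structural-recursive definition under which $\kw{coer}(\code{\triv}) = \code{\triv}$, $\kw{coer}(\code{zero}) = \code{zero}$, and $\kw{coer}(\code{suc}(\code{n})) = \code{suc}(\kw{coer}(\code{n}))$. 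That is a gap in the paper's exposition rather than in your argument.
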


\begin{lemma}[Hypothesis]\label{lemma:hyp-MA}
  If $v : \kw{Var}(\Gamma, A)$ and $\code{\gamma} \approx_{\Sigma, \Gamma} \gamma$, then 
  $v[\code{\gamma}] \approx_{\Sigma, A} \semvar{v}(\gamma)$. 
\end{lemma}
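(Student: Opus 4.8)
The plan is to mirror the proof of the corresponding \STLC{} statement (\cref{lemma:hyp}) and proceed by induction on the derivation of $v : \kw{Var}(\Gamma, A)$, keeping the signature $\Sigma$ fixed throughout. There are two cases, one for each constructor of $\kw{Var}$. In both, the first move is to invert the hypothesis $\code{\gamma} \approx_{\Sigma, \Gamma} \gamma$: since the presence of $v$ forces $\Gamma$ to be nonempty, say $\Gamma = A' :: \Gamma'$, the instantiation must have been assembled by the $\kw{cons}$ rule, so there exist $\code{a}, a, \code{\gamma'}, \gamma'$ with $\code{\gamma} = \kw{cons}(\code{a}, \code{\gamma'})$ and $\gamma = (a, \gamma')$, together with witnesses $\code{a} \approx_{\Sigma, A'} a$ and $\code{\gamma'} \approx_{\Sigma, \Gamma'} \gamma'$.

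For the base case $v = \code{here}$, I would compute both sides definitionally: on the syntactic side $\code{here}[\kw{cons}(\code{a}, \code{\gamma'})] = \code{a}$ by the defining equation of $\kw{cons}$ on substitutions, and on the semantic side $\semvar{\code{here}}(\gamma) = \pi_1(\gamma) = a$. The goal therefore reduces to $\code{a} \approx_{\Sigma, A} a$, which is exactly the witness extracted above.

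For the inductive case $v = \code{next}(v')$ with $v' : \kw{Var}(\Gamma', A)$, the same computations give $\code{next}(v')[\kw{cons}(\code{a}, \code{\gamma'})] = v'[\code{\gamma'}]$ and $\semvar{\code{next}(v')}(\gamma) = \semvar{v'}(\pi_2(\gamma)) = \semvar{v'}(\gamma')$. Hence the goal is $v'[\code{\gamma'}] \approx_{\Sigma, A} \semvar{v'}(\gamma')$, which is precisely the inductive hypothesis applied to $v'$ using the subderivation $\code{\gamma'} \approx_{\Sigma, \Gamma'} \gamma'$.

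There is no genuinely hard step here; this is a routine structural induction, and I expect it to be essentially identical to the \STLC{} proof. The only point requiring care is the bookkeeping around the suppressed future-world arguments of $\semvar{-}$: the semantic variable for \MA{} additionally takes a world $\Sigma' \ge \Sigma$ that it threads unchanged into the recursive call, whereas the statement elides these parameters. I would make sure the suppression is read consistently (e.g.\ instantiating at $\Sigma$ with $\kw{refl}$) so that the definitional unfoldings of $\semvar{\code{here}}$ and $\semvar{\code{next}(v')}$ above are literally the projection $\pi_1$ and the composite $\semvar{v'} \circ \pi_2$, leaving no stray application of the monotonicity action $\kw{up}$ to reconcile.
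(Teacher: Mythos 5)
Your proof is correct and follows essentially the same route as the paper: the paper proves the \STLC{} analogue (\cref{lemma:hyp}) by exactly this structural induction on $v$ — inverting the $\kw{cons}$ case of the instantiation relation for $\code{here}$ and invoking the inductive hypothesis for $\code{next}$ — and states \cref{lemma:hyp-MA} without further proof, the MA case being the same induction with the world parameters $(\Sigma', p)$ threaded through unchanged, precisely the bookkeeping point you flag.
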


\begin{theorem}[FTLR]
  Given an expression $e : \TmMA(\Sigma, \Gamma, A)$, if $p : \Sigma' \ge \Sigma$ 
  and $\code{\gamma'} \approx_{\Sigma', \Gamma} \gamma'$, then 
  $(\Uparrow^p e)[\code{\gamma'}] \approx^\Downarrow_{\Sigma', A} \semexpMA{e}(\Sigma', p, \gamma')$. 
  Moreover, given a command $m : \Cmd(\Sigma, \Gamma, A)$, if $p : \Sigma' \ge \Sigma$ 
  and $\code{\gamma'} \approx_{\Sigma', \Gamma} \gamma'$, then 
  $(\Uparrow^p e)[\code{\gamma'}] \sim_{\Sigma', A} \semcmdMA{m}(\Sigma', p, \gamma')$. 
\end{theorem}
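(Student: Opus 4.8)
The plan is to proceed by \emph{mutual} induction on the derivations of the expression $e$ and the command $m$, unfolding the logical relation for expressions $\approx^\Downarrow$ and the command relation $\sim$ (i.e.\ the prelogical relation $\kw{cmd}_{\Sigma',A}(\approx_{\Sigma',A})$) in each case. The expression cases mirror the \STLC{} argument of \cref{appendix:STLC}, now carrying the world index $p : \Sigma' \ge \Sigma$: variables discharge via \cref{lemma:hyp-MA}; the observables $\code{tt},\code{ff},\code{zero},\code{\triv}$ denote $\ret{v}$ at cost $0$ and the corresponding numeral is related by construction; $\code{suc}$ and $\code{ifz}$ follow from the IH on the scrutinee after splitting the inserted $\kw{step}$ with $\kw{step}^{-1}$ and $\kw{bind}^{-1}$; $\code{lam}$ and $\code{ap}$ are as in \STLC{} but instantiate the function clause of $\approx_{\Sigma',A_1\Rightarrow A_2}$ at the supplied future world, commuting the substitution through $\Uparrow$ via \cref{prop:subst-MA}; and $\code{cmd}(m)$ reduces directly to the command half of the IH, packaging the reified command uniformly over future worlds.

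For the command cases I would unfold $\sim$ to its operative content: given related stores $\mu \sim_{\Sigma'} \sigma$ and a denotational run $\semcmdMA{m}(\Sigma',p,\gamma',\sigma) = \mstep{c}{\retl{(a,\sigma')}}$, I must exhibit $\code{a} \approx_{\Sigma',A} a$ and $\mu' \sim_{\Sigma'} \sigma'$ with $(\mu, (\Uparrow^p m)[\code{\gamma'}]) \evalcmd^{\eta_{\Cl} c} (\mu', \code{ret}(\code{a}))$. The case $\code{ret}(a)$ is immediate from the expression IH on $a$. For $\code{bnd}(e,m)$, $\code{get}[n]$, $\code{set}[n](e)$ and $\code{dcl}(e,m)$ I would apply $\kw{bind}_{\kw{L}}^{-1}$ and $\kw{step}_{\kw{L}}^{-1}$ to decompose the observed cost $c$ across the sub-computations, feed each piece to the appropriate IH (the expression IH for sub-expressions, the command IH for sub-commands, at the extended signature $\kw{mono}(p)$ in the $\code{dcl}$ case), and re-assemble the operational derivation using \cref{prop:eval-bnd,prop:eval-set,prop:eval-dcl}; the store-relation obligations ($\code{get}$ returns a related value, $\code{dcl}$ extends both stores compatibly) follow from the definition of $\sim_{\Sigma'}$ together with \cref{prop:pos-rel}. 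Throughout, the cost equations produced by these axioms live inside $\Cl$, so I would carry the additive bookkeeping under the $\Cl$-modality and discharge it only against a proposition.

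The hard case, and the reason for the iteration axioms of \cref{sec:iteration}, is $\code{while}[n](m)$, interpreted as $\kw{iter}(g)(\sigma)$. Here I would first apply $\kw{iter}/\kw{trunc}$ to the hypothesis $\kw{iter}(g)(\sigma) = \mstep{c}{\retl{(\triv,\sigma')}}$ to obtain (merely) a bound $k : \Nat$ with $\kw{seq}(g,k)(\sigma) = \mstep{c}{\retl{\kw{inl}(\triv,\sigma')}}$, reducing the unbounded loop to a finite prefix, and then do an inner induction on $k$. The base $k=0$ is vacuous, since $\kw{seq}(g,0)(\sigma) = \retl{\kw{inr}\,\sigma}$ cannot match a terminal $\mstep{c}{\retl{\kw{inl}(\triv,\sigma')}}$ (the summands $\kw{inl}$ and $\kw{inr}$ are disjoint). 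In the step case I read $\sigma[n]$, which agrees with $\mu[n]$ because $\mu \sim_{\Sigma'} \sigma$: when it is $\code{ff}$ the loop terminates in one transition of cost $1$, matching the first rule of \cref{fig:dynamics-MA}; when it is $\code{tt}$ I decompose $\kw{seq}(g,k{+}1)(\sigma)$ with $\kw{bind}_{\kw{L}}^{-1}$ and $\kw{step}_{\kw{L}}^{-1}$ into one run of the body $\semcmdMA{m}$ producing some $\sigma_1$ followed by $\kw{seq}(g,k)(\sigma_1)$, apply the command IH to the body to get $\mu_1 \sim_{\Sigma'} \sigma_1$, apply the inner IH at $k$ to the tail, and stitch the two together through the unfolding rule of \cref{fig:dynamics-MA} and \cref{prop:eval-bnd}. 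I expect the real difficulty to be exactly this accounting: splitting the $\Cl$-sealed cost $c$ additively across the body and the residual iterations while keeping the operational and denotational step counts aligned, and eliminating the truncation supplied by $\kw{iter}/\kw{trunc}$, which is legitimate only because the goal is a proposition (the operational semantics being deterministic). Uniqueness of cost bounds, $\kw{step}_{\kw{L}}/\kw{inj}$, is what finally pins the syntactic outcome to the denotational one.
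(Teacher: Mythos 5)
Your proposal is correct and follows essentially the same route as the paper's proof: the same mutual induction with the expression cases patterned on the \STLC{} argument, the same decomposition of observed costs via $\kw{bind}_{\kw{L}}^{-1}$, $\kw{step}_{\kw{L}}^{-1}$, and $\kw{step}/\kw{inj}$ with reassembly through \cref{prop:eval-bnd,prop:eval-set,prop:eval-dcl}, and, for $\code{while}$, the same reduction by $\kw{iter}/\kw{trunc}$ to a finite prefix followed by an inner induction on $k$ with a vacuous base case and the $\kw{ff}$/$\kw{tt}$ split stitched together by the loop-unfolding transition. The only cosmetic difference is that in the $k = 0$ case the paper first strips the $\kw{step}$ prefix using \cref{prop:step-ret-lift} before invoking disjointness of $\kw{inl}$ and $\kw{inr}$, a detail your sketch elides, while you in turn make explicit the propositionality justification for eliminating the truncation that the paper leaves implicit.
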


\subsubsection{\textbf{Case}: iteration}
  By assumption, we have that $\Sigma[n] = (\code{bool}, -)$ and 
  $m : \Cmd(\Sigma, \Gamma, \code{unit})$. 
  Suppose that $p : \Sigma' \ge \Sigma$ and $\code{\gamma'} \approx_{\Sigma', \Gamma} \gamma'$. 
  We have to show that $(\Uparrow^p \code{while}[n](m)) [\code{\gamma'}] \sim_{\Sigma', \code{unit}} \semcmdMA{\code{while}[n](m)}(\Sigma', p, \gamma')$. 
  We may compute each side: 
  \begin{align*}
    (\Uparrow^p \code{while}[n](m)) [\code{\gamma'}] &= \code{while}[\kw{sh}(p, n)](\Uparrow^p m) [\code{\gamma'}] =\code{while}[\kw{sh}(p, n)]((\Uparrow^p m) [\code{\gamma'}])
  \end{align*}
  \begin{align*}
    \semcmdMA{\code{while}[n](m)}(\Sigma', p, \gamma') &= \lambda \sigma' : \semsigMA{\Sigma'}.\; \kw{iter}(g)(\sigma')
  \end{align*}
  
  Let $c : \Nat$, $\sigma, \sigma' : \semsigMA{\Sigma'}$. 
  Suppose that $\kw{iter}(g)(\sigma') = \mstep{c}{\retl{\triv, \sigma''}}$.  
  By $\kw{iter}/\kw{trunc}$, we know that 
  $\kw{seq}(g, k, \sigma') = \mstep{c}{\retl{\inl{\triv, \sigma''}}}$ for some $k : \Nat$. 
  We have to show that 
  if $\mu' \sim_{\Sigma'} \sigma'$ then 
  $(\mu', \code{while}[\kw{sh}(p, n)]((\Uparrow^p m) [\code{\gamma'}])) \evalcmd^{\eta_{\Cl} c}
  (\mu'', \code{ret}(\code{u}))$ such that $\code{u} \approx_{\Sigma', \code{unit}} \triv$ and $\mu'' \sim_{\Sigma'} \sigma''$.  
  We proceed by induction on $k$. 

  If $k = 0$, by definition we have $\kw{seq}(g, 0, \sigma') = \retl{\inr{\sigma'}}$, and so 
  $\mstep{c}{\retl{\inl{\triv, \sigma''}}} = \retl{\inr{\sigma'}}$ as well. 
  By \cref{prop:step-ret-lift}, we have that $\inl{\triv, \sigma''} = \inr{\sigma'}$, 
  which is a contradiction.  

  Otherwise, we have that $k = k' + 1$ for some $k' : \Nat$. 
  By definition, we have that $\kw{seq}(g, k' + 1, \sigma') = \bindl{g \sigma'}{[\kw{ret} \circ \kw{inl}; \kw{seq}(g, k')]}$. 
  We proceed by cases on $\sigma'[\kw{sh}(p, n)] : \kw{bool}$. 

  If $\sigma'[\kw{sh}(p, n)] = \kw{ff}$, then we have that 
  $g \sigma' = \mstep{1}{\retl{\inl{\triv, \sigma'}}}$, and so we have the following: 
  \begin{align*}
    \kw{seq}(g, k' + 1, \sigma') &= \bindl{g \sigma'}{[\kw{ret}_{\kw{L}} \circ \kw{inl}; \kw{seq}(g, k')]}\\ 
    &= \bindl{\mstep{1}{\retl{\inl{\triv, \sigma'}}}}{[\kw{ret}_{\kw{L}} \circ \kw{inl}; \kw{seq}(g, k')]}\\ 
    &= \mstep{1}{\retl{\inl{\triv, \sigma'}}}\\
    &= \mstep{c}{\retl{\inl{\triv, \sigma''}}}
  \end{align*}
  By $\kw{step}/\kw{inj}$, we have that $\sigma' = \sigma''$ and a term $h : \Cl(c = 1)$. 
  Now suppose that $\mu' \sim_{\Sigma'} \sigma'$. 
  Because $\sigma'[\kw{sh}(p, n)] = \kw{ff}$, we also know that $\mu'[\kw{sh}(p, n)] = \code{ff}$, 
  which means that $(\mu', \code{while}[\kw{sh}(p, n)]((\Uparrow^p m) [\code{\gamma'}])) \Mapsto 
  (\mu', \code{ret}(\code{\triv}))$. 
  By definition of the logical relation for expressions, we have $\code{\triv} \approx_{\Sigma', \code{unit}} \triv$, 
  and combined with the assumption we have $\mu' \sim_{\Sigma'} \sigma'$.
  Consequently, it suffices to show that 
  $(\mu', \code{while}[\kw{sh}(p, n)]((\Uparrow^p m) [\code{\gamma'}])) \evalcmd^{\eta_{\Cl} 1} (\mu', \code{ret}(\code{\triv}))$,
  which follows from the definition of phase-separated evaluation.  
  
  Otherwise, we have that $\sigma'[\kw{sh}(p, n)] = \kw{tt}$. By definition,
  we have the following: 
  \[g \sigma' = (-, \sigma_1) \leftarrow_{\kw{L}} \semcmdMA{m}(\Sigma', p, \gamma', \sigma'); \mstep{1}{\retl{\inr{\sigma_1}}}\]
  and so we have the following:
  \begin{align*}
    \kw{seq}(g, k' + 1, \sigma') &= (-, \sigma_1) \leftarrow_{\kw{L}} \semcmdMA{m}(\Sigma', p, \gamma', \sigma'); \mstep{2}{\kw{seq}(g, k', \sigma_1)}\\
    &= \mstep{c}{\retl{\inl{\triv, \sigma''}}}
  \end{align*}
  By $\kw{bind}_{\kw{L}}^{-1}$ and $\kw{step}_{\kw{L}}^{-1}$, we have that 
  $\semcmdMA{m}(\Sigma', p, \gamma', \sigma') = \mstep{c_1}{\retl{-, \sigma_1}}$ for some $\sigma_1 : \semsigMA{\Sigma'}$ and 
  $c_1$ and $\kw{seq}(g, k', \sigma_1) = \mstep{c_2}{\retl{\inl{\triv, \sigma''}}}$ 
  for some $c_2$ such that and $h : \Cl(c = c_1 + c_2 + 2)$. 
  Now suppose that $\mu' \sim_{\Sigma'} \sigma'$.  
  By induction on $m$, we have that $(\Uparrow^p m)[\code{\gamma'}] \sim_{\Sigma', \code{unit}} \semcmdMA{m}(\Sigma', p, \gamma')$. 
  Unfolding the definition of the logical relation for commands, we have that 
  $(\mu', (\Uparrow^p m)[\code{\gamma'}]) \evalcmd^{\eta_{\Cl} c_1} (\mu_1, \code{ret}(\code{u}))$ for some 
  $\code{u}$ and $\mu_1$ such that $\code{u} \approx_{\Sigma', \code{unit}} \triv$ and 
  $\mu_1 \sim_{\Sigma'} \sigma_1$. Moreover, by the induction hypothesis on $k'$, we have that 
  $(\mu_1, \code{while}[\kw{sh}(p, n)]((\Uparrow^p m) [\code{\gamma'}])) \evalcmd^{\eta_{\Cl} c_2}
  (\mu'', \code{ret}(\code{u'}))$ for some $\code{u'}$ and $\mu''$ such that $\code{u'} \approx_{\Sigma', \code{unit}} \triv$ and $\mu'' \sim_{\Sigma'} \sigma''$.  
  By \cref{prop:eval-bnd}, we have the following:
  \[(\mu', \code{bnd}(\code{cmd}((\Uparrow^p m) [\code{\gamma'}]), \code{while}[\kw{sh}(p, n)]((\Uparrow^p m) [\code{\gamma'}])))
  \Downarrow^{\eta_{\Cl}(c_1 + c_2 + 1)} (\mu'', \code{ret}(\code{u'}))\] 
  Since $\mu' \sim_{\Sigma'} \sigma'$ and $\sigma'[\kw{sh}(p, n)] = \kw{tt}$, we know that 
  $\mu'[\kw{sh}(p, n)] = \code{tt}$, so we also the following:
  \[(\mu', \code{while}[\kw{sh}(p, n)]((\Uparrow^p m) [\code{\gamma'}])) \Mapsto
  (\mu', \code{bnd}(\code{cmd}((\Uparrow^p m) [\code{\gamma'}]), \code{while}[\kw{sh}(p, n)]((\Uparrow^p m) [\code{\gamma'}])))\] 
  Therefore, we have the following:
  \[(\mu', \code{while}[\kw{sh}(p, n)]((\Uparrow^p m) [\code{\gamma'}]))
  \Downarrow^{\eta_{\Cl}(c_1 + c_2 + 2)} (\mu'', \code{ret}(\code{u'}))\] 
  The result then follows from $h : \Cl(c = c_1 + c_2 + 2)$, as 
  it implies $\eta_{\Cl} c = \eta_{\Cl} (c_1 + c_2 + 2)$.  

\subsubsection{\textbf{Case}: sequence} 
  Suppose $p : \Sigma' \ge \Sigma$ and $\code{\gamma'} \approx_{\Sigma', \Gamma} \gamma'$. 
  We have to show that $(\Uparrow^p \code{bnd}(e, m)) [\code{\gamma'}] \sim_{\Sigma', B} \semcmdMA{\code{bnd}(e,m)}(\Sigma', p, \gamma')$. 
  We compute: 

  \paragraph{Computation}
  \begin{align*}  
  (\Uparrow^p \code{bnd}(e, m)) [\code{\gamma'}] &= \code{bnd}(\Uparrow^p e, \Uparrow^p m)[\code{\gamma'}]\\ 
  &=\code{bnd}((\Uparrow^p e) [\code{\gamma'}], (\Uparrow^p m)[\uparrow^A\code{\gamma'}]) 
  \end{align*}

  \paragraph{Computation}
  \begin{align*}
    \semcmdMA{\code{bnd}(e,m)}(\Sigma', p, \gamma') &= \lambda \sigma'.\; \\
    m_1 &\leftarrow_{\kw{L}} \kw{lift}(\semexpMA{e}(\Sigma', p, \gamma'));\\
    (a, \sigma_1) &\leftarrow_{\kw{L}} m_1(\Sigma', \kw{refl}, \sigma');\\
    &\step{\semcmdMA{m}(\Sigma', p, (a, \gamma'), \sigma_1)}\\ 
  \end{align*}

  Let $\sigma', \sigma'' : \semsigMA{\Sigma'}$ and 
  suppose that $\semcmdMA{\code{bnd}(e,m)}(\Sigma', p, \gamma')(\sigma') = \mstep{c}{\retl{b, \sigma''}}$
  for some $c : \Nat$ and $b : \semtyMA{B}(\Sigma')$. 
  Moreover, suppose that $\mu' : \kw{Store}(\Sigma')$ such that $\mu' \sim_{\Sigma'} \sigma'$. 
  We have to show that $(\mu', \code{bnd}((\Uparrow^p e) [\code{\gamma'}], (\Uparrow^p m)[\uparrow^A\code{\gamma'}]) ) 
  \Downarrow^{\eta_{\Cl} c}_{\kw{cmd}} (\mu'', \code{ret}(\code{b}))$ for some 
  $\code{b} \approx_{\Sigma', B} b$ and $\mu'' \sim_{\Sigma'} \sigma''$. 
  By the assumption that $\semcmdMA{\code{bnd}(e,m)}(\Sigma', p, \gamma')(\sigma')$ 
  has a normal form, we may apply 
  $\kw{bind}_{\kw{L}}^{-1}$ and $\kw{step}_{\kw{L}}^{-1}$ to obtain the following: 
  \begin{enumerate}
    \item $\kw{lift}(\semexpMA{e}(\Sigma', p, \gamma')) = \mstep{c_1}{\retl{m_1}}$ for some 
    $c_1$ and $m_1 : \semtyMA{\code{cmd}(A)}(\Sigma')$. \label{hyp:e}
    \item $m_1(\Sigma', \kw{refl}, \sigma') = \mstep{c_2}{\retl{a, \sigma_1}}$ for some 
    $c_2$, $a : \semtyMA{A}(\Sigma')$ and $\sigma_1 : \semsigMA{\Sigma'}$. \label{hyp:m-1}
    \item $\semcmdMA{m}(\Sigma', p, (a, \gamma'), \sigma_1) = \mstep{c_3}{\retl{b, \sigma''}}$ for some 
    $c_3$. \label{hyp:m} 
    \item $\Cl (c = c_1 + c_2 + c_3 + 1)$. \label{hyp:eq-costs}
  \end{enumerate}
  By induction on $e$, we know that 
  $(\Uparrow^ p e) [\code{\gamma'}] \approx^{\Downarrow}_{\Sigma', \code{cmd}(A)} \semexpMA{e}(\Sigma', p, \gamma')$. 
  From \cref{hyp:e} and the injectivity of $\kw{lift}$ we know that 
  $\semexpMA{e}(\Sigma', p, \gamma') = \mstep{c_1}{\ret{m_1}}$, so 
  by definition of the lift of the logical relation $-^{\Downarrow}$, we know that 
  there exists a program $\code{v_1} : \kw{Pg}(\Sigma', \code{cmd}(A))$ such that 
  $(\Uparrow^ p e) [\code{\gamma'}] \Downarrow^{\eta_{\Cl} c_1} \code{v_1}$ and $\code{v_1} \approx_{\Sigma', \code{cmd}(A)} m_1$. 
  Because $\code{v_1}$ is a value, we know that $\code{v_1} = \code{cmd}(\code{m_1})$ for some 
  command $\code{m_1} : \kw{Cmd}(\Sigma', A)$. 
  By definition of the logical relation for expressions, this implies that 
  $\code{m_1} \sim_{\Sigma', A} m_1(\Sigma', \kw{refl})$ holds. 
  By \cref{hyp:m-1} and the definition of the relation for commands, there exists $\mu_1$ and $\code{a}$ such that
  $(\mu', \code{m_1}) \evalcmd^{\eta_{\Cl} c_2} (\mu_1, \code{ret}(\code{a}))$ 
  and $\code{a} \approx_{\Sigma', A} a$ and $\mu_1 \sim_{\Sigma'} \sigma_1$. 
  Along with the definition of the logical relation for contexts, we have that 
  $\kw{cons} (\code{a}, \code{\gamma'}) \approx_{\Sigma', A::\Gamma} (a, \gamma')$. 
  Finally, by induction on $m$, we have that 
  $(\Uparrow^p m) [\kw{cons} (\code{a}, \code{\gamma'})] \sim_{\Sigma', B} \semcmdMA{m}(\Sigma', p, (a, \gamma'))$. 
  By \cref{hyp:m} and the fact that $\mu_1 \sim_{\Sigma'} \sigma_1$, 
  there exists $\mu''$ and $\code{b}$ such that $(\mu_1, (\Uparrow^p m) [\kw{cons} (\code{a}, \code{\gamma'})]) 
  \evalcmd^{\eta_{\Cl} c_3} (\mu'', \code{ret}(\code{b}))$ and
  $\code{b} \approx_{\Sigma', B} b$ and $\mu'' \sim_{\Sigma'} \sigma''$. 
  
  By \cref{prop:subst-MA}, we have that $(\Uparrow^p m) [\kw{cons} (\code{a}, \code{\gamma'})] = (\Uparrow^p m)[\uparrow^A\code{\gamma'}][\code{a}]$, 
  so the result will follow by \cref{prop:eval-bnd}, given that 
  we can show $\eta_{\Cl} c = \eta_{\Cl} c_1 + \eta_{\Cl} c_2 + \eta_{\Cl} c_3 + \eta_{\Cl} 1$. 
  Since this holds by \cref{hyp:eq-costs}, we are done. 

\subsubsection{\textbf{Case}: allocation}
  By assumption we have that $(A, h_A) : \kw{Pos}$ and $(B, h_B) : \kw{Pos}$, 
  $e : \TmMA(\Sigma, \Gamma, A)$, and $m : \Cmd(A::\Sigma, \Gamma, B)$. 
  Suppose $p : \Sigma' \ge \Sigma$ and $\code{\gamma'} \approx_{\Sigma', \Gamma} \gamma'$.  
  We have to show that $(\Uparrow^p (\code{dcl}(e, m)))[\code{\gamma'}] \sim_{\Sigma', B} 
  \semcmdMA{\code{dcl}(e, m)}(\Sigma', p, \gamma')$. 
 
  We may compute each side of this relation:
  \paragraph{Computation}
  \begin{align*}
    (\Uparrow^p (\code{dcl}(e, m)))[\code{\gamma'}] &= \code{dcl}(\Uparrow^p e, \Uparrow^{\kw{mono}(p)} m)[\code{\gamma'}] \\
    &= \code{dcl}((\Uparrow^p e) [\code{\gamma'}], (\Uparrow^{\kw{mono}(p)}m)[\Uparrow^{\kw{extend}(p)} \code{\gamma'}]) 
  \end{align*}

  \paragraph{Computation}\label[section]{comp:1}
  \begin{align*}
    \semcmdMA{\code{dcl}(e, m)}(\Sigma', p, \gamma')(\sigma') &= \\
    a &\leftarrow_{\kw{L}} \kw{lift}(\semexpMA{e}(\Sigma', p, \gamma'));\\
    (b, (-, \sigma_1)) &\leftarrow_{\kw{L}} \semcmdMA{m}(A::\Sigma', \kw{mono}(p), \uparrow^{\kw{extend}(p)} \gamma', (a, \sigma')); \\
    &\mstep{1}{\retl{b, \sigma_1}}\\ 
  \end{align*}

  Let $\sigma', \sigma'' : \semsigMA{\Sigma'}$, $b : \semtyMA{B}(\Sigma')$, $c : \Nat$, and 
  suppose that $\semcmdMA{\code{dcl}(e, m)}(\Sigma', p, \gamma')(\sigma') = \mstep{c}{\retl{b, \sigma''}}$. 
  We have to show that if $\mu' \sim_{\Sigma'} \sigma'$, then 
  $(\mu', \code{dcl}((\Uparrow^p e) [\code{\gamma'}], (\Uparrow^{\kw{mono}(p)}m)[\Uparrow^{\kw{extend}(p)} \code{\gamma'}]) ) 
  \Downarrow^{\eta_{\Cl}c}_{\kw{cmd}} (\mu'', \code{ret}(\code{b}))$ for some 
  $\code{b} : \kw{Pg}(\Sigma', B)$ and $\mu'' : \kw{Store}(\Sigma')$ such that 
  $\code{b} \approx_{\Sigma'} b$, and $\mu'' \sim_{\Sigma'} \sigma''$.  

  By $\kw{bind}_{\kw{L}}^{-1}$, $\kw{step}_{\kw{L}}^{-1}$, and $\kw{step}/\kw{inj}$ we have the following:
  \begin{enumerate}
    \item $\semexpMA{e}(\Sigma', p, \gamma') = \mstep{c_1}{\ret{a}}$ for some $a$ and $c_1$. 
    \item $\semcmdMA{m}(A::\Sigma', \kw{mono}(p), \uparrow^{\kw{extend}(p)} \gamma', (a, \sigma')) = \mstep{c_2}{\ret{b, (-, \sigma'')}}$ for some $c_2$. \label{hyp:dcl-m}
    \item $\Cl(c = c_1 + c_2 + 1)$. 
  \end{enumerate}
  Note that \cref{hyp:dcl-m} is type correct because $b : \semtyMA{B}(\Sigma') = \semtyMA{B}(A::\Sigma')$ by \cref{prop:tm-pos}. 
  By induction on $e$, we have that $(\Uparrow^p e) [\code{\gamma'}] \approx^{\Downarrow}_{\Sigma', A} \semexpMA{e}(\Sigma', p, \gamma')$. 
  By definition of the lifting relation, we have that 
  $(e_a, v_a) : (\Uparrow^p e) [\code{\gamma'}] \Downarrow^{\eta_{\Cl} c_1} \code{a}$ for some $\code{a} : \kw{Pg}(\Sigma', A)$ 
  such that $\code{a} \approx_{\Sigma', A} a$. 
  Now suppose that $\mu' \sim_{\Sigma'} \sigma'$. 
  By $\sim.\kw{extend}$, we have a proof of 
  $((\code{a}, v_a)::\mu') \sim_{(A, h_A)::\Sigma'} (a, \sigma')$.  
  By induction on $m$, we know that 
  $(\Uparrow^{\kw{mono}(p)} m)[\Uparrow^{\kw{extend}(p)} \code{\gamma'}] 
  \sim_{(A, h_A)::\Sigma'} \semcmdMA{m}(A::\Sigma', \kw{mono}(p), \uparrow^{\kw{extend}(p)} \gamma')$. 
  Therefore, we know that  
  $((\code{a}, v_a)::\mu', (\Uparrow^{\kw{mono}(p)} m)[\Uparrow^{\kw{extend}(p)} \code{\gamma'}])\Downarrow^{\eta_{\Cl} c_2}
  (-::\mu'', \code{ret}(\code{b}))$ for some $\code{b} : \kw{Pg}(A::\Sigma', B)$ such that 
  $\code{b} \approx_{A::\Sigma'} b$ and $-::\mu'' \sim_{A::\Sigma'} (-, \sigma'')$. 
  Because $B$ is a positive type, by \cref{prop:pos-rel} we have 
  $\kw{coer}(\code{b}) \approx_{\Sigma'} b$.  
  By \cref{prop:eval-dcl}, we know that 
  $(\mu', \code{dcl}((\Uparrow^p e) [\code{\gamma'}], (\Uparrow^{\kw{mono}(p)} m)[\Uparrow^{\kw{extend}(p)} \code{\gamma'}])) 
  \Downarrow^{\eta_{\Cl}(c_1 + c_2 + 1)} (\mu'', \code{ret}(\kw{coer}(\code{b})))$. 

  \subsubsection{\textbf{Case}: set}
  By case we have $e : \TmMA(\Sigma, \Gamma, A)$ and $\Sigma[n] = (A, h_A)$. 
  Suppose $p : \Sigma' \ge \Sigma$ and $\code{\gamma'} \approx_{\Sigma', \Gamma} \gamma'$. 
  We have to show that $(\Uparrow^p \code{set}[n](e)) [\code{\gamma'}] \sim_{\Sigma', A} \semcmdMA{\code{set}[n](e)}(\Sigma', p, \gamma')$.
  \paragraph{Computation}
  \begin{align*}
    (\Uparrow^p \code{set}[n](e)) [\code{\gamma'}] &= \code{set}[\kw{sh}(p, n)](\Uparrow^p e) [\code{\gamma'}]\\
    &=\code{set}[\kw{sh}(p, n)]((\Uparrow^p e) [\code{\gamma'}])  
  \end{align*}

  \paragraph{Computation}
  \begin{align*}
    &\semcmdMA{\code{set}[n](e)}(\Sigma', p, \gamma') =\\
    &\lambda \sigma'.\; 
    a \leftarrow_{\kw{L}} \kw{lift}(\semexpMA{e}(\Sigma', p, \gamma', \sigma')); \step{\retl{\sigma'[\kw{sh}(p, n)], \sigma'[\kw{sh}(p, n) \mapsto a]}}
  \end{align*}
  Suppose $\sigma', \sigma'' : \semsigMA{\Sigma'}$, $a' : \semtyMA{A}(\Sigma')$, $c : \Nat$ and 
  $\semcmdMA{\code{set}[n](e)}(\Sigma', p, \gamma', \sigma') = \mstep{c}{\retl{a', \sigma''}}$. 
  Moreover, suppose that $\mu' \sim_{\Sigma'} \sigma'$. We want to show that 
  $(\mu',\code{set}[\kw{sh}(p, n)]((\Uparrow^p e) [\code{\gamma'}])) \Downarrow^{\eta_{\Cl} c}  (\mu'', \code{ret}(\code{a'}))$ for some
  $\mu'' \sim_{\Sigma'} \sigma''$ and $\code{a'} \approx_{\Sigma', A} a'$.  
  By $\kw{bind}_{\kw{L}}^{-1}$, $\kw{step}_{\kw{L}}^{-1}$, and $\kw{lift}/\kw{inj}$ we have the following: 
  \begin{enumerate}
    \item $\semexpMA{e}(\Sigma', p, \gamma', \sigma') = \mstep{c_1}{\ret{a}}$ for some $a$ and $c_1$.
    \item $a' = \sigma'[\kw{sh}(p, n)]$.
    \item $\sigma'' = \sigma'[\kw{sh}(p, n) \mapsto a]$. 
    \item $\Cl(c = c_1 + 1)$. 
  \end{enumerate}
  By induction on $e$, we have that $(\Uparrow^p e) [\code{\gamma'}] \approx^{\Downarrow}_{\Sigma', A} \semexpMA{e}(\Sigma', p, \gamma')$. 
  Unfolding the lifting relation we have that 
  $(\Uparrow^p e) [\code{\gamma'}] \Downarrow^{\eta_{\Cl} c_1} \code{a}$ for some $\code{a} : \kw{Pg}(\Sigma', A)$ 
  such that $\code{a} \approx_{\Sigma', A} a$. 
  By definition of the operational semantics of \MA{}, we have the following: 
  \[(\mu',\code{set}[\kw{sh}(p, n)](\code{a})) \mapsto_{\kw{cmd}} 
  (\mu'[\kw{sh}(p, n) \mapsto \code{a}], \ret{\mu'[\kw{sh}(p, n)]})\]
  Therefore, we have 
  $(\mu',\code{set}[\kw{sh}(p, n)]((\Uparrow^p e) [\code{\gamma'}])) \Downarrow^{\eta_{\Cl} (c_1 + 1)} 
  (\mu'[\kw{sh}(p, n) \mapsto \code{a}], \code{ret}(\mu'[\kw{sh}(p, n)]))$ by \cref{prop:eval-set}. 
  The result holds by observing that $\mu'[\kw{sh}(p, n)] \approx_{A} \sigma'[\kw{sh}(p, n)]$ and 
  $\mu'[\kw{sh}(p, n) \mapsto \code{a}] \sim_{\Sigma'} \sigma'[\kw{sh}(p, n) \mapsto a]$ since 
  $\mu' \sim_{\Sigma'} \sigma'$ and $\code{a} \approx_A a$. 
  \section{Model construction}\label{appendix:model}

\subsection{Decomposition of cost bounds} 
We use the following proposition of \citet{altenkirch-danielsson-kraus:2017}: 
\begin{proposition}[Inversion]\label{prop:bind-inversion}
  Given $e : A_\bot$ and $f : A \to B_\bot$, 
  if $a \leftarrow_\bot e; f = \eta_\bot(b)$, then there merely exists 
  $a : A$ such that $e = \eta_\bot(a)$ and $f(a) = \eta_\bot(b)$. 
\end{proposition}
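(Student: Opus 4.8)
The plan is to prove this inversion principle by induction on $e : A_\bot$ using the elimination principle of the quotient inductive--inductive partiality monad of \citet{altenkirch-danielsson-kraus:2017}. First I would fix $f : A \to B_\bot$ and $b : B$ and consider the motive
\[
  P(e) \coloneqq \left( (a \leftarrow_\bot e; f(a)) = \eta_\bot(b) \right) \to \left\lVert \Sigma a : A.\; (e = \eta_\bot(a)) \times (f(a) = \eta_\bot(b)) \right\rVert.
\]
The crucial structural observation is that $P(e)$ is a mere proposition for every $e$, since its codomain is a propositional truncation and an implication into a proposition is again a proposition. This means that when the induction principle of $A_\bot$ is applied, all of the path constructors --- the ordering constructors witnessing antisymmetry and monotonicity of $\eta_\bot$ and of suprema, together with the set-truncation coherences --- are discharged automatically, and only the point constructors remain to be checked.

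I would then handle the point constructors in turn. For the unit $\eta_\bot(a)$, the left-unit law gives $(a' \leftarrow_\bot \eta_\bot(a); f(a')) = f(a)$, so the hypothesis yields $f(a) = \eta_\bot(b)$ directly, and $a$ is the required witness with $\eta_\bot(a) = \eta_\bot(a)$ holding reflexively. For the least element $\kw{never}$, the computation law $(a \leftarrow_\bot \kw{never}; f(a)) = \kw{never}$ reduces the hypothesis to $\kw{never} = \eta_\bot(b)$, which is refuted by the disjointness of the least element from values, so this case is vacuous. For the supremum $\bigsqcup s$ of an increasing sequence $s : \Nat \to A_\bot$, continuity of bind gives $(a \leftarrow_\bot \bigsqcup s; f(a)) = \bigsqcup_n (a \leftarrow_\bot s_n; f(a))$, and the hypothesis states that this supremum equals $\eta_\bot(b)$.

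The main obstacle is exactly this last case, and it turns on a compactness lemma: every value $\eta_\bot(b)$ is a compact element, so whenever $\bigsqcup_n t_n = \eta_\bot(b)$ for an increasing chain $t$, there merely exists an index $n$ with $t_n = \eta_\bot(b)$. I would obtain this from the characterization of the information order $\sqsubseteq$ on $A_\bot$ established in \citet{altenkirch-danielsson-kraus:2017}: the order is flat, so values are maximal (if $\eta_\bot(b) \sqsubseteq x$ then $x = \eta_\bot(b)$) and every chain is, up to the order, eventually constant. Applying the lemma to the chain $t_n \coloneqq (a \leftarrow_\bot s_n; f(a))$ produces, merely, an index $n$ with $(a \leftarrow_\bot s_n; f(a)) = \eta_\bot(b)$; since $P(\bigsqcup s)$ is a proposition I may strip this truncation and invoke the induction hypothesis $P(s_n)$, yielding $a : A$ with $s_n = \eta_\bot(a)$ and $f(a) = \eta_\bot(b)$. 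Finally, from $\eta_\bot(a) = s_n \sqsubseteq \bigsqcup s$ and the maximality of values I conclude $\bigsqcup s = \eta_\bot(a)$ by antisymmetry, so $a$ is again the required witness. I expect the genuinely new work to be confined to assembling these observations, since the order characterization and the attendant flatness are already part of the standard metatheory of the partiality monad; the remaining monad-law manipulations are routine.
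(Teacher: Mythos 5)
Your proof is correct, but it takes a different route from the paper in a trivial sense: the paper does not prove this proposition at all. It is imported wholesale from \citet{altenkirch-danielsson-kraus:2017} (``We use the following proposition of\ldots''), and your argument is a sound reconstruction of how it is established in that setting. The skeleton checks out: since the motive is an implication into a propositional truncation, it is proposition-valued, so the quotient-inductive-inductive induction principle only obligates you on the point constructors; the $\eta_\bot$ case is the left unit law; the $\bot$ case is strictness of bind together with $\eta_\bot(b) \neq \bot$; and the $\bigsqcup$ case reduces, via commutation of bind with suprema (the paper's \cref{prop:bind-sup}, with \cref{prop:bind-mono} ensuring the pushed-forward sequence is a chain), to compactness of values, which you correctly derive from the characterization of $\sqsubseteq$ plus antisymmetry. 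That compactness step is exactly the argument the paper itself runs in \cref{lemma:fix-ret}, so your proof is consonant with the toolkit the paper does use. Two caveats are worth flagging. First, the auxiliary facts you invoke as ambient --- disjointness $\eta_\bot(b) \neq \bot$ and maximality of values ($\eta_\bot(a) \sqsubseteq x \to x = \eta_\bot(a)$) --- are not free: constructors of a quotient inductive type need not be disjoint, and both facts require their own small inductions (or explicit citation to \citeauthor{altenkirch-danielsson-kraus:2017}); maximality in particular does not follow from ``flatness'' by fiat, since flatness is precisely what is being proved. Second, your aside that every chain is ``eventually constant up to the order'' is too strong constructively; what holds is only the mere existence of an index at which the chain attains the value, and only when the supremum is a value --- but this remark is not load-bearing. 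The trade-off between the two routes is the usual one: the paper keeps the partiality monad as an abstract interface and stays modular, whereas your proof makes the verification self-contained at the cost of restating the characterization lemmas of \citet{altenkirch-danielsson-kraus:2017}.
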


We check the axiom $\kw{bind}_{\kw{L}}^{-1}$; the corresponding axiom 
$\kw{step}_{\kw{L}}^{-1}$ may be verified in a similar fashion:
\begin{align*}
  \kw{bind}_{\kw{L}}^{-1} &: \impl{A, B, e, f, c, b} \bindl{e}{f} = \mstep{c}{\retl{b}} \to 
  \lVert \Sigma c_1, c_2 : \mathbb{C}.\; \Sigma a : A. \; e = \mstep{c_1}{\retl{a}} \times\\
  & f(a) = \mstep{c_2}{\retl{b}} \times \Cl(c = c_1 + c_2)\rVert
\end{align*}
Suppose that we have $\bindl{e}{f} = \mstep{c}{\retl{b}}$. 
Computing, this means the following: 
\begin{align*}
  (c_1, a) \leftarrow_\bot e; (c_2, b) \leftarrow_\bot f(a); \eta_\bot(c_1 + c_2, b) = \eta_\bot(\eta_{\Cl} c, b)
\end{align*}
By \cref{prop:bind-inversion}, there merely exists $c_1, c_2 : \Cl\mathbb{C}$, $a : A$, and $b' : B$ 
such that $e = \eta_\bot (c_1, a)$ and $f(a) = \eta_\bot(c_2, b')$ such that 
$\eta_\bot(c_1 + c_2, b') = \eta_\bot(\eta_{\Cl} c, b)$. Because we are proving a proposition, 
we may project out the witness and data of the mere existential. 
First, observe that $b' = b$ and $\eta_{\Cl}c = c_1 + c_2$. 
Therefore, it suffices to show that there are $c_1', c_2' : \mathbb{C}$ such that 
$e = \eta_\bot (\eta_{\Cl} c_1, a)$, $f(a) = \eta_\bot(\eta_{\Cl} c_2, b')$, and 
$\Cl(c = c_1' + c_2')$. Note that if either $c_1$ or $c_2$ is ${\ast}(u)$ for some 
$u : \ExtOpn$, then we may take $c_1' = c_2' = 0$. Otherwise, we have 
$c_1 = \eta_{\Cl} c_1'$ and $c_2 = \eta_{\Cl} c_2'$, and the result holds since 
$\eta_{\Cl}c = c_1 + c_2 = \eta_{\Cl} c_1' + \eta_{\Cl} c_2' = \eta_{\Cl} (c_1' + c_2')$.  

\subsection{Iteration}

In the following we write $L A \coloneqq (\Cl\mathbb{C} \times A)_\bot$ for the 
lift monad. We will define the iteration operator as the fixed-point of the iteration functional:
\begin{align*}
  \kw{ITER} &: \impl{A, B} (A \to L (B + A)) \to (A \to L B) \to (A \to L B)\\
  \kw{ITER}(g, f, a) &= s \leftarrow_L g a; [\eta_L; f] s
\end{align*}

\begin{proposition}[Monotonicity of sequencing \citep{altenkirch-danielsson-kraus:2017}]\label{prop:bind-mono}  
  Given $e, e' : A_\bot$ and $f, f' : A \to B_\bot$, 
  if $e \sqsubseteq e'$ and $f \sqsubseteq f'$ (w.r.t the induced pointwise order), 
  then $a \leftarrow_\bot e; f(a) \sqsubseteq a \leftarrow_\bot e'; f'(a)$. 
\end{proposition}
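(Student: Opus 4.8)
The plan is to reduce the two-part hypothesis to two separate one-argument monotonicity facts and then glue them with transitivity of $\sqsubseteq$. Writing $f \sqsubseteq f'$ for the pointwise order $\forall a.\; f(a) \sqsubseteq f'(a)$, I would first prove monotonicity in the first argument,
\[
  e \sqsubseteq e' \;\Rightarrow\; (a \leftarrow_\bot e; f(a)) \sqsubseteq (a \leftarrow_\bot e'; f(a)),
\]
and separately monotonicity in the second argument,
\[
  f \sqsubseteq f' \;\Rightarrow\; (a \leftarrow_\bot e; f(a)) \sqsubseteq (a \leftarrow_\bot e; f'(a)).
\]
Chaining these through the intermediate term $a \leftarrow_\bot e'; f(a)$ via $\sqsubseteq$-transitivity then yields the claim.

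The first fact is essentially definitional: in the partiality monad of \citeauthor{altenkirch-danielsson-kraus:2017}, sequencing by a fixed $f$ is the continuous extension of $f$ along $\eta_\bot$ furnished by the recursion principle, so the map sending $e$ to $a \leftarrow_\bot e; f(a)$ is monotone by construction. If one prefers not to appeal to continuity as a black box, the same statement follows by induction on the derivation of $e \sqsubseteq e'$, i.e.\ the eliminator for the relation $\sqsubseteq$; since $\sqsubseteq$ is valued in mere propositions, only the order-generating rules (reflexivity, $\bot$-leastness, and the upper-bound and least-upper-bound rules) must be discharged, and none of the set-truncation or antisymmetry equations intervene.

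For the second fact I would argue directly by the induction principle of $A_\bot$, taking as motive the proposition $(a \leftarrow_\bot e; f(a)) \sqsubseteq (a \leftarrow_\bot e; f'(a))$ quantified over $e$. Only the three point constructors need treatment. When $e = \eta_\bot(a)$ both sides compute to $f(a)$ and $f'(a)$ and the goal is precisely the hypothesis at $a$. When $e = \bot$ both sides are $\bot$ and reflexivity suffices. When $e$ is a supremum $\bigsqcup s$ of an increasing chain, bind commutes with the supremum, so the goal becomes $\bigsqcup_n (a \leftarrow_\bot s_n; f(a)) \sqsubseteq \bigsqcup_n (a \leftarrow_\bot s_n; f'(a))$; the induction hypothesis supplies the inequality termwise, whence the right-hand supremum is an upper bound of the left-hand chain and the least-upper-bound rule closes the goal. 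Because the motive is a proposition, the eliminator imposes no coherence obligations on the path constructors of the quotient inductive-inductive type.

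The only genuine obstacle is the supremum case, and it is bookkeeping rather than conceptual: one must invoke the equation expressing that bind preserves suprema and check that the chain $(a \leftarrow_\bot s_n; f(a))_n$ is genuinely increasing so that its supremum is well-formed, both of which are part of the \citeauthor{altenkirch-danielsson-kraus:2017} development. Since the whole statement is a standard monotonicity lemma for that monad, simply citing \citep{altenkirch-danielsson-kraus:2017} is an equally acceptable discharge.
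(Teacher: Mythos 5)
Your proof is correct, but note that the paper itself offers no proof of this proposition at all: the attribution to \citet{altenkirch-danielsson-kraus:2017} in the proposition's name is the entire discharge, the result being imported as a known fact about the quotient inductive-inductive partiality monad --- so your closing remark that a citation is an acceptable way to dispose of the statement is precisely the paper's own treatment. Your self-contained reconstruction is nonetheless sound and matches the standard argument in that development: the decomposition into monotonicity in each argument, glued by transitivity of $\sqsubseteq$, works; monotonicity in the first argument is indeed ``by construction,'' since $\kw{bind}_\bot(-, f)$ arises from the universal property of $A_\bot$ as a free pointed $\omega$-cpo and is therefore a monotone (indeed continuous and strict) map; and the propositional induction on $e$ for the second fact is the right tool, with the path constructors of the QIIT imposing no obligations and the supremum case closed by the bind/supremum commutation (which the paper records separately as \cref{prop:bind-sup}) together with the upper-bound and least-upper-bound rules. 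Two minor bookkeeping points. First, in your alternative route by induction on the derivation of $e \sqsubseteq e'$, the order of \citet{altenkirch-danielsson-kraus:2017} also has a \emph{transitivity} constructor, which your enumeration of cases omits; it is discharged trivially by composing the two inequalities supplied by the inductive hypotheses, but the case does exist. Second, as you yourself flag, the increasing-ness of the chain $(a \leftarrow_\bot s_n; f(a))_n$ in the supremum case depends on monotonicity in the first argument, so the two facts must be established in the order you give them rather than independently.
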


\begin{lemma}
  Given $g : A \to L(B + A)$, we have that $\kw{ITER}(g)$ is monotone. 
\end{lemma}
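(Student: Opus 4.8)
The plan is to unfold the definition of $\kw{ITER}$ and reduce everything to the monotonicity of sequencing for the underlying partiality monad recorded in \cref{prop:bind-mono}. Monotonicity of $\kw{ITER}(g)$ means that for $f, f' : A \to L B$ with $f \sqsubseteq f'$ in the pointwise order we have $\kw{ITER}(g, f) \sqsubseteq \kw{ITER}(g, f')$, again pointwise. Since the order on function types is pointwise, it suffices to fix $a : A$ and prove $\kw{ITER}(g, f, a) \sqsubseteq \kw{ITER}(g, f', a)$. Unfolding, this is exactly $\kw{bind}_{\kw{L}}(g\,a, [\eta_L; f]) \sqsubseteq \kw{bind}_{\kw{L}}(g\,a, [\eta_L; f'])$, so the whole argument collapses to monotonicity of $\kw{bind}_{\kw{L}}$ in its two arguments.

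The first step is therefore to derive that monotonicity from \cref{prop:bind-mono}. Recall that $L C = (\Cl\mathbb{C} \times C)_\bot$, that the $L$-order is just the induced order on $(\Cl\mathbb{C} \times C)_\bot$, and that $\kw{bind}_{\kw{L}}(e, h) = (c_1, x) \leftarrow_\bot e; (c_2, y) \leftarrow_\bot h\,x; \eta_\bot(c_1 + c_2, y)$. Given $e \sqsubseteq e'$ and $h \sqsubseteq h'$, I would apply \cref{prop:bind-mono} to the outer $\leftarrow_\bot$: its first argument is ordered by $e \sqsubseteq e'$, and its continuation $k(c_1, x) = (c_2, y) \leftarrow_\bot h\,x; \eta_\bot(c_1 + c_2, y)$ lies pointwise below the analogous $k'$ built from $h'$ by a second application of \cref{prop:bind-mono}, where the inner $\leftarrow_\bot$ has ordered first argument $h\,x \sqsubseteq h'\,x$ and an unchanged continuation $\eta_\bot(c_1 + c_2, -)$, so that part is reflexivity of $\sqsubseteq$. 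This establishes monotonicity of $\kw{bind}_{\kw{L}}$ in both slots.

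It then remains to supply the two inputs. The first argument $g\,a$ is identical on both sides, so $g\,a \sqsubseteq g\,a$ by reflexivity. For the second, I would prove $[\eta_L; f] \sqsubseteq [\eta_L; f']$ pointwise by case analysis on $s : B + A$: when $s = \inl{b}$ both copairings return $\eta_L(b)$ and are thus equal (hence $\sqsubseteq$), and when $s = \inr{a'}$ they return $f(a')$ and $f'(a')$, related by the hypothesis $f \sqsubseteq f'$. Feeding these into monotonicity of $\kw{bind}_{\kw{L}}$ yields $\kw{ITER}(g, f, a) \sqsubseteq \kw{ITER}(g, f', a)$ for every $a$, and so $\kw{ITER}(g)$ is monotone.

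I expect the only real friction to be this middle bridge: \cref{prop:bind-mono} is stated for the bare partiality bind $\leftarrow_\bot$, whereas $\kw{ITER}$ is phrased through the lifted bind $\leftarrow_L$, so one must unfold $\kw{bind}_{\kw{L}}$ into a nested $\leftarrow_\bot$ and chase the two applications of monotonicity while keeping track of the cost-accumulating continuation $\eta_\bot(c_1 + c_2, -)$. Everything surrounding it is reflexivity of $\sqsubseteq$ and the two-case split on the coproduct.
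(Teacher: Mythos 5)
Your proof is correct and follows essentially the same route as the paper's: fix $f \sqsubseteq f'$, reduce pointwise monotonicity of $\kw{ITER}(g)$ to monotonicity of sequencing via \cref{prop:bind-mono}, and discharge the continuation comparison $[\eta_L; f]\,s \sqsubseteq [\eta_L; f']\,s$ by case analysis on the coproduct. The only difference is that you explicitly unfold $\kw{bind}_{\kw{L}}$ into nested $\leftarrow_\bot$ binds and apply \cref{prop:bind-mono} twice to track the cost-accumulating continuation, a bridge the paper's one-line proof elides by invoking the proposition directly at the lifted level; your version is the more scrupulous reading of the same argument.
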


\begin{proof}
  Let $f \sqsubseteq f'$ be functions $A \to L B$. We have to show that 
  $s \leftarrow_L g a; [\eta_L; f] s \sqsubseteq s \leftarrow_L g a; [\eta_L; f'] s$ 
  for all $a : A$. By \cref{prop:bind-mono}, it suffices to show 
  that $[\eta_L; f] s \sqsubseteq [\eta_L; f'] s$ for all $s : B + A$. 
  This follows by case analysis on $s$ and the assumption that $f \sqsubseteq f'$.  
\end{proof}

\begin{proposition}[\citet{altenkirch-danielsson-kraus:2017}]\label{prop:bind-sup}
  Given $e : \Nat \to A_\bot$ and $f : A \to B_\bot$, we have that 
  $\bigsqcup (\lambda n.\; \kw{bind}_\bot(e(n), f)) = \kw{bind}_\bot(\bigsqcup e, f)$. 
\end{proposition}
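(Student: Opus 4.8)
The statement is precisely the assertion that $\kw{bind}_\bot(\_, f)$ is \emph{continuous}, i.e. that it preserves least upper bounds of increasing $\omega$-chains, and the plan is to derive it from the way $\kw{bind}_\bot$ is built out of the quotient inductive-inductive presentation of $A_\bot$. First I would unfold the statement: the argument $e$ is really an \emph{increasing sequence} (a map $\Nat \to A_\bot$ together with a proof that $e(n) \sqsubseteq e(n+1)$), and $\bigsqcup e$ denotes its least upper bound, furnished by one of the point constructors of $A_\bot$. Before the right-hand side even typechecks one must know that $\lambda n.\; \kw{bind}_\bot(e(n), f)$ is again increasing; this is immediate from monotonicity of sequencing (\cref{prop:bind-mono}), since $e(n) \sqsubseteq e(n+1)$ yields $\kw{bind}_\bot(e(n), f) \sqsubseteq \kw{bind}_\bot(e(n+1), f)$.

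The main route is to appeal to the recursion principle by which $\kw{bind}_\bot(\_, f) : A_\bot \to B_\bot$ is defined. Because $A_\bot$ is generated by $\eta_\bot$, the bottom element, and $\bigsqcup$ together with the order and antisymmetry constructors, any map out of $A_\bot$ into another $\omega$-cpo is forced to respect this structure; in particular its clause on the $\bigsqcup$ constructor must send a supremum to the supremum of the pointwise images. Since $\kw{bind}_\bot(\_, f)$ is defined exactly as the continuous, monotone extension of $a \mapsto f(a)$, the computation rule of the eliminator on the $\bigsqcup$ constructor delivers the desired equation directly, with the monotonicity witness supplied as above.

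If one prefers an argument that does not unfold the definition of $\kw{bind}_\bot$, I would instead establish the two inequalities and conclude by antisymmetry of the order on $B_\bot$. The inequality $\bigsqcup(\lambda n.\; \kw{bind}_\bot(e(n), f)) \sqsubseteq \kw{bind}_\bot(\bigsqcup e, f)$ is cheap: each $e(n) \sqsubseteq \bigsqcup e$, so by \cref{prop:bind-mono} the value $\kw{bind}_\bot(\bigsqcup e, f)$ is an upper bound of the image chain, and a least upper bound lies below every upper bound. The reverse inequality is the genuine content, and it is exactly Scott-continuity of $\kw{bind}_\bot(\_, f)$ — one cannot obtain it from monotonicity alone. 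This reverse direction is therefore the main obstacle; in the QIIT presentation it is packaged into the requirement that the eliminator preserve $\bigsqcup$, and it is precisely the property that \citet{altenkirch-danielsson-kraus:2017} verify in showing that $(\_)_\bot$ is a monad on $\omega$-cpos. Accordingly I would discharge it by citing their continuity result rather than reconstructing the induction, treating this proposition as the specialization of that fact to the concrete chain $e$.
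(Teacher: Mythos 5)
Your proposal is correct and ultimately lands on the same route as the paper: the paper gives no proof of \cref{prop:bind-sup} at all, treating the bracketed attribution to \citet{altenkirch-danielsson-kraus:2017} as its justification, exactly as your final paragraph recommends. Your supporting sketch is also accurate — the equation is the computation rule of the quotient inductive-inductive eliminator on the $\bigsqcup$ constructor by which $\kw{bind}_\bot(-, f)$ is defined, with \cref{prop:bind-mono} supplying the increasing-chain witness needed for the left-hand side to typecheck — so nothing essential is missing.
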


\begin{lemma}
  Given $e : A_\bot$ and $f : \Nat \to A \to B_\bot$, we have that 
  $\bigsqcup (\lambda n.\; \kw{bind}_\bot(e, f(n))) = \kw{bind}_\bot(e, \lambda a.\; \bigsqcup (\lambda n.\; f(n, a)))$. 
\end{lemma}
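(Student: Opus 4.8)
The plan is to prove the equation by \emph{antisymmetry} of the information order on $B_\bot$. Writing the right-hand side as $\kw{bind}_\bot(e, g)$ with $g \coloneqq \lambda a.\; \bigsqcup(\lambda n.\; f(n, a))$, the lemma asserts that $\kw{bind}_\bot(e, -)$ commutes with the (pointwise) supremum of the sequence $(f(n))_n$, complementing \cref{prop:bind-sup}, which records the analogous continuity in the \emph{first} argument. Throughout, the two suprema appearing in the statement exist because \cref{prop:bind-mono} makes both $(f(n))_n$ and $(\kw{bind}_\bot(e, f(n)))_n$ increasing chains.

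For the direction $\bigsqcup(\lambda n.\; \kw{bind}_\bot(e, f(n))) \sqsubseteq \kw{bind}_\bot(e, g)$, it suffices, since the left side is a least upper bound, to bound each term $\kw{bind}_\bot(e, f(n))$ by the right side. This is immediate from \cref{prop:bind-mono} applied to the pointwise inequality $f(n) \sqsubseteq g$, which holds because for every $a$ the value $f(n, a)$ lies below the supremum $\bigsqcup(\lambda m.\; f(m, a))$. This direction requires no induction.

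The substantive direction is $\kw{bind}_\bot(e, g) \sqsubseteq \bigsqcup(\lambda n.\; \kw{bind}_\bot(e, f(n)))$, which I would establish by induction on $e : A_\bot$ using the eliminator of the partiality monad; since the goal is an inequality, hence a mere proposition, only the point constructors need treatment. The cases $e = \eta_\bot(a)$ and $e$ the least element are immediate, both sides reducing respectively to $\bigsqcup(\lambda n.\; f(n, a))$ and to the least element. In the case $e = \bigsqcup(\lambda k.\; e_k)$, I first push each bind through the outer supremum using \cref{prop:bind-sup}, so that the left side becomes $\bigsqcup(\lambda k.\; \kw{bind}_\bot(e_k, g))$ and the right side becomes $\bigsqcup(\lambda n.\; \bigsqcup(\lambda k.\; \kw{bind}_\bot(e_k, f(n))))$; applying the inductive hypothesis termwise then reduces the goal to $\bigsqcup_k \bigsqcup_n \kw{bind}_\bot(e_k, f(n)) \sqsubseteq \bigsqcup_n \bigsqcup_k \kw{bind}_\bot(e_k, f(n))$.

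The main obstacle is therefore the \emph{exchange of iterated suprema} in this last case. For the doubly-indexed family $x_{k,n} \coloneqq \kw{bind}_\bot(e_k, f(n))$, which is monotone in each index separately (as $(e_k)_k$ and $(f(n))_n$ are increasing and $\kw{bind}_\bot$ is monotone in both arguments by \cref{prop:bind-mono}), one has $\bigsqcup_k \bigsqcup_n x_{k,n} = \bigsqcup_n \bigsqcup_k x_{k,n}$, both agreeing with the diagonal $\bigsqcup_k x_{k,k}$. This is the standard fact that a monotone double sequence in an $\omega$-cpo may be summed in either order, and I expect it to be available in, or a routine consequence of, the $\omega$-cpo structure developed by \citet{altenkirch-danielsson-kraus:2017}. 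With this exchange in hand both inequalities close, and antisymmetry yields the claimed equality.
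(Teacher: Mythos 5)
Your proof is correct, and its skeleton matches the paper's: both argue by antisymmetry, dispatch the easy direction $\bigsqcup_n \kw{bind}_\bot(e, f(n)) \sqsubseteq \kw{bind}_\bot(e, g)$ termwise via \cref{prop:bind-mono}, and prove the hard direction by induction on $e$, using \cref{prop:bind-sup} to push the bind through the outer supremum in the case $e = \sqcup s$. Where you genuinely diverge is in how that case is closed. The paper, after applying the inductive hypothesis to get $\kw{bind}_\bot(s(m), g) \sqsubseteq \bigsqcup_n \kw{bind}_\bot(s(m), f(n))$, finishes by bounding $\bigsqcup_n \kw{bind}_\bot(s(m), f(n)) \sqsubseteq \bigsqcup_n \kw{bind}_\bot(\sqcup s, f(n))$ directly from $s(m) \sqsubseteq \sqcup s$ and monotonicity of $\kw{bind}_\bot$ in its \emph{first} argument --- so it never needs to touch the right-hand side at all. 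You instead apply \cref{prop:bind-sup} on the right-hand side as well, producing a double supremum, and close with an exchange-of-iterated-suprema lemma for doubly monotone families. That exchange is a standard $\omega$-cpo fact (and, as you note, only the one inequality $\bigsqcup_k \bigsqcup_n \sqsubseteq \bigsqcup_n \bigsqcup_k$ is needed, which follows from the least-upper-bound property alone), but it is an extra ingredient not among the propositions the paper imports from \citet{altenkirch-danielsson-kraus:2017}, so your route carries a proof obligation the paper's avoids; the paper's move is the more economical one. One small imprecision: you say \cref{prop:bind-mono} makes $(f(n))_n$ an increasing chain --- it does not; that $f$ is a pointwise chain is an implicit hypothesis required for $\bigsqcup(\lambda n.\; f(n,a))$ to be well-formed in the partiality monad (its supremum constructor applies only to chains), and \cref{prop:bind-mono} then yields that $(\kw{bind}_\bot(e, f(n)))_n$ and your double family are chains. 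This does not affect the validity of the argument.
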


\begin{proof}
  It suffices to show inclusion for both directions. 
  For the forward direction, fix $n : \Nat$. 
  It suffices to show $\kw{bind}_\bot(e, f(n)) \sqsubseteq \kw{bind}_\bot(e, \lambda a.\; \bigsqcup (\lambda n.\; f(n, a)))$. 
  By \cref{prop:bind-mono}, it suffices to show $f(n, a) \sqsubseteq \bigsqcup (\lambda n.\; f(n, a))$, which 
  clearly holds. 
  For the other direction, we proceed by induction on $e : A_\bot$. 
  We just show the case for $e = \sqcup s$. 
  By \cref{prop:bind-sup}, we have that 
  $\kw{bind}_\bot(\sqcup s, \lambda a.\; \bigsqcup (\lambda n.\; f(n, a))) = 
  \bigsqcup (\lambda m.\; \kw{bind}_\bot(s(m), \lambda a.\; \bigsqcup (\lambda n.\; f(n, a))))$. 
  Fix an arbitrary $m : \Nat$. It suffices to show 
  $\kw{bind}_\bot(s(m), \lambda a.\; \bigsqcup (\lambda n.\; f(n, a))) \sqsubseteq 
  \bigsqcup (\lambda n.\; \kw{bind}_\bot(\sqcup s, f(n)))$. 
  By induction, we have that $\kw{bind}_\bot(s(m), \lambda a.\; \bigsqcup (\lambda n.\; f(n, a)))
  \sqsubseteq \bigsqcup (\lambda n.\; \kw{bind}_\bot(s(m), f(n)))$. 
  Therefore, it suffices to show that 
  $\bigsqcup (\lambda n.\; \kw{bind}_\bot(s(m), f(n))) \sqsubseteq 
  \bigsqcup (\lambda n.\; \kw{bind}_\bot(\sqcup s, f(n)))$, 
  which holds since $s(m) \sqsubseteq \sqcup s$.  
\end{proof}

\begin{corollary}\label{lemma:sup-bind}
  Given $e : L A$ and $f : \Nat \to A \to L B$, we have that 
  $\bigsqcup (\lambda n.\; \kw{bind}_L(e, f(n))) = \kw{bind}_L(e, \lambda a.\; \bigsqcup (\lambda n.\; f(n, a)))$.  
\end{corollary}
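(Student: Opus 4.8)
The plan is to reduce the statement for the lift monad $L$ to the preceding lemma for the bare partiality monad $(-)_\bot$, exploiting the fact that $\kw{bind}_L$ is by definition a nesting of two $\kw{bind}_\bot$'s. Recall that $\kw{bind}_L(e, h) = (c_1, a) \leftarrow_\bot e;\, (c_2, b) \leftarrow_\bot h(a);\, \eta_\bot(c_1 + c_2, b)$, that is, $\kw{bind}_L(e, h) = \kw{bind}_\bot(e, g_h)$ where $g_h \coloneqq \lambda (c_1, a).\; \kw{bind}_\bot(h(a), \lambda (c_2, b).\; \eta_\bot(c_1 + c_2, b))$. Writing $g(n) \coloneqq g_{f(n)}$, we therefore have $\kw{bind}_L(e, f(n)) = \kw{bind}_\bot(e, g(n))$, so the goal reduces to commuting $\bigsqcup (\lambda n.\; -)$ first through this outer $\kw{bind}_\bot$ over $e$ and then through the inner $\kw{bind}_\bot$ over $f(n)(a)$.

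First I would apply the preceding lemma with carrier type $\Cl\mathbb{C} \times A$ and family $g : \Nat \to (\Cl\mathbb{C} \times A) \to (\Cl\mathbb{C}\times B)_\bot$ to obtain $\bigsqcup (\lambda n.\; \kw{bind}_\bot(e, g(n))) = \kw{bind}_\bot(e, \lambda p.\; \bigsqcup (\lambda n.\; g(n)(p)))$. Next I would evaluate the inner supremum at a fixed pair $p = (c_1, a)$: here $\bigsqcup (\lambda n.\; g(n)(c_1, a)) = \bigsqcup (\lambda n.\; \kw{bind}_\bot(f(n)(a), \lambda (c_2, b).\; \eta_\bot(c_1 + c_2, b)))$ is a supremum over an index occurring in the \emph{first} argument of $\kw{bind}_\bot$, so \cref{prop:bind-sup} applies directly and yields $\kw{bind}_\bot(\bigsqcup (\lambda n.\; f(n)(a)), \lambda (c_2, b).\; \eta_\bot(c_1 + c_2, b))$. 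Substituting this back and re-folding the two nested $\kw{bind}_\bot$'s into a single $\kw{bind}_L$ gives precisely $\kw{bind}_L(e, \lambda a.\; \bigsqcup (\lambda n.\; f(n, a)))$, which is the right-hand side.

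The only points requiring care are the well-definedness of the suprema and the legitimacy of the two commutations. I would first note that $f$ is assumed to be a pointwise ascending chain in $n$, and that $g$ is then itself a chain because $\kw{bind}_\bot$ is monotone by \cref{prop:bind-mono}; this is what licenses forming $\bigsqcup (\lambda n.\; g(n)(p))$ and invoking the preceding lemma. I expect the main (though still routine) obstacle to be bookkeeping: making sure the two appeals to continuity are directed at the correct argument of $\kw{bind}_\bot$ --- the preceding lemma handles a supremum sitting in the continuation, whereas \cref{prop:bind-sup} handles a supremum sitting in the value being sequenced --- and checking that the pair-splitting $\lambda (c_1, a)$ and the cost combination $c_1 + c_2$ pass through both steps unchanged, which they do since neither interacts with the index $n$.
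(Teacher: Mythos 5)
Your proposal is correct and is exactly the argument the paper leaves implicit by labeling this a corollary of the preceding lemma: you unfold $\kw{bind}_L$ as nested $\kw{bind}_\bot$'s, apply the lemma to the outer bind (supremum in the continuation) and \cref{prop:bind-sup} to the inner bind (supremum in the sequenced value), then refold. Your attention to the chain conditions (via \cref{prop:bind-mono}) and to which continuity fact applies to which argument of $\kw{bind}_\bot$ matches the paper's intended reasoning.
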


\begin{lemma}\label{lemma:iter-cont}
  Given $g : A \to L(B + A)$, we have that $\kw{ITER}(g)$ is $\omega$-continuous. 
\end{lemma}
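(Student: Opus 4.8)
The plan is to show that for every $\omega$-chain $f_0 \sqsubseteq f_1 \sqsubseteq \cdots$ in the function space $A \to L B$ (ordered pointwise), the operator $\kw{ITER}(g)$ commutes with the least upper bound, i.e.\ $\kw{ITER}(g)(\bigsqcup_n f_n) = \bigsqcup_n \kw{ITER}(g)(f_n)$. Since we have already shown $\kw{ITER}(g)$ to be monotone, the inclusion $\bigsqcup_n \kw{ITER}(g)(f_n) \sqsubseteq \kw{ITER}(g)(\bigsqcup_n f_n)$ is automatic, so strictly speaking only the reverse inclusion requires argument; however, I expect to obtain the full equality in one stroke by appealing to \cref{lemma:sup-bind}.

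The key calculation is to fix $a : A$ and rewrite the definition in the form $\kw{ITER}(g)(f)(a) = \kw{bind}_L(g\,a, [\eta_L; f])$, so that
\begin{align*}
  \bigsqcup_n \kw{ITER}(g)(f_n)(a)
  &= \bigsqcup_n \kw{bind}_L(g\,a, [\eta_L; f_n])\\
  &= \kw{bind}_L\bigl(g\,a,\ \lambda s.\; \bigsqcup_n [\eta_L; f_n]\,s\bigr),
\end{align*}
where the second step is exactly \cref{lemma:sup-bind} instantiated with $e \coloneqq g\,a : L(B + A)$ and the family $n \mapsto [\eta_L; f_n] : (B + A) \to L B$. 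It then remains to identify the integrand $\lambda s.\; \bigsqcup_n [\eta_L; f_n]\,s$ with $[\eta_L; \bigsqcup_n f_n]$, after which the right-hand side collapses to $\kw{bind}_L(g\,a, [\eta_L; \bigsqcup_n f_n]) = \kw{ITER}(g)(\bigsqcup_n f_n)(a)$, as desired.

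That last identification is where the only genuine content lies, and I would discharge it by case analysis on $s : B + A$. When $s = \inl{b}$ the chain $n \mapsto [\eta_L; f_n]\,\inl{b} = \eta_L(b)$ is constant, so its supremum is $\eta_L(b) = [\eta_L; \bigsqcup_n f_n]\,\inl{b}$; when $s = \inr{a'}$ we have $\bigsqcup_n [\eta_L; f_n]\,\inr{a'} = \bigsqcup_n f_n(a') = (\bigsqcup_n f_n)(a') = [\eta_L; \bigsqcup_n f_n]\,\inr{a'}$, using that suprema in $A \to L B$ are computed pointwise. The main (and essentially only) obstacle is this commutation of the least upper bound past the copairing $[\eta_L; -]$; everything else is bookkeeping, since the substantive interchange of $\bigsqcup$ with $\kw{bind}_L$ has already been established in \cref{lemma:sup-bind}.
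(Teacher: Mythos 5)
Your proposal is correct and follows essentially the same route as the paper's proof: fix $a : A$, use \cref{lemma:sup-bind} to commute the supremum past $\kw{bind}_L(g\,a, -)$, and then identify $\lambda s.\; \bigsqcup_n [\eta_L; f_n]\,s$ with $[\eta_L; \bigsqcup_n f_n]$ by case analysis on $s : B + A$, with the $\inl$ case being a constant chain and the $\inr$ case following from pointwise suprema. The only cosmetic difference is your observation that monotonicity already gives one inclusion; the paper dispenses with this and, like you, obtains the equality directly.
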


\begin{proof}
  Suppose $\alpha$ is an $\omega$-chain in $A \to L B$. We have to show the following:  
  \[\kw{ITER}(g)(\bigsqcup \alpha) = \bigsqcup (\lambda n.\; \kw{ITER}(g)(\alpha(n)))\]
  Let $a : A$. We need to show that 
  $s \leftarrow_L g a; [\eta_L; \bigsqcup \alpha] s = \bigsqcup (\lambda n.\; s \leftarrow_L g a; [\eta_L; \alpha(n)] s)$.  
  Computing using \cref{lemma:sup-bind}:
  \begin{align*}
    \bigsqcup (\lambda n.\; s \leftarrow_L g a; [\eta_L; \alpha(n)]) &= 
    s \leftarrow_L g a; \bigsqcup (\lambda n.\; [\eta_L; \alpha(n)] s)
  \end{align*}
  So it suffices to show that 
  $\bigsqcup (\lambda n.\; [\eta_L; \alpha(n)] s) = [\eta_L; \bigsqcup \alpha] s$ for all $s : B + A$. 
  We proceed by cases on $s$. 
  If $s = \inl{b}$, then we compute: 
  \begin{align*}
    \bigsqcup (\lambda n.\; [\eta_L; \alpha(n)] (\inl{b})) &= 
    \bigsqcup (\lambda n.\; \eta_L(b))\\
    &= \eta_L(b)\\ 
    &= [\eta_L; \bigsqcup \alpha] (\inl{b})
  \end{align*}
  Otherwise, $s = \inr{a'}$. Computing: 
  \begin{align*}
    \bigsqcup (\lambda n.\; [\eta_L; \alpha(n)] \inr{a'}) &=  
    \bigsqcup (\lambda n.\; \alpha(n, a'))\\
    &= (\bigsqcup \alpha) a'\\
    &= [\eta_L; \bigsqcup \alpha] (\inr{a'})
  \end{align*}
\end{proof}

By \cref{lemma:iter-cont}, 
we have that $\kw{ITER}(g)$ is an $\omega$-continuous function, and consequently we may take its least 
fixed-point: 
\begin{align*} 
  \kw{iter} &: \impl{A, B} (A \to L(B + A)) \to A \to L B\\ 
  \kw{iter}(g) &= \kw{fix}(\kw{ITER}(g))
\end{align*} 
The unfolding rule of iteration follows from the associated fixed-point equation $\kw{iter}/\kw{unfold}$: 
\begin{align*}
  \kw{iter}(g, a) &= \kw{fix}(\kw{ITER}(g)) a\\
  &= \kw{ITER}(g)(\kw{fix}(\kw{ITER}(g)))(a)\\ 
  &= s \leftarrow_L g a; [\eta_L; \kw{fix}(\kw{ITER}(g))] s
\end{align*}

Lastly, we verify the finiteness axiom for iteration. 
\begin{lemma}\label{lemma:fix-ret}
  Given an $\omega$-continuous function $F : (A \to B_\bot) \to (A \to B_\bot)$, if 
  $\kw{fix}(F)(a) = \eta_\bot(b)$ for some $b : B$, then there merely exists 
  a $k : \Nat$ such that $F^{(k)}(a) = \eta_\bot(b)$, where 
  $F^{(-)} : \Nat \to A \to B_\bot$ is an $\omega$-chain of functions 
  defined by iterating $F$ on the totally undefined function $F^{(0)} = \lambda -.\; \bot$. 
\end{lemma}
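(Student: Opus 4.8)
The plan is to reduce the lemma to a \emph{compactness} property of the total elements $\eta_\bot(b)$ in the partiality monad, and then to establish that property from the order structure of $B_\bot$ due to \citet{altenkirch-danielsson-kraus:2017}. First I would invoke Kleene's characterization of the least fixed-point of an $\omega$-continuous endofunction: since $F^{(0)} = \lambda -.\, \bot$ is the bottom element of the function $\omega$-cpo $A \to B_\bot$ and $F$ is monotone, the iterates $F^{(n)}$ form an $\omega$-chain and $\kw{fix}(F) = \bigsqcup_n F^{(n)}$. Because suprema in a function space are computed pointwise, evaluating at $a$ shows that $c_n \coloneqq F^{(n)}(a)$ is an $\omega$-chain in $B_\bot$ with $\bigsqcup_n c_n = \kw{fix}(F)(a) = \eta_\bot(b)$. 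It therefore suffices to prove the following statement about $B_\bot$: whenever an $\omega$-chain $c$ satisfies $\bigsqcup_n c_n = \eta_\bot(b)$, there merely exists $n$ with $c_n = \eta_\bot(b)$.

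To prove this compactness statement I would appeal to two facts about $B_\bot$: that $\eta_\bot$ sends values to \emph{maximal} elements of the information order, and that the only elements below a total element are $\bot$ and the element itself, i.e. $x \sqsubseteq \eta_\bot(b)$ entails $\lVert (x = \bot) + (x = \eta_\bot(b))\rVert$. Each $c_n \sqsubseteq \bigsqcup_n c_n = \eta_\bot(b)$, so every $c_n$ is merely $\bot$ or $\eta_\bot(b)$. They cannot all be $\bot$, since otherwise $\bigsqcup_n c_n = \bot$, contradicting the disjointness $\bot \neq \eta_\bot(b)$. As the goal is a proposition, I may reason under the truncations and conclude that some stage $c_n$ equals $\eta_\bot(b)$, which is exactly the desired mere existence.

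The main obstacle is the \emph{flatness} decomposition $x \sqsubseteq \eta_\bot(b) \to \lVert (x = \bot) + (x = \eta_\bot(b))\rVert$, which is not immediate from the quotient inductive--inductive presentation of $B_\bot$; establishing it requires the order-characterization lemmas of \citet{altenkirch-danielsson-kraus:2017} together with the induction principle for the monad, in particular the clause governing suprema. If their development already exposes compactness of total elements directly, I would cite it and skip the intermediate flatness argument; otherwise I would prove flatness by induction on the derivation of $x \sqsubseteq \eta_\bot(b)$, discharging the supremum case by an interchange of suprema analogous to \cref{lemma:sup-bind}.
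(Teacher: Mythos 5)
Your first half---Kleene iteration, pointwise suprema, and the reduction to the chain statement that $\bigsqcup_n c_n = \eta_\bot(b)$ merely yields an $n$ with $c_n = \eta_\bot(b)$---coincides with the paper's proof, and the aside in your final paragraph is in fact the correct continuation: the development of \citet{altenkirch-danielsson-kraus:2017} \emph{does} expose compactness of total elements directly, as part of their characterization of $\sqsubseteq$. The paper simply cites it: from $\eta_\bot(b) \sqsubseteq \bigsqcup_k F^{(k)}(a)$ one merely obtains $k$ with $\eta_\bot(b) \sqsubseteq F^{(k)}(a)$, and since also $F^{(k)}(a) \sqsubseteq \bigsqcup_k F^{(k)}(a) = \eta_\bot(b)$, anti-symmetry finishes the proof.

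The argument you propose in place of that citation, however, has a genuine gap: it is classical reasoning in a constructive setting. The lemma lives in the internal type theory of the topos, where the QIIT partiality monad is constructed without excluded middle, Markov's principle, or countable choice, and two of your steps fail there. (a) The flatness decomposition $x \sqsubseteq \eta_\bot(b) \to \lVert (x = \bot) + (x = \eta_\bot(b)) \rVert$ is not provable: taking $B = 1$ and $x = \bigsqcup s$ for the chain $s$ that jumps to $\eta_\bot(\star)$ exactly when a given Turing machine halts, flatness asserts the propositional disjunction ``the machine halts or it does not halt,'' an omniscience principle for semidecidable propositions that fails, e.g., in realizability models; consequently no induction on the derivation of $x \sqsubseteq \eta_\bot(b)$ can establish it, so your fallback plan is a dead end. (b) Even granting flatness, the concluding step is invalid: from $\forall n.\, \lVert (c_n = \bot) + (c_n = \eta_\bot(b)) \rVert$ together with the impossibility that all $c_n = \bot$, you obtain only $\neg\neg\,\exists n.\, c_n = \eta_\bot(b)$; recovering the mere existence is Markov-style reasoning, and ``reasoning under the truncations'' is not licensed here, since commuting the quantifier $\forall n$ past $\lVert - \rVert$ is precisely an instance of countable choice, which \opcit{} pointedly avoid. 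The moral is that compactness of $\eta_\bot$-elements is not a consequence of order-theoretic flatness; it is a substantive lemma proved in \opcit{} by induction over the QIIT, and citing it---as the paper does---is the necessary move rather than an optional shortcut.
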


\begin{proof}
  By definition, we have $\kw{fix}(F)(a) = \bigsqcup (\lambda k.\; F^{(k)})(a) = \eta_\bot(b)$, 
  and so we have $\eta_\bot(b) \sqsubseteq \bigsqcup (\lambda k.\; F^{(k)}(a))$. 
  By the characterization of $\sqsubseteq$ of \citet{altenkirch-danielsson-kraus:2017}, 
  there merely exists a $k : \Nat$ such that $\eta_\bot(b) \sqsubseteq F^{(k)}(a)$. 
  Conversely, because $\bigsqcup (\lambda k.\; F^{(k)}(a)) \sqsubseteq \eta_\bot(b)$, 
  we have that $F^{(k)}(a) \sqsubseteq \eta_\bot(b)$ as well, and so 
  $F^{(k)}(a) = \eta_\bot(b)$ by anti-symmetry. 
\end{proof}

Now, we apply this to the iteration functional. 
\begin{lemma}  
If $\kw{iter}(g, a) = \mstep{c}{\retl{b}}$, then 
there merely exists a $k : \Nat$ such that 
$\kw{seq}(g, k, a) = \mstep{c}{\retl{\inl{b}}}$. 
\end{lemma}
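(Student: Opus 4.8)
The plan is to reduce the statement to the fixed-point finiteness lemma \cref{lemma:fix-ret} via the definition $\kw{iter}(g) = \kw{fix}(\kw{ITER}(g))$. First I would observe that, under the lift algebra, $\mstep{c}{\retl{b}}$ computes to $\eta_\bot(\eta_{\Cl} c, b)$, so the hypothesis $\kw{iter}(g, a) = \mstep{c}{\retl{b}}$ is precisely $\kw{fix}(\kw{ITER}(g))(a) = \eta_\bot(\eta_{\Cl} c, b)$. Since $\kw{ITER}(g)$ is $\omega$-continuous by \cref{lemma:iter-cont}, \cref{lemma:fix-ret} (applied with the carrier $\Cl \mathbb{C} \times B$ in place of its ``$B$'') yields that there merely exists $k : \Nat$ with $\kw{ITER}(g)^{(k)}(a) = \eta_\bot(\eta_{\Cl} c, b) = \mstep{c}{\retl{b}}$. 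As we are proving a proposition, we may assume such a $k$ and aim to produce it as the witness.

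The key step is a bridging lemma relating the $k$-th iterate of the iteration functional to the finite prefix $\kw{seq}(g, k)$, which I would prove by induction on $k$:
\[
  \kw{ITER}(g)^{(k)}(a) = \bindl{\kw{seq}(g, k)(a)}{[\eta_L; \lambda \_.\; \bot]}.
\]
The collapsing continuation $[\eta_L; \lambda \_.\; \bot] : B + A \to L B$ sends a terminal result $\inl{b}$ to $\retl{b}$ and a non-terminal state $\inr{a'}$ to $\bot$, reflecting that $\kw{seq}$ records the current state when iteration has not yet halted whereas $\kw{ITER}(g)^{(k)}$ is undefined there. The base case is immediate since $\kw{seq}(g, 0)(a) = \retl{\inr{a}}$ collapses to $\bot = \kw{ITER}(g)^{(0)}(a)$; the inductive step follows by unfolding both definitions, using associativity of $\kw{bind}_{\kw{L}}$ and case analysis on the result of $g(a)$, with the $\inl$ branch matching $\eta_L$ and the $\inr$ branch discharged by the induction hypothesis.

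Finally I would combine the two. From $\bindl{\kw{seq}(g, k)(a)}{[\eta_L; \lambda \_.\; \bot]} = \eta_\bot(\eta_{\Cl} c, b)$, unfolding $\kw{bind}_{\kw{L}}$ into $\kw{bind}_\bot$ and applying the inversion principle \cref{prop:bind-inversion}, there merely exists $(c_1, s) : \Cl \mathbb{C} \times (B + A)$ with $\kw{seq}(g, k)(a) = \eta_\bot(c_1, s)$ whose image under the collapsing continuation is $\eta_\bot(\eta_{\Cl} c, b)$. The case $s = \inr{a'}$ is impossible, since the continuation returns $\bot$ there and $\bot \neq \eta_\bot(\eta_{\Cl} c, b)$; hence $s = \inl{b'}$, and injectivity of $\eta_\bot$ forces $b' = b$ and $c_1 = \eta_{\Cl} c$. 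Therefore $\kw{seq}(g, k)(a) = \eta_\bot(\eta_{\Cl} c, \inl{b}) = \mstep{c}{\retl{\inl{b}}}$, which is the desired witness. I expect the bridging lemma to be the main obstacle: matching the two distinct unfolding schemes, and in particular the differing codomains $L B$ versus $L(B + A)$, requires choosing the right collapsing map and carefully tracking the monadic bookkeeping, after which both the fixed-point approximation and the inversion steps are routine.
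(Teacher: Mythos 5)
Your proof is correct, but the bridge it builds between $\kw{ITER}(g)^{(k)}$ and $\kw{seq}(g,k)$ is genuinely different from the paper's. Both arguments start identically: rewrite the hypothesis as $\kw{fix}(\kw{ITER}(g))(a) = \eta_\bot(\eta_{\Cl}c, b)$ and apply \cref{lemma:fix-ret} (justified by \cref{lemma:iter-cont}) to obtain, merely, some $k$ with $\kw{ITER}(g)^{(k)}(a) = \eta_\bot(\eta_{\Cl}c, b)$. The paper then proves by induction on $k$ the \emph{conditional} statement that whenever $\kw{ITER}(g)^{(k)}(a) = \eta_\bot(c,b)$ one has $\kw{seq}(g,k,a) = \eta_\bot(c,\inl{b})$; that induction applies \cref{prop:bind-inversion} inside every successor case to decompose the convergent run of the iterate before mirroring it in $\kw{seq}$. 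You instead prove the \emph{unconditional} equation $\kw{ITER}(g)^{(k)}(a) = \bindl{\kw{seq}(g,k)(a)}{[\eta_L; \lambda \_.\; \bot]}$ by induction on $k$, using only the unit and associativity laws of $\kw{bind}_{\kw{L}}$ together with case analysis on $B + A$, and you invoke inversion exactly once, at the very end. Your inductive argument is purely equational --- no mere-existence bookkeeping inside the induction --- and the identity is more informative, since it also accounts for non-convergent runs (the collapsing map sends $\inr{a'}$ to $\bot$, which is precisely what makes your base case work and the final case $s = \inr{a'}$ impossible). The price is that you must explicitly verify the monad laws for $\kw{bind}_{\kw{L}}$ (they reduce to associativity and unit of the cost monoid threaded over $\kw{bind}_\bot$), and, like the paper's base case, you need $\eta_\bot(x) \neq \bot$ to discharge the contradiction. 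Both routes are sound; yours trades the paper's repeated inversion steps for a one-time algebraic lemma, which localizes all the partiality-monad reasoning to a single application of \cref{prop:bind-inversion}.
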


\begin{proof}
  Suppose that $\kw{iter}(g, a) = \mstep{c}{\retl{b}}$.
  Computing, we have that
  $\kw{fix}(\kw{ITER}(g)) a = \eta_\bot(\eta_{\Cl}(c), b)$, 
  and by \cref{lemma:fix-ret} there merely exists a $k : \Nat$ such that 
  $\kw{ITER}(g)^{(k)}(a) = \eta_\bot(\eta_{\Cl}(c), b)$. 
  We have to show that $\kw{seq}(g, k, a) = \mstep{c}{\retl{\inl{b}}} = \eta_\bot(\eta_{\Cl}(c), \inl{b})$. 
  We prove the following statement by induction: 
  \[
  \Pi k : \Nat.\; \Pi a : A.\; \Pi b : B.\; \Pi c : \Cl\mathbb{C}.\; 
  \kw{ITER}(g)^{(k)}(a) = \eta_\bot(c, b) \to \kw{seq}(g, k, a) = \eta_\bot(c, \inl{b})
  \]
  From which the result follows by applying the fact that 
  $\kw{ITER}(g)^{(k)}(a) = \eta_\bot(\eta_{\Cl}(c), b)$.  
  If $k = 0$, then we have $\kw{ITER}(g)^{(k)}(a) = \bot = \eta_\bot(\eta_{\Cl}(c), b)$, which is 
  a contradiction. 
  Otherwise, we have $k = k' + 1$ for some $k' : \Nat$. 
  Computing: 
  \begin{align*}
    (\kw{ITER}(g))^{(k' + 1)}(a) &= 
    \kw{ITER}(g)(\kw{ITER}(g)^{(k')})(a)\\
    &= s \leftarrow_L g a; [\eta_L; \kw{ITER}(g)^{(k')}]s\\
    &= (c_1, s) \leftarrow_\bot g a; (c_2, b) \leftarrow_\bot [\eta_T; \kw{ITER}(g)^{(k')}]s; \eta_\bot(c_1 + c_2, b)
  \end{align*} 
  By \cref{prop:bind-inversion}, we have the following:
  \begin{enumerate}
    \item $g a = \eta_\bot(c_1, s)$ for some $c_1 : \Cl\mathbb{C}$ and $s : B + A$. 
    \item $[\eta_L; \kw{ITER}(g)^{(k')}]s = \eta_\bot(c_2, b')$ for some $c_2 : \Cl\mathbb{C}$ and $b' : B$. 
    \item $\eta_\bot(c_1 + c_2, b) = \eta_\bot(\eta_{\Cl}(c), b')$. 
  \end{enumerate}
  From the last line, we know that $b = b'$ and $c_1 + c_2 = \eta_{\Cl}(c)$. 
  Computing the sequence: 
  \begin{align*}
    \kw{seq}(g, k'+1, a) &= s \leftarrow_L g a; [\eta_L \circ \kw{inl}; \kw{seq}(g, k')]s  \\
    &= (c_1, s) \leftarrow_\bot g a; (c_2, r) \leftarrow_\bot [\eta_L \circ \kw{inl};  \kw{seq}(g, k')] s; \eta_\bot(c_1 + c_2, r)\\
    &= (c_2, r) \leftarrow_\bot [\eta_L \circ \kw{inl}; \kw{seq}(g, k')] s; \eta_\bot(c_1 + c_2, r)
  \end{align*}
  We want to show that $(c_2, r) \leftarrow_\bot [\eta_L \circ \kw{inl}; \kw{seq}(g, k')]s; 
  \eta_\bot(c_1 + c_2, r) = \eta_\bot(\eta_{\Cl}(c), \inl{b})$. 
  Proceed by cases on $s : B + A$. 
  If $s = \inl{b}$, then we have 
  \begin{align*}
    &(c_2, r) \leftarrow_\bot [\eta_L \circ \kw{inl}; \kw{seq}(g, k')](\inl{b}); 
  \eta_\bot(c_1 + c_2, r)\\
  &=  (c_2, r) \leftarrow_\bot \eta_L \circ \kw{inl}(b); \eta_\bot(c_1 + c_2, r) \\
  &= (c_2, r) \leftarrow_\bot \eta_\bot(\eta_{\Cl} 0, \kw{inl}(b)); \eta_\bot(c_1 + c_2, r) \\
  &= \eta_\bot(c_1, \kw{inl}(b)) 
  \end{align*}
  So it suffices to show that $c_1 = \eta_{\Cl} c$. From above, 
  we know that $[\eta_L; \kw{ITER}(g)^{(k')}]s = \eta_L(b) = \eta_\bot(\eta_{\Cl}0, b) = \eta_\bot(c_2, b)$, 
  and so $c_2 = \eta_{\Cl 0}$, from which the result follows since $c_1 + c_2 = c_1 = \eta_{\Cl} c$.  
  Otherwise, $s = \inr{a'}$ and we have 
  $[\eta_L; \kw{ITER}(g)^{(k')}](\inr{a'}) = \kw{ITER}(g)^{(k')}(a') = \eta_\bot(c_2, b)$. 
  By induction hypothesis, we have that 
  $\kw{seq}(g, k', a') = \eta_\bot(c_2, \inl{b})$. 
  Now compute:
  \begin{align*}
    &(c_2, r) \leftarrow_\bot [\eta_L \circ \kw{inl}; \kw{seq}(g, k')](\inr{a'}); 
  \eta_\bot(c_1 + c_2, r) \\
  &= (c_2, r) \leftarrow_\bot \kw{seq}(g, k', a'); \eta_\bot(c_1 + c_2, r) \tag{Induction}\\ 
  &= (c_2, r) \leftarrow_\bot \eta_\bot(c_2, \inl{b}); \eta_\bot(c_1 + c_2, r) \\
  &= \eta_\bot(c_1 + c_2, \inl{b})
  \end{align*}
\end{proof}
\end{document}